\newcommand{\ignore}[1]{}
\definecolor{Darkblue}{rgb}{0,0,0.4}
\definecolor{Brown}{cmyk}{0,0.81,1.,0.60}
\definecolor{Purple}{cmyk}{0.45,0.86,0,0}
\newcommand{\mydriver}{hypertex} \ifpdf
 \renewcommand{\mydriver}{pdftex}
\newcommand{\lref}[2][]{\hyperref[#2]{#1~\ref*{#2}}}
\newtheorem{theorem}{Theorem}[section]
\newtheorem{definition}[theorem]{Definition}
\newtheorem{lemma}[theorem]{Lemma}
\newtheorem{claim}[theorem]{Claim}
\newenvironment{Myquote}{\par\begingroup
\addtolength{\leftskip}{1em} \rightskip\leftskip }{\par
\endgroup
}
\newenvironment{proof}{

\noindent{\bf Proof:}} {\hfill$\blacksquare$

}
\newenvironment{proofof}[1]{

\noindent{\bf Proof of {#1}:}} {\hfill$\blacksquare$

}
\newcommand{\junk}[1]{}
\newcommand{\R}[0]{{\ensuremath{\mathbb{R}}}}
\def\floor#1{\lfloor #1 \rfloor}
\def\ceil#1{\lceil #1 \rceil}
\def\flow {\ensuremath{f}\xspace}
\def\tf{\mathcal{T}}
\def\a{\ensuremath{\mathcal{A}}\xspace}
\def\cg{{\sf Cong}}
\def\g{\mathcal{G}}
\def\n{\mathcal{N}}
\def\f{\ensuremath {\mathcal{F}}\xspace}
\def\opt{{\sf Opt}}
\def\cov{\ensuremath{\Pi}\xspace}
\def\rcov{{\sf Robust(\cov)}\xspace}
\def\robkcov{\ensuremath{{\sf Robust}}_k(\cov)\xspace}
\def\mm{{\sf MaxMin}\xspace}
\def\auga{{\sf Augment}\xspace}
\def\fst{\ensuremath {\Phi_T}\xspace}
\def\snd{\ensuremath {{\sf Augment}_T}\xspace}
\newcommand{\sse}{\subseteq}
\newcommand{\ts}{\textstyle}
\newcounter{note}[section]
\newcommand{\initOneLiners}{%
    \setlength{\itemsep}{0pt}
    \setlength{\parsep }{0pt}
    \setlength{\topsep }{0pt}
}
\newenvironment{OneLiners}[1][\ensuremath{\bullet}]
    {\begin{list}
        {#1}
        {\initOneLiners}}
    {\end{list}}
\newcommand{\Tstar}{\ensuremath{T^*}\xspace}
\newcommand{\Phistar}{\ensuremath{\Phi^*}\xspace}
\def\calT{\mathcal{T}}
\begin{document}

\title{Thresholded Covering Algorithms for \\ Robust and Max-Min
  Optimization\thanks{An extended abstract containing the results of
    this paper and of~\cite{GNR-rob-gen} appeared jointly in
    \emph{Proceedings of the 37th International Colloquium on Automata,
      Languages and Programming (ICALP), 2010}.}}

\author{
Anupam Gupta\thanks{Computer Science Department, Carnegie Mellon
    University, Pittsburgh, PA 15213, USA. Supported in part by
    NSF awards CCF-0448095 and CCF-0729022, and an Alfred P.~Sloan
    Fellowship. Email: anupamg@cs.cmu.edu}
\and Viswanath Nagarajan\thanks{IBM T.J. Watson Research Center, Yorktown Heights, NY 10598, USA. Email:
viswanath@us.ibm.com} \and R. Ravi\thanks{Tepper School of Business, Carnegie Mellon University,
  Pittsburgh, PA 15213, USA. Supported in part by NSF grant
  CCF-0728841. Email: ravi@cmu.edu}
}
\date{}


\maketitle
\begin{abstract}
  The general problem of robust optimization is this: one of several
  possible scenarios will appear tomorrow and require to be covered, but things are more expensive
  tomorrow than they are today. What should you anticipatorily buy
  today, so that the worst-case cost (summed over both days) is
  minimized?
We consider the \emph{$k$-robust model}
where the possible outcomes tomorrow
are given by all demand-subsets of size $k$.


  \medskip In this paper, we give the following simple and intuitive
  template for $k$-robust problems: \emph{having built some anticipatory
    solution, if there exists a single demand whose augmentation cost is
    larger than some threshold (which is $\approx \opt/k$), augment the
    anticipatory solution to cover this demand as well, and repeat.}  We show that this template gives improved approximation
  algorithms for $k$-robust Steiner tree and set cover,
and present the first approximation algorithms for $k$-robust Steiner
  forest, minimum-cut and multicut.  Our main technical contribution lies in proving certain net-type properties
  for these covering problems, which are based on dual-rounding and primal-dual ideas; these properties might be of
  some independent interest. All the approximation ratios
  (except for multicut) are nearly optimal.


  \medskip As a by-product of our techniques, we get algorithms for
  max-min problems of the form: ``\emph{given a covering problem instance,
    which $k$ of the elements are costliest to cover?}'' If the covering
  problem does not naturally define a submodular function, very little
  is known about these problems. For the problems mentioned above,
we show that their $k$-max-min
  versions have performance guarantees similar to those for the
  $k$-robust problems.

\end{abstract}


\section{Introduction}

Consider the following \emph{$k$-robust} set cover problem: we are given a set system $(U, \mathcal{F} \sse 2^U)$.
Tomorrow some set of $k$ elements $S \sse U$ will want to be covered; however, today we don't know what this set will
be. One strategy is to wait until tomorrow and buy an $O(\log n)$-approximate set cover for this set. However, sets are
cheaper today: they will cost $\lambda$ times as much tomorrow as they cost today. Hence, it may make sense to buy some
anticipatory partial solution today (i.e. in the first-stage), and then complete it tomorrow (i.e. second-stage) once
we know the actual members of the set $S$. Since we do not know anything about the set $S$ (or maybe we are
risk-averse), we want to plan for the worst-case, and minimize:
\[ (\text{cost of anticipatory solution}) + \lambda \cdot \max_{S: |S|
  \leq k} ( \text{additional cost to cover } S).
\]
Early approximation results for robust problems~\cite{DGRS05,GGR06} had assumed that the collection of possible sets
$S$ was explicitly given (and the performance guarantee depended logarithmically on the size of this collection). Since
this seemed quite restrictive, Feige et al.~\cite{FJMM07} proposed the $k$-robust model where any of the $\binom{n}{k}$
subsets $S$ of size $k$ could arrive. Though this collection of possible sets was potentially exponential sized (for
large values of $k$), the hope was to get results that did not depend polynomially on $k$.

For the $k$-robust {\em set cover} problem, Feige et al.~\cite{FJMM07} gave an $O(\log m \log n)$-approximation
algorithm using the online algorithm for set cover within an LP-rounding-based algorithm (\`a la~\cite{SS04}). They
also showed $k$-robust set cover to be $\Omega(\smash{\frac{\log m}{\log\log m}} + \log n)$ hard---which left a
logarithmic gap between the upper and lower bounds.  However, an online algorithm based approach is unlikely to close
this gap, since the online algorithm for set cover is necessarily a log-factor worse that its offline
counterparts~\cite{AAABN03}.

Closely related to the $k$-robust model are {\em $k$-max-min} problems, where given a covering problem instance the
goal is to determine the $k$-set of demands that are costliest to cover. Indeed,~\cite{FJMM07} used $k$-max-min set
cover as a subroutine in their algorithm for $k$-robust set cover. If the covering problem defines a submodular
objective then the $k$-max-min problem can be solved via constrained submodular optimization. However, most natural
covering problems do not yield submodular functions, and so previous results cannot be applied directly. For
$k$-max-min set cover~\cite{FJMM07} used an online algorithm to obtain an $O(\log m \log n)$-approximation algorithm.

Apart from improving these   results in context of set cover, one may want to develop algorithms for other $k$-robust
and $k$-max-min problems.  E.g., for the \emph{$k$-robust min-cut} problem, some set $S$ of $k$ sources will want to be
separated from the sink vertex tomorrow, and we want to find the best way to cut edges to minimize the total cost
incurred (over the two days) for the worst-case $k$-set $S$. Similarly, in the \emph{$k$-max-min Steiner forest}, we
are given a metric space with a collection of source-sink pairs, and seek $k$ source-sink pairs that incur the maximum
(Steiner forest) connection cost. Although the online-based-framework~\cite{FJMM07} can be extended to give algorithms
for other $k$-max-min problems, it does not yield approximation guarantees better than the (deterministic) online
competitive ratios. Moreover, for $k$-robust problems other than set cover, the LP-rounding framework in~\cite{FJMM07}
does not extend directly; this obstacle was also observed by Khandekar et al.~\cite{KKMS08} who studied $k$-robust
Steiner tree and facility location.


\subsection{Main Results}

In this paper, we present a general template to design algorithms for $k$-robust and $k$-max-min problems.  We go
beyond the online-based approach and obtain tighter approximation ratios that nearly match the offline guarantees; see
the table below. We improve on previous results, by obtaining an $O(\log m + \log n)$ factor for $k$-robust set cover,
and improving the constant in the approximation factor for Steiner tree. We also give the first algorithms for some
other standard covering problems, getting constant-factor approximations for both $k$-robust Steiner forest---which was
left open by Khandekar et al.---and for $k$-robust min-cut, and an $O(\smash{\frac{\log^2 n}{\log\log n}})$
approximation for $k$-robust multicut. Our algorithms do not use a max-min subroutine directly: however, our approach
ends up giving us approximation algorithms for $k$-max-min versions of set cover, Steiner forest, min-cut and multicut;
all but the one for multicut are best possible under standard assumptions.


An important contribution of our work
is the simplicity of the algorithms, and the
ideas in their analysis.  The following is our actual algorithm for
$k$-robust set cover.
\begin{quote}
  Suppose we ``guess'' that the maximum second-stage cost in the optimal
  solution is $T$.  Let $A \sse U$ be all elements for which the
  cheapest set covering them costs more than $\beta \cdot T/k$, where
  $\beta = O(\log m + \log n)$.  We build a set cover on $A$ as our
  first stage. (Say this cover costs~$C_T$.)
\end{quote}
To remove the guessing, try all values of $T$ and choose the solution that incurs the least total cost $C_T + \lambda
\beta T$. Clearly, by design, no matter which $k$ elements arrive tomorrow, it will not cost us more than $\lambda
\cdot k \cdot \beta T/k = \lambda \beta T$ to cover them, which is within $\beta$ of what the optimal solution pays.
This guess-and-verify framework is formalized in Sections~\ref{sec:framework} and~\ref{sec:mm-framework}.

The key step of our analysis is to argue why $C_T$ is close to optimum. We briefly describe the intuition; details
appear in \lref[Section]{sec:set-cover}.  Suppose $C_T \gg \beta \opt$: then the fractional solution to the LP for set
cover for $A$ would cost~$\gg \frac{\beta}{\ln n} \opt \geq \opt$, and so would its dual. Our key technical
contribution is to show how to {\em ``round'' this dual LP} to find a ``witness'' $A' \sse A$ with only $k$ elements,
and also a corresponding feasible dual of value $\gg \opt$---i.e., the dual value does not decrease much in the
rounding. This step uses the fact that each element in $A$ is expensive to cover individually. Using duality again,
this proves that the optimal LP value, and hence the optimal set cover for these $k$ elements $A'$, would cost much
more than $\opt$---a contradiction!

In fact, our algorithms for the other $k$-robust problems are almost identical to this one; indeed, the only slightly
involved algorithm is that for $k$-robust Steiner forest. Of course, the proofs to bound the cost $C_T$ need different
ideas in each case. These involve establishing certain {\em net-type properties} for the respective covering problems
(which imply the existence of such a witness $A' \sse A$ of size $k$), and represent our main technical contribution.
The proofs for set cover, min-cut and multicut are based on {\em dual-rounding}. In the case of Steiner forest,
directly rounding the dual is difficult, and we give a {\em primal-dual} argument.

For the cut-problems, one has to deal with additional issues because $\opt$ consists of two stages that have to be
charged to separately, and this requires a careful Gomory-Hu-tree-based charging. Even after this, we have to show the
following net-type property: if the cut for a set of sources $A$ is large (costs $\gg \opt$) and each source in $A$ has
a high individual cut ($\gg \opt/k$) then there is a witness $A' \sse A$ of at most $k$ sources for which the cut is
also large ($\gg \opt$).
To this end, we prove new flow-aggregation lemmas for single-sink flows using Steiner-tree-packing results, and for
multiflows using oblivious routing~\cite{R08}; both proofs are possibly
of independent interest.



To get a quick overview of our basic approach, see the analysis for Steiner tree in \lref[Appendix]{sec:steiner-tree}.
While the result is simple and does not require rounding the dual, it is a nice example of our framework in action. In
\lref[Section]{sec:notation} we present the formal framework for $k$-robust and $k$-max-min problems, and abstract out
the properties that we'd like from our algorithms. Then \lref[Section]{sec:set-cover} contains such an algorithm for
$k$-robust set cover---Min-cut, Steiner forest and multicut appear in \lref[Sections]{sec:minimum-cut},
\ref{sec:steiner-forest} and~\ref{sec:robust-multicut}. The table below summarizes the best-known approximation ratios
for various covering problems in the offline, $k$-robust and online models (results denoted $*$ are in the present
paper).

\begin{center}
\begin{small}
\begin{tabular}{|c|c|c|c|} \hline
{\bf Problem} & {\bf Offline }& {\bf $k$-robust } & {\bf Deterministic Online }\\
\hline\hline
Set Cover & $\ln n$ & $O(\log m + \log n)$ \quad ($*$) & $O(\log m\cdot \log n)$~\cite{AAABN03} \\
&$(1-o(1))\ln n$~\cite{F98} & $\Omega\left(\log n + \frac{\log m}{\log \log m}\right)$~\cite{FJMM07} & $\Omega\left(\frac{\log m\cdot \log n}{\log \log m +\log \log n}\right)$~\cite{AAABN03} \\
\hline Steiner Tree& 1.39~\cite{BGRS10} &  4.5 \quad ($*$) & $\Theta(\log n)$~\cite{IM91} \\
\hline Steiner Forest& 2~\cite{AKR95,GW95} &  10 \quad ($*$) & $\Theta(\log n)$~\cite{BC97} \\
\hline Minimum Cut & 1 & 17 \quad ($*$)& $O(\log^3 n\cdot \log\log n)$~\cite{AAABN04,HHR03} \\
\hline Multicut & $O(\log n)$~\cite{GVY96} & $O\left(\frac{\log^2n}{\log\log n}\right)$ \quad ($*$) & $O(\log^3 n\cdot  \log\log n)$~\cite{AAABN04,HHR03}\\
\hline
\end{tabular}
\end{small}
\end{center}


\subsection{Related Work}
\label{sec:related-work}

Approximation algorithms for robust optimization was initiated by Dhamdhere et al.~\cite{DGRS05}: they study the case
when the scenarios were explicitly listed, and gave constant-factor approximations for Steiner tree and facility
location, and logarithmic approximations to mincut/multicut problems.  Golovin et al.~\cite{GGR06} improved the mincut
result to a constant factor approximation, and also gave an $O(1)$-approximation for robust shortest-paths. The
algorithms in~\cite{GGR06} were also ``thresholded algorithms'' and the algorithms in this paper can be seen as natural
extensions of that idea to more complex uncertainty sets and larger class of problems (the uncertainty set
in~\cite{GGR06} only contained singleton demands).

The $k$-robust model was introduced in Feige et al.~\cite{FJMM07}, where they gave an $O(\log m \log n)$-approximation
for set cover; here $m$ and $n$ are the number of sets and elements in the set system.  To get such an
algorithm~\cite{FJMM07} first gave an $O(\log m \log n)$-approximation algorithm for $k$-max-min set-cover problem
using the online algorithm for set cover~\cite{AAABN03}.
They then used the $k$-max-min problem as a separation oracle in an LP-rounding-based algorithm (\`a la~\cite{SS04}) to
get the same approximation guarantee for the $k$-robust problem. They also showed an $\Omega(\frac{\log m}{\log\log
m})$ hardness of approximation for $k$-max-min and $k$-robust set cover. Khandekar et al.~\cite{KKMS08} noted that the
LP-based techniques of~\cite{FJMM07} did not give good results for Steiner tree, and developed new combinatorial
constant-factor approximations for $k$-robust versions of Steiner tree, Steiner forest on trees and facility location.
Using our framework, the algorithm we get for Steiner tree can be viewed as a rephrasing of their algorithm---our proof
is arguably more transparent and results in a better bound. Our approach can also be used to get a slightly better
ratio than~\cite{KKMS08} for the Steiner forest problem on trees.

Constrained submodular maximization problems~\cite{NWF78I,NWF78II,S04,CCPV07,V08} appear very relevant at first sight:
e.g., the $k$-max-min version of min-cut (``find the $k$ sources whose separation from the sink costs the most'') is
precisely submodular maximization under a cardinality constraint, and hence is approximable to within $(1 - 1/e)$.  But
apart from min-cut, the other problems do not give us submodular functions to maximize, and massaging the functions to
make them submodular seems to lose logarithmic factors. E.g., one can use tree embeddings~\cite{FRT03} to reduce
Steiner tree to a problem on trees and make it submodular. In other cases, one can use online algorithms to get
submodular-like properties and obtain approximation algorithms for the $k$-max-min problems (as in~\cite{FJMM07}).
Though the LP-based framework~\cite{FJMM07} for $k$-robust problems does not seem to extend to problems other than set
cover, in the companion paper~\cite{GNR-rob-gen} we give a general algorithm for $k$-robust covering using offline and
online algorithms. However, since our goal in this paper is to obtain approximation factors better than the online
competitive ratios, it is unclear how to use these results.


Considering the \emph{average} instead of the worst-case performance
gives rise to the well-studied model of stochastic
optimization~\cite{RS04, IKMM04}.  Some common generalizations of the
robust and stochastic models have been considered (see, e.g.,
Swamy~\cite{Swamy08} and Agrawal et al.~\cite{ADSY09}).


To the best of our knowledge, none of the $k$-max-min problems other than min-cut and set cover~\cite{FJMM07} have been
studied earlier.
The {\em $k$-\underline{min}-min} versions of covering problems (i.e. ``which $k$ demands are the {\em cheapest} to
cover?'') have been extensively studied for set cover~\cite{S97,GKS04}, Steiner tree~\cite{G05}, Steiner
forest~\cite{GHNR07}, min-cut and multicut~\cite{GNS06,R08}. However these problems seem to be related to the
$k$-max-min versions only in spirit.


\section{Notation and Definitions}
\label{sec:notation}

\paragraph{Deterministic covering problems.}
A covering problem \cov has a ground-set $E$ of elements with costs
$c:E\rightarrow \mathbb{R}_+$, and $n$ covering requirements (often
called demands or clients), where the solutions to the $i$-th
requirement is specified---possibly implicitly---by a family
$\mathcal{R}_i \sse 2^E$ which is upwards closed (since this is a
covering problem).  Requirement $i$ is \emph{satisfied} by solution $S
\sse E$ iff $S\in \mathcal{R}_i$.  The covering problem $\cov = \langle
E,c, \{\mathcal{R}_i\}_{i=1}^n \rangle$ involves computing a solution
$S\sse E$ satisfying all $n$ requirements and having minimum cost
$\sum_{e\in S} c_e$.  E.g., in set cover, ``requirements'' are items to
be covered, and ``elements'' are sets to cover them with. In Steiner
tree, requirements are terminals to connect to the root and elements are
the edges; in multicut, requirements are terminal pairs to be separated,
and elements are edges to be cut.

\paragraph{Robust covering problems.}
This problem, denoted \rcov, is a {\em two-stage optimization} problem,
where elements are possibly bought in the first stage (at the given
cost) or the second stage (at cost $\lambda$ times higher). In the
second stage, some subset $\omega \sse[n]$ of requirements (also called
a \emph{scenario}) materializes, and the elements bought in both stages
must satisfy each requirement in $\omega$. Formally, the input to
problem \rcov consists of (a) the covering problem $\cov = \langle E,c,
\{\mathcal{R}_i\}_{i=1}^n\rangle$ as above, (b) a set $\Omega\sse
2^{[n]}$ of scenarios (possibly implicitly given), and (c)~an inflation
parameter $\lambda\ge 1$. A feasible solution to \rcov is a set of {\em
  first stage elements} $E_0\sse E$ (bought without knowledge of the
scenario), along with an {\em augmentation algorithm} that given any
$\omega\in \Omega$ outputs $E_\omega \sse E$ such that $E_0\cup
E_\omega$ satisfies all requirements in $\omega$.  The objective
function is to minimize:
$c(E_0) + \lambda \cdot \max_{\omega\in\Omega} c(E_\omega)$.
Given such a solution, $c(E_0)$ is called the first-stage cost and
$\max_{\omega\in\Omega} c(E_\omega)$ is the second-stage cost.

\paragraph{$k$-robust problems.} In this paper,
we deal with robust covering problems under {\em cardinality} uncertainty sets: i.e., $\Omega := \binom{[n]}{k} =
\{S\sse [n] \mid |S| = k\}$. We denote this problem by $\robkcov$.

\paragraph{Max-min problems.} Given a covering problem $\cov$ and a set
$\Omega$ of scenarios, the {\em max-min} problem involves finding a
scenario $\omega \in \Omega$ for which the cost of the min-cost solution
to $\omega$ is maximized. Note that by setting $\lambda=1$ in any robust
covering problem, \emph{the optimal value of the robust problem equals
  that of its corresponding max-min problem}. In a {\bf $k$-max-min
  problem} we have $\Omega = \binom{[n]}{k}$.

\subsection{The Abstract Properties we want from our Algorithms }
\label{sec:framework}

Our algorithms for robust and max-min versions of covering problems are
based on the following guarantee.
\begin{definition}\label{defn:algo}
  An algorithm is \emph{$(\alpha_1,\alpha_2,\beta)$-discriminating} iff
  given as input any instance of $\robkcov$ and a threshold $T$, the
  algorithm outputs
  \begin{inparaenum}[(i)]
  \item a set $\fst\sse E$, and
  \item an algorithm $\snd: \binom{[n]}{k} \rightarrow
    2^E$,
  \end{inparaenum}
  such that:
  \begin{OneLiners}
  \item[A.] For every scenario $D \in {[n] \choose k}$,
    \begin{OneLiners}
    \item[(i)] the elements in $\fst~\cup ~\snd(D)$ satisfy all
      requirements in $D$, and
    \item[(ii)] the resulting augmentation cost
      $c\left(\snd(D)\right)\le \beta\cdot T$.
    \end{OneLiners}
  \item[B.] Let $\Phistar$ and $\Tstar$ (respectively) denote the
    first-stage and second-stage cost of an optimal solution to the
    $\robkcov$ instance. If the threshold $T\ge \Tstar$ then the first stage
    cost $c(\fst)\le \alpha_1\cdot \Phistar + \alpha_2\cdot \Tstar$.
  \end{OneLiners}
\end{definition}


The next lemma shows why having a discriminating algorithm is sufficient to solve the
robust problem. 
The issue to address is that having guessed $T$ for the optimal second
stage cost, we have no direct way of verifying the correctness of that
guess---hence we choose the best among all possible values of $T$. For
$T \approx \Tstar$ the guarantees in \lref[Definition]{defn:algo} ensure
that we pay $\approx \Phistar + \Tstar$ in the first stage, and $\approx
\lambda \Tstar$ in the second stage; for guesses $T \ll \Tstar$, the
first-stage cost in guarantee~(2) is likely to be large compared to
$\opt$.

\begin{lemma}\label{lem:apx}
  If there is an $(\alpha_1,\alpha_2,\beta)$-discriminating algorithm
  for a robust covering problem $\robkcov$, then for every $\epsilon > 0$
  there is a $\left((1+\epsilon)\cdot\max\left\{\alpha_1,
    \beta+\frac{\alpha_2}\lambda\right\} \right)$-approximation algorithm for $\robkcov$.
\end{lemma}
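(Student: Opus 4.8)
The plan is to run the discriminating algorithm on all relevant guesses $T$, produce a candidate solution for each, and output the cheapest one; then argue that the guess $T$ closest from above to $\Tstar$ yields a solution of the claimed cost. First I would fix the instance of $\robkcov$ and let $\Phistar, \Tstar$ denote the first- and second-stage costs of an optimal solution. For a guess $T$, feed $(\,\text{instance}, T)$ to the $(\alpha_1,\alpha_2,\beta)$-discriminating algorithm to get $\fst = \fst(T)$ and $\snd = \snd(T)$; by property~A this is a genuine feasible solution to $\robkcov$ whose total cost is at most $c(\fst(T)) + \lambda\beta T$. So the algorithm is: enumerate a suitable discrete set of candidate values of $T$, and return the feasible solution minimizing $c(\fst(T)) + \lambda\beta T$.

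Next I would analyze the quality at a good guess. If we could use $T = \Tstar$ exactly, property~B gives $c(\fst)\le \alpha_1\Phistar + \alpha_2\Tstar$, so the total cost is at most $\alpha_1\Phistar + \alpha_2\Tstar + \lambda\beta\Tstar = \alpha_1\Phistar + \lambda(\beta + \alpha_2/\lambda)\Tstar \le \max\{\alpha_1, \beta+\alpha_2/\lambda\}\cdot(\Phistar + \lambda\Tstar) = \max\{\alpha_1,\beta+\alpha_2/\lambda\}\cdot\opt$, using that $\opt = \Phistar + \lambda\Tstar$. Since we cannot guess $\Tstar$ exactly, I would instead enumerate $T$ over a geometric grid — say powers of $(1+\epsilon)$ between a trivial lower bound and a trivial upper bound on $\Tstar$ (both polynomially bounded in the input, e.g.\ from $0$/single-element costs and the cost of covering all $n$ requirements) — together with $T=0$ if needed. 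Let $T'$ be the smallest grid point with $T'\ge \Tstar$; then $\Tstar \le T' \le (1+\epsilon)\Tstar$. Property~B applies at $T'$ (since $T'\ge\Tstar$), giving $c(\fst(T'))\le \alpha_1\Phistar + \alpha_2 T'$, so the candidate for $T'$ costs at most $\alpha_1\Phistar + (\alpha_2 + \lambda\beta)T' \le \alpha_1\Phistar + (1+\epsilon)(\alpha_2+\lambda\beta)\Tstar \le (1+\epsilon)\max\{\alpha_1, \beta+\alpha_2/\lambda\}\cdot(\Phistar+\lambda\Tstar)$, which is $(1+\epsilon)\max\{\alpha_1,\beta+\alpha_2/\lambda\}\cdot\opt$. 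The returned solution is no costlier than this candidate, so the bound holds.

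The only real subtlety — not an obstacle so much as a point requiring care — is the discretization: I must make sure the grid has polynomially many points (so the algorithm runs in polynomial time) and that it brackets $\Tstar$, i.e.\ that there are easily computable a priori lower and upper bounds on the optimal second-stage cost. The lower bound can be taken as the smallest positive element cost (or $0$, handled as a special case), and the upper bound as the cost of a feasible solution covering all requirements at second-stage prices; the ratio of these is at most exponential in the input size, so $O(\epsilon^{-1}\cdot\text{poly}(\text{input size}))$ grid points of the form $(1+\epsilon)^i$ suffice. Everything else is the routine arithmetic above, combining property~A for feasibility and the augmentation bound with property~B for the first-stage bound at the guess $T'\in[\Tstar,(1+\epsilon)\Tstar]$, and finally the elementary inequality $a x + b y \le \max\{a/c, b/d\}(cx+dy)$ for nonnegative reals to collapse the two terms against $\opt = \Phistar + \lambda\Tstar$.
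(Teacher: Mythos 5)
Your proposal is correct and follows essentially the same route as the paper: enumerate a polynomial-size geometric grid of guesses $T = (1+\epsilon)^i$, run the discriminating algorithm on each, return the candidate minimizing $c(\fst(T)) + \lambda\beta T$, and then analyze the grid point $T'$ just above $\Tstar$ using Property~A for feasibility and the augmentation bound, and Property~B for the first-stage bound. One tiny imprecision: Property~B actually gives $c(\fst(T'))\le \alpha_1\Phistar + \alpha_2\Tstar$ (with $\Tstar$, not $T'$, on the right), but since $T'\ge\Tstar$ your weakened form is still a valid upper bound, so the argument goes through unchanged.
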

\begin{proof} Let \a denote an algorithm for $\robkcov$ such that it is $(\alpha_1,\alpha_2,\beta)$-discriminating. Let
ground-set $E=[m]$, and $c_{max}:=\max_{e\in [m]} c_e$. By scaling, we may assume WLOG that all costs in the $\robkcov$
instance are integral. Let $\epsilon>0$ be any value as given by the lemma (where $\frac1\epsilon$ is polynomially
bounded), and $N:= \lceil \log_{1+\epsilon}~(m\,c_{max}) \rceil+1$; note that $N$ is polynomial in the input size.
Define $\calT:= \left\{
  \left(1+\epsilon\right)^i \mid 0\le i\le N\right\}$.

The approximation algorithm for $\robkcov$ runs the $(\alpha_1, \alpha_2, \beta)$-discriminating algorithm \a for every
choice of $T\in \calT$ (here $|\calT|$ is polynomially bounded), and returns the solution corresponding to:
$$\widetilde T := \arg\min \big\{c(\fst) + \lambda\cdot \beta\, T
\mid T\in \calT \big\}.
$$

Recall that \Tstar denotes the optimal second-stage cost, clearly $\Tstar\le m\cdot c_{max}$. Let $i^*\in\mathbb{Z}_+$
be chosen such that $(1+\epsilon)^{i^*-1} < \Tstar\le (1+\epsilon)^{i^*}$; also let $T':=(1+\epsilon)^{i^*}$ (note that
$T' \in\calT$). The objective value of the solution from \a for threshold $\widetilde T$  can be bounded as follows.
\begin{eqnarray*}
c\big(\Phi_{\widetilde T}\big) + \lambda\cdot \max_{\omega\in \Omega} ~c\big(\auga_{\widetilde T}(\omega)\big) &\le &
c(\Phi_{\widetilde T}) +
\lambda\cdot \beta\,\widetilde T \\
&\le  & c(\Phi_{T'}) + \lambda\cdot \beta\,T' \\
&\le & \left(\alpha_1\cdot \Phistar +\alpha_2\cdot \Tstar \right) + \left(1+\epsilon\right)\,\beta\lambda \cdot \Tstar \\
&\le  & \left(1+\epsilon\right)\cdot \left[ \alpha_1\cdot \Phistar +\left(\beta + \frac{\alpha_2}\lambda\right)\cdot
\lambda\Tstar\right].
\end{eqnarray*}
The first inequality follows from Property~A(ii) in \lref[Definition]{defn:algo}; the second by the choice of
$\widetilde T$; the third by Property~B (applied with threshold $T'\ge \Tstar$) in \lref[Definition]{defn:algo}, and
using $T'\le (1+\epsilon)\cdot \Tstar$. Thus this algorithm for $\robkcov$ outputs a solution that is a
$\left((1+\epsilon)\cdot\max\left\{\alpha_1,
    \beta+\frac{\alpha_2}\lambda\right\} \right)$-approximation.
\end{proof}

In the rest of the paper, we focus on providing discriminating
algorithms for suitable values of $\alpha_1,\alpha_2,\beta$.

\subsection{Additional Property Needed for $k$-max-min Approximations}
\label{sec:mm-framework}

As we noted above, a $k$-max-min problem is a $k$-robust problem where the inflation $\lambda = 1$ (which implies that
in an optimal solution $\Phistar = 0$, and $\Tstar$ is the $k$-max-min value). Hence a discriminating algorithm
immediately gives an approximation to the \emph{value}: for any $D \in \binom{[n]}{k}$, $\fst \cup \snd(D)$ satisfies
all demands in $D$, and for the right guess of $T \approx \Tstar$, the cost is at most $(\alpha_2 + \beta) \Tstar$. It
remains to output a bad $k$-set as well, and hence the following definition is useful.

\begin{definition}
  \label{def:strong-disc}
  An algorithm for a robust problem is \emph{strongly discriminating} if
  it satisfies the properties in Definition~\ref{defn:algo}, and when
  the inflation parameter is $\lambda = 1$ (and hence $\Phistar =
  0$), the algorithm also outputs a set $Q_T \in \binom{[n]}{k}$
  such that if $c(\fst) \geq \alpha_2 T$, the cost of optimally covering
  the set $Q_T$ is $\ge T$.
\end{definition}
Recall that for a covering problem $\Pi$, the cost of optimally covering the set of requirements $Q \in \binom{[n]}{k}$
is $\opt(Q) = \min \{c(E_Q) \mid E_Q \sse E \text{ and } E_Q \in \mathcal{R}_i~\forall i\in Q\}$.

\begin{lemma}
  \label{lem:apxstrong}
  If there is an
  $(\alpha_1,\alpha_2,\beta)$-\emph{strongly}-discriminating algorithm
  for a robust covering problem $\robkcov$, then for every $\epsilon > 0$
  there is an algorithm for $k$-max-min$(\Pi)$ that outputs a set $Q$
  such that for some $T$, the optimal cost of covering this set $Q$ is
  at least $T$, but every $k$-set can be covered with cost at most
  $(1+\epsilon)\cdot (\alpha_2 + \beta)\,T$.
\end{lemma}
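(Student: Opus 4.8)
The plan is to run the guess‑and‑verify scheme of Lemma~\ref{lem:apx} with inflation $\lambda=1$, and then read off a witness scenario from the extra output of a strongly‑discriminating algorithm. Since $k$-max-min$(\Pi)$ is exactly $\robkcov$ with $\lambda=1$, we have $\Phistar=0$ and $\Tstar$ equals the $k$-max-min optimum. Let $\a$ be the $(\alpha_1,\alpha_2,\beta)$-strongly‑discriminating algorithm; after scaling costs to integers, let $\calT=\{(1+\eps)^i : 0\le i\le N\}$ be the same polynomially‑sized geometric grid used in the proof of Lemma~\ref{lem:apx}, whose largest element $T_{\max}$ exceeds $\Tstar$ (immediate from the construction of $\calT$ there). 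Run $\a$ with threshold $T$ for every $T\in\calT$; since $\lambda=1$, each run also produces a scenario $Q_T\in\binom{[n]}{k}$ for which $c(\fst)\ge\alpha_2 T$ implies $\opt(Q_T)\ge T$.

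Now let $T^\dagger$ be the \emph{largest} $T\in\calT$ for which the corresponding run has $c(\fst)\ge\alpha_2 T$, and output $Q:=Q_{T^\dagger}$ together with $T:=T^\dagger$. The first half of the guarantee is immediate: because $T^\dagger$ meets the hypothesis of Definition~\ref{def:strong-disc}, $\opt(Q)\ge T^\dagger=T$. For the second half, note first that $T^\dagger\ne T_{\max}$: applying Property~B of Definition~\ref{defn:algo} at threshold $T_{\max}\ge\Tstar$ with $\Phistar=0$ gives $c(\fst)\le\alpha_2\Tstar<\alpha_2 T_{\max}$, so the top grid point never qualifies. Hence $T^{+}:=(1+\eps)\,T^\dagger\in\calT$, and by maximality of $T^\dagger$ the run at threshold $T^{+}$ has $c(\fst)<\alpha_2 T^{+}$. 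Then for every scenario $D\in\binom{[n]}{k}$, Property~A of that run says $\fst\cup\snd(D)$ satisfies all requirements in $D$ at cost $c(\fst)+c(\snd(D))<\alpha_2 T^{+}+\beta T^{+}=(1+\eps)(\alpha_2+\beta)\,T^\dagger$, so $\opt(D)\le(1+\eps)(\alpha_2+\beta)\,T$, as required; the whole procedure is polynomial since $|\calT|$ is.

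The one point needing care is that $T^\dagger$ is well defined. Fix a scenario $D^*$ attaining $\opt(D^*)=\Tstar$. For any threshold $T$, since $\fst\cup\snd(D^*)$ covers $D^*$ we get $c(\fst)\ge\opt(D^*)-c(\snd(D^*))\ge\Tstar-\beta T$ by Property~A, so $c(\fst)\ge\alpha_2 T$ whenever $T\le\Tstar/(\alpha_2+\beta)$; thus if $\Tstar\ge\alpha_2+\beta$ the smallest grid point $T=1$ already qualifies and $T^\dagger$ exists. In the leftover degenerate regime $\Tstar<\alpha_2+\beta$ we bypass $\a$: if $\Tstar=0$ output any $Q$ with $T=0$; otherwise $\Tstar\ge1$ (costs are integral), and outputting any $k$-set $Q$ containing a requirement not satisfied by the empty set gives $\opt(Q)\ge1=:T$ while every $k$-set is coverable at cost $\Tstar<(1+\eps)(\alpha_2+\beta)\,T$. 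I expect no genuine difficulty beyond this endpoint bookkeeping (which parallels Lemma~\ref{lem:apx}); the whole content is the dichotomy at $T^\dagger$ — just below it, a large first stage exhibits a truly hard $k$-set via $Q_{T^\dagger}$, while just above it a small first stage together with a $\beta T$-cheap augmentation certifies that no $k$-set costs much more.
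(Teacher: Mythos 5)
Your proof is correct and takes essentially the same route as the paper's: run the strongly discriminating algorithm over the geometric grid, locate the crossover threshold at which the first-stage cost drops below $\alpha_2 T$, output the witness $Q_T$ from the high-cost side of the crossover, and bound the max-min value from the low-cost side via Property~A. You differ only cosmetically, in taking the \emph{largest} $T$ with $c(\fst) \ge \alpha_2 T$ rather than the grid point just below the \emph{smallest} $T$ with $c(\fst) \le \alpha_2 T$, and in explicitly handling the endpoint regime $\Tstar < \alpha_2 + \beta$ --- which is not mere bookkeeping: the paper's assertion that ``$c(\Phi(t_{p-1})) \ge \alpha_2\, t_{p-1}$ by choice of $p$'' does not actually follow when $p=1$ and $c(\Phi(t_0)) < \alpha_2 t_0$, and that situation arises precisely in your degenerate case, so your extra care patches a genuine (if benign) omission.
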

\begin{proof} The approximation algorithm for $\mm(\Pi)$ is similar to that in \lref[Lemma]{lem:apx}. Let \a denote an algorithm for
the robust problem that is $(\alpha_1,\alpha_2,\beta)$ {\em strongly discriminating}. Recall that the $k$-max-min
instance corresponds to the $\robkcov$ instance with $\lambda=1$, and hence we will run algorithm \a on this robust
instance. Also from \lref[Definition]{defn:algo}, $\Tstar$ denotes the optimal second-stage cost of $\robkcov$, and its
optimal fist-stage cost $\Phistar=0$ (since $\lambda=1$). Note that the optimal value of the $k$-max-min instance also
equals \Tstar.

Let ground-set $E=[m]$, and $c_{max}:=\max_{e\in [m]} c_e$. By scaling, we may assume WLOG that all costs in the
instance are integral. Let $\epsilon>0$ be any value as given by the lemma (where $\frac1\epsilon$ is polynomially
bounded), and $N:= \lceil \log_{1+\epsilon}~(m\,c_{max}) \rceil+1$; note that $N$ is polynomial in the input size.
Consider the integral powers of $(1+\epsilon)$,
$$
\calT:=\{t_i\}_{i=0}^N, \quad \mbox{ where } t_i = \left(1+\epsilon\right)^{i} \mbox{ for }i=0,1,\cdots,N.
$$

The approximation algorithm for $\mm(\Pi)$ runs the strongly discriminating algorithm \a for every choice of $T\in
\calT$, and let $p\in\{1,\cdots,N\}$ be the smallest index such that $c(\Phi(t_{p}))\le \alpha_2 \;t_p$. Observe that
there must exist such an index since for all $T\ge \Tstar$, we have $c(\Phi_T)\le \alpha_2\, T^*\le \alpha_2\, T$
(property~B in \lref[Definition]{defn:algo}, using $\Phistar=0$), and clearly $\Tstar\le m\cdot c_{max}\le t_N$. The
algorithm then outputs $Q(t_{p-1})$ as the max-min scenario. Below we prove that it achieves the claimed approximation.
We have for all $T\ge 0$,
$$T^* = \max\left\{  \opt(D) : D\in {[n]\choose k} \right\}  \le \max \left\{  c(\fst) + c(\snd(D)) : D\in {[n]\choose k} \right\}
\le c(\fst)+\beta\; T.$$ Above, the inequalities are by conditions~A(i) and A(ii) of \lref[Definition]{defn:algo}.
Setting $T=t_p$ here, and by choice of $p$,
$$T^*\le c(\Phi(t_p)) + \beta\, t_p\le (\alpha_2+\beta)\,t_p.$$
Hence $t_p$ is a $(\alpha_2+\beta)$-approximation to the max-min value $T^*$. Now applying the condition of
\lref[Definition]{def:strong-disc} with $T=t_{p-1}$, since $c(\Phi(t_{p-1}))\ge \alpha_2\, t_{p-1}$ (by choice of index
$p$), we obtain that the minimum cost to cover requirements $Q(t_{p-1})$ is at least:
$$t_{p-1}=\frac{t_p}{1+\epsilon} \ge \frac{T^*}{(1+\epsilon)\cdot
  (\alpha_2+\beta)},$$
which implies the desired approximation guarantee.
\end{proof}


\section{$k$-Robust Set Cover}
\label{sec:set-cover} \vspace{-2mm}

\newcommand{\newc}{\widehat{c}}
\newcommand{\fprime}{\mathcal{F}'}
\newcommand{\F}{\mathcal{F}}

Consider the $k$-robust set cover problem where there is a set system $(U, \F)$ with a universe of $|U| = n$ elements,
and $m$ sets in $\F$ with each set $R \in \F$ costing $c_R$, an inflation parameter $\lambda$, and an integer $k$ such
that each of the sets $\smash{\binom{U}{k}}$ is a possible scenario for the second-stage. Given \lref[Lemma]{lem:apx},
it suffices to show a discriminating algorithm as defined in \lref[Definition]{defn:algo} for this problem. The
algorithm given below is easy: pick all elements which can only be covered by expensive sets, and cover them in the
first stage.


\begin{algorithm}
  \caption{Algorithm for $k$-Robust Set Cover}
  \begin{algorithmic}[1]
    \STATE \textbf{input:} $k$-robust set-cover instance and threshold $T$.
    \STATE \textbf{let} $\beta \gets 36 \ln m$, and $S\leftarrow
    \left\{v\in U \mid \mbox{ min cost set covering $v$ has cost at
        least } \beta\cdot \frac{T}{k}\right\}$.

    \STATE \textbf{output} first stage solution \fst as the
    Greedy-Set-Cover($S$).
   \STATE {\bf define} $\snd(\{i\})$ as the min-cost set covering $i$,
   for $i\in U\setminus S$; and  $\snd(\{i\})=\emptyset$ for $i\in S$.
     \STATE \textbf{output} second stage solution \snd where
     $\snd(D):=\bigcup_{i\in D}\snd(\{i\})$ for all  $D \sse U$.
  \end{algorithmic}
\end{algorithm}


\begin{claim}[Property~A for Set Cover]\label{cl:sc-2nd}
  For all $T\ge 0$ and scenario $D \in \binom{U}{k}$, the sets
  $\fst\bigcup \snd(D)$ cover elements in $D$, and have cost
  $c(\snd(D))\le \beta\, T$.
\end{claim}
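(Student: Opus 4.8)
The plan is to verify the two assertions of the claim separately — the coverage property and the cost bound — and in each case to split the scenario $D$ into the elements lying in $S$ and those lying in $U\setminus S$, invoking the appropriate line of the algorithm on each part.

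For the coverage assertion I would argue as follows. Since $\fst$ is the output of Greedy-Set-Cover($S$), it is a feasible set cover of $S$, so every element of $D\cap S$ is already covered by $\fst$ alone. For an element $i\in D\setminus S$, the algorithm defines $\snd(\{i\})$ to be a set covering $i$, and $\snd(\{i\})\sse\snd(D)=\bigcup_{j\in D}\snd(\{j\})$; hence $i$ is covered by $\snd(D)$. Combining the two cases, $\fst\cup\snd(D)$ covers every element of $D$.

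For the cost bound, the key observation is the definition of $S$: an element $i$ is placed in $S$ exactly when the cheapest set covering it costs at least $\beta\cdot T/k$, so for $i\in U\setminus S$ the cheapest covering set — which is precisely what the algorithm picks as $\snd(\{i\})$ — costs strictly less than $\beta T/k$, while for $i\in S$ we have $\snd(\{i\})=\emptyset$. I would then use subadditivity of the cost function over unions of sets to conclude $c(\snd(D))=c\big(\bigcup_{i\in D}\snd(\{i\})\big)\le\sum_{i\in D}c(\snd(\{i\}))\le |D|\cdot\beta T/k=\beta T$, where the last step uses $|D|=k$. The degenerate case $T=0$ is immediate, since then $S=U$ and $\snd(D)=\emptyset$.

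I do not expect any real obstacle here: the statement is a direct unwinding of the definitions of $S$, $\fst$, and $\snd$. The only point worth stating with a little care is that $c$ of a union of sets is at most the sum of the individual set costs — which holds because $c(S')=\sum_{R\in S'}c_R$ charges each chosen set exactly once — since this is exactly what lets the per-element bound $\beta T/k$ aggregate to the claimed $\beta T$ over the $k$ demands in $D$.
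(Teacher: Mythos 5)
Your proof is correct and follows essentially the same argument as the paper: split $D$ into $D\cap S$ (covered by $\fst$) and $D\setminus S$ (covered by the per-element second-stage sets), then bound $c(\snd(D))$ by summing the per-element cost bound $\beta T/k$ over the $k$ elements of $D$. The only additions you make — explicit subadditivity of $c$ over unions and the remark on the degenerate $T=0$ case — are correct but already implicit in the paper's one-line cost argument.
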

\begin{proof}
  The elements in $D \cap S$ are covered by \fst; and by definition of
  \snd, each element $i \in D\setminus S$ is covered by set
  $\snd(\{i\})$.  Thus we have the first part of the claim. For the
  second part, note that by definition of $S$, the cost of $\snd(\{i\})$
  is at most $\beta\,T/k$ for all $i\in U$.
\end{proof}

Below $H_n:=\sum_{i=1}^n \frac1i \approx \ln n$; recall that the greedy algorithm for set cover is an
$H_n$-approximation where $n$ is the number of elements in the given instance.

\begin{theorem}[Property~B for Set Cover]
  \label{th:main-sc}
  Let $\Phistar$ denote the optimal first stage solution (and its cost),
  and $\Tstar$ the optimal second stage cost. Let $\beta = 36 \ln m$. If
  $T\ge \Tstar$ then $c(\fst)\le H_n \cdot \left(\Phistar + 12\cdot
    \Tstar\right)$.
\end{theorem}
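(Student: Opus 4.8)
The plan is to reduce the bound to a ``net-type'' statement about the restricted set-cover instance on $S$ and to prove that statement by rounding an LP dual.

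Since $\fst$ is the greedy set cover of $S$ and greedy is an $H_{|S|}$-approximation even against the LP relaxation, and $|S|\le n$, we have $c(\fst)\le H_n\cdot \opt_{\mathsf{LP}}(S)$, where $\opt_{\mathsf{LP}}(S)$ is the optimal fractional value of the set-cover LP for the requirements in $S$. So it suffices to prove $\opt_{\mathsf{LP}}(S)\le \Phistar+12\,\Tstar$. Fix an optimal robust solution with first-stage set $E_0$ ($c(E_0)=\Phistar$) and second-stage sets $E_\omega$, and split $S=S_0\sqcup S_1$, where $S_0$ is the part of $S$ covered by $E_0$. Then $E_0$ is a cover of $S_0$, so $\opt_{\mathsf{LP}}(S_0)\le \Phistar$, and by subadditivity of $\opt_{\mathsf{LP}}$ it is enough to show $\opt_{\mathsf{LP}}(S_1)\le 12\,\Tstar$.

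Suppose for contradiction that $\opt_{\mathsf{LP}}(S_1)>12\,\Tstar$; in particular $S_1\ne\emptyset$. A preliminary observation fixes the available slack: picking any $v\in S_1$ and any scenario $D\ni v$, the set $E_D$ must cover $v$ (as $E_0$ does not), so $\Tstar\ge c(E_D)\ge \beta T/k$, and since $T\ge\Tstar$ this forces $k\ge\beta=36\ln m$. Now pass to LP duality: there is $y\ge 0$ supported on $S_1$ with $\sum_{v\in S_1}y_v=\opt_{\mathsf{LP}}(S_1)>12\,\Tstar$ and $\sum_{v\in R}y_v\le c_R$ for every $R\in\F$. The goal is to ``round'' $y$ to a \emph{witness} $A'\in\binom{[n]}{k}$ with $A'\subseteq S_1$, together with a feasible dual $z$ of the set-cover LP on $A'$ having $\sum_{v\in A'}z_v>\Tstar$. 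This suffices: then $\opt(A')\ge\opt_{\mathsf{LP}}(A')\ge\sum_{v\in A'}z_v>\Tstar$, yet $A'$ is a legal scenario lying entirely in $S_1$, so the optimal solution covers it using only $E_{A'}$ (the first stage $E_0$ contributes nothing to $A'$), whence $\opt(A')\le c(E_{A'})\le\Tstar$ --- a contradiction.

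To build the witness, keep the $k$ coordinates of largest $y_v$ as a candidate $A'$: if they already carry weight more than $\Tstar$, their restriction $y|_{A'}$ is a feasible dual on $A'$ and we are done. Otherwise every remaining coordinate of $S_1$ has weight at most $\Tstar/k$ while the remaining weight exceeds $11\,\Tstar$, so there are more than $11k$ such ``light'' coordinates; among these I would select a size-$k$ subset $A'$ that is \emph{spread out}, meaning no $R\in\F$ meets $A'$ in more than about $\beta$ elements, and then reset $z_v:=\beta T/(2k)$ for $v\in A'$. This $z$ is feasible because any $R$ meeting $A'$ has $c_R\ge\beta T/k$ by definition of $S$, and it has value $\Theta(\beta T)=\Theta((\ln m)\,\Tstar)\gg\Tstar$. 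Choosing $A'$ as a uniformly random $k$-subset of the light coordinates works provided no single set contains too large a fraction of those coordinates, via a Chernoff bound and a union bound over the $\le m$ sets; here the inequality $\ln m\le k/36$ from the preliminary observation is exactly what makes the union bound go through, which is why $\beta$ is taken to be $\Theta(\ln m)$.

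The step I expect to be the main obstacle is precisely this dual re-aggregation --- equivalently, establishing the net-type property that a size-$k$ witness preserves the dual value up to a bounded additive loss. The delicate point is the ``spread-out'' requirement: a set $R$ whose cost is close to the threshold $\beta T/k$ but which covers a large fraction of the light coordinates would block the reset of $z$, so ruling such a set out (or working around it) must exploit the per-element expensiveness more carefully than the sketch above --- for instance through the tight-set structure of an optimal dual, or through a primal argument that many light elements sharing one cheap-ish set would make $\opt_{\mathsf{LP}}(S_1)$ small. Tuning the constants so that the overall additive loss stays below $11\,\Tstar$ is the quantitative crux.
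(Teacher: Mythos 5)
Your outer reduction matches the paper's: split $S$ into the part covered by the first stage and the rest, reduce to showing $\opt_{\mathsf{LP}}(S_1)\le 12\,\Tstar$, and derive a contradiction by exhibiting a $k$-subset $A'\subseteq S_1$ and a feasible dual on $A'$ of value $>\Tstar$. Your observation that $S_1\ne\emptyset$ forces $k\ge\beta=36\ln m$ is also correct, though it turns out not to be the lever the paper needs. The gap is in the dual construction, and it is exactly where you say you expect trouble, but the issue is sharper than you acknowledge: the attempt is not merely delicate, it is infeasible as stated. If you set $z_v=\beta T/(2k)$ on a $k$-subset $A'$ and only demand $|R\cap A'|\lesssim\beta$, then for a set $R$ whose cost is near the threshold, $c_R\approx\beta T/k$, feasibility $\sum_{v\in R\cap A'}z_v\le c_R$ forces $|R\cap A'|\le 2c_R k/(\beta T)=O(1)$ --- about $\beta$ times stronger than your spread-out condition. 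Conversely a set with $c_R$ much larger than the threshold can legitimately meet $A'$ much more often. The intersection budget must scale with $c_R$, so no fixed uniform dual value together with a single intersection bound can certify feasibility; this is not a constant-tuning issue.

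The paper's actual mechanism sidesteps this entirely by two moves you do not have. First, it normalizes to coarse integer costs $\widehat{c}_R=\lceil c_R k/(6\Tstar)\rceil$; since every relevant set costs at least $\beta T/k$, every coarse cost satisfies $\widehat{c}_R\ge\beta/6=6\ln m$, so every dual constraint has a large right-hand side in these units. Second, it does not choose a random $k$-set of elements and then invent a dual; it randomly rounds the optimal \emph{dual values} themselves, $Y_e=\lfloor y_e\rfloor+\mathrm{Ber}(y_e-\lfloor y_e\rfloor)$, and shows $Y/3$ is feasible with high probability by a per-set Chernoff bound (deviation target $2\widehat{c}_R$ against mean $\le\widehat{c}_R$, failure probability $\le m^{-2}$ because $\widehat{c}_R\ge 6\ln m$, union bound over $m$ sets). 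The final, and decisive, ingredient is that the rounded dual is \emph{near-integral}: each nonzero $Y_e/3$ is at least $1/3$. That is what lets the top-$k$ coordinates (sorted by $Y$-value) carry coarse dual value at least $k/3$, hence true cost $>(3\Tstar/k)\cdot(k/3)=\Tstar$, giving the witness. Your proposal has no analogue of this near-integrality, which is why truncating to a $k$-set loses control of the dual value; the coarse-cost rescaling and the dual-value (as opposed to element-set) rounding are the ideas you are missing.
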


\begin{proof}
  We claim that there is a {\em fractional} solution $\bar{x}$ for the
  set covering instance $S$ with small cost $O(\Phistar + \Tstar)$,
  whence rounding this to an integer solution implies the theorem. For a
  contradiction, assume not: let every fractional set cover be
  expensive, and hence there must be a dual solution of large value. We
  then \emph{round this dual solution} to get a dual solution to a
  sub-instance with only $k$ elements that costs $> \Phistar + \Tstar$,
  which is impossible (since using the optimal solution we can solve
  every instance on $k$ elements with that cost).

  To this end, let $S'\sse S$ denote the elements that are {\em not}
  covered by the optimal first stage $\Phistar$, and let $\fprime \sse
  \F$ denote the sets that contain at least one element from $S'$. By
  the choice of $S$, all sets in $\fprime$ cost at least $\beta\cdot
  \frac{T}k\ge\beta\cdot \frac{\Tstar}k$. Define the ``coarse'' cost for
  a set $R \in \fprime$ to be $\newc_R = \ceil{\frac{c_R}{6\Tstar/k}}$.
  For each set $R\in\fprime$, since $c_R\ge \frac{\beta\Tstar}k \ge
  \frac{6\Tstar}k$, it follows that $\newc_R \cdot \frac{6\Tstar}k \in
  [c_R, 2\cdot c_R)$, and also that $\newc_R \ge \beta/6$.

Now consider the following primal-dual pair of LPs for the set cover instance
  with elements $S'$ and sets $\fprime$ having the coarse costs $\newc$.
$$
\begin{array}{llllll}
\min \,\, \sum_{R\in \fprime} \newc_R \cdot x_R & & \qquad  \qquad  \qquad  \max \,\, \sum_{e\in S'} y_e & \\
\sum_{R\ni e} x_R \ge 1,& \forall e\in S', & \qquad  \qquad \qquad  \sum_{e\in R} y_e \le \newc_R, & \forall R\in \fprime, \\
x_R \ge 0, & \forall R\in \fprime. & \qquad \qquad  \qquad  y_e\ge 0, & \forall e\in S'.
\end{array}
$$

  Let
  $\{x_R\}_{R \in \fprime}$ be an optimal primal and $\{y_e\}_{e \in S'}$ an
  optimal dual solution. The following claim bounds the (coarse) cost of
  these fractional solutions.

  \begin{claim}
    \label{cl:main}
    If $\beta = 36 \ln m$, then the LP cost is $\sum_{R\in\fprime}
    \newc_R\cdot x_R = \sum_{e\in S'} y_e \le 2\cdot k$.
  \end{claim}

  Before we prove \lref[Claim]{cl:main}, let us assume it and complete
  the proof of \lref[Theorem]{th:main-sc}. Given the primal LP solution
  $\{x_R\}_{R \in \fprime}$ to cover elements in $S'$, define an LP
  solution to cover elements in $S$ as follows: define $z_R = 1$ if
  $R\in\Phistar$, ${z}_R=x_R$ if $R\in \fprime \setminus \Phistar$; and
  ${z}_R=0$ otherwise.  Since the solution $\bar{z}$ contains $\Phistar$
  integrally, it covers elements $S \setminus S'$ (i.e. the portion of
  $S$ covered by $\Phistar$); since $z_R \geq x_R$, $\bar{z}$
  fractionally covers $S'$.  Finally, the cost of this solution is
  $\sum_R c_R z_R \leq \Phistar + \sum_R c_R x_R \le \Phistar+
  \frac{6\Tstar}k\cdot \sum_R \newc_R x_R$. But \lref[Claim]{cl:main}
  bounds this by $\Phistar+12\cdot \Tstar$. Since we have a LP solution
  of value $\Phistar + 12\Tstar$, and the greedy algorithm is an
  $H_n$-approximation relative to the LP value for set cover, this
  completes the proof.
\end{proof}

\lref[Claim]{cl:sc-2nd} and \lref[Theorem]{th:main-sc} show our algorithm for set cover to be an $(H_n, 12H_n, 36 \ln
m)$-discriminating algorithm. Applying \lref[Lemma]{lem:apx} converts this discriminating algorithm to an algorithm for
$k$-robust set cover, and gives the following improvement to the result of~\cite{FJMM07}.
\begin{theorem}
  There is an $O(\log m+\log n)$-approximation for $k$-robust set cover.
\end{theorem}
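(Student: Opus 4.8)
The plan is to assemble the two guarantees already established for the algorithm of this section and feed them into \lref[Lemma]{lem:apx}. First I would record that the algorithm is discriminating in the sense of \lref[Definition]{defn:algo}: by \lref[Claim]{cl:sc-2nd}, for every threshold $T$ and every scenario $D \in \binom{U}{k}$ the output satisfies $\fst \cup \snd(D) \supseteq$ a cover of $D$ and $c(\snd(D)) \le \beta\, T$ with $\beta = 36\ln m$ — this is exactly Property~A. By \lref[Theorem]{th:main-sc}, whenever $T \ge \Tstar$ we have $c(\fst) \le H_n\cdot(\Phistar + 12\,\Tstar) = H_n\cdot\Phistar + 12 H_n\cdot\Tstar$, which is Property~B with $\alpha_1 = H_n$ and $\alpha_2 = 12 H_n$. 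All steps (forming the set $S$, running greedy set cover on it, and defining $\snd$) run in polynomial time. Hence the algorithm is $(H_n,\,12 H_n,\,36\ln m)$-discriminating.

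Next I would invoke \lref[Lemma]{lem:apx} with any constant $\epsilon$, say $\epsilon = 1$. It produces a polynomial-time $\big((1+\epsilon)\cdot\max\{\alpha_1,\ \beta + \alpha_2/\lambda\}\big)$-approximation for $\robkcov$. Since $\lambda \ge 1$ we have $\alpha_2/\lambda = 12 H_n/\lambda \le 12 H_n$, so
\[
\max\{\alpha_1,\ \beta + \alpha_2/\lambda\} \;\le\; \max\{H_n,\ 36\ln m + 12 H_n\} \;=\; 36\ln m + 12 H_n.
\]
Using $H_n \le 1 + \ln n$, the approximation factor is $(1+\epsilon)(36\ln m + 12 H_n) = O(\log m + \log n)$, which is the claim. (Shrinking $\epsilon$ tightens the constant but is irrelevant asymptotically.)

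I expect no genuine obstacle at this last step — it is pure bookkeeping, since the real work was already discharged in \lref[Theorem]{th:main-sc}, and in particular in \lref[Claim]{cl:main}. The crux there is the dual-rounding step: if the coarse-cost set cover LP on $(S',\fprime)$ had value exceeding $2k$, one could extract a sub-instance on only $k$ elements whose optimal covering cost exceeds $\Phistar + \Tstar$, contradicting the fact that the optimal robust solution covers \emph{every} $k$-element scenario using $\Phistar$ together with a second stage of cost $\le \Tstar$. The coarsening $\newc_R = \ceil{c_R/(6\Tstar/k)} \ge \beta/6$ is precisely what makes the rounding lose only a constant: it caps how much dual weight any single set can absorb, so a suitable choice of $k$ elements retains enough dual mass, with the union bound over the $m$ sets absorbed into $\beta = \Theta(\ln m)$. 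Granting \lref[Claim]{cl:main} (and hence \lref[Theorem]{th:main-sc}) as stated, the theorem follows immediately from the two-line computation above.
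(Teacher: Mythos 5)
Your proposal is correct and follows exactly the same route as the paper: combine \lref[Claim]{cl:sc-2nd} (Property~A) and \lref[Theorem]{th:main-sc} (Property~B) to certify that the algorithm is $(H_n,\,12H_n,\,36\ln m)$-discriminating, then plug into \lref[Lemma]{lem:apx}. The paper states this almost telegraphically, while you spell out the $\max\{\alpha_1,\beta+\alpha_2/\lambda\}$ computation — but the argument is identical.
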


It remains to give the proof for \lref[Claim]{cl:main} above; indeed, that is where the technical heart of the result
lies.

\begin{proofof}{\lref[Claim]{cl:main}}
  Recall that we want to bound the optimal fractional set cover cost for
  the instance $(S', \fprime)$ with the coarse (integer) costs; $x_R$ and
  $y_e$ are the optimal primal and dual solutions. For a contradiction,
  assume that the LP cost $\sum_{R \in \fprime} \newc_R x_R = \sum_{e \in S'}
  y_e$ lies in the unit interval $((\gamma-1)k, \gamma k]$ for some integer
  $\gamma \ge 3$.

  Define integer-valued random variables $\{Y_e\}_{e \in S'}$ by
  setting, for each $e\in S'$ independently, $Y_e = \floor{y_e} + I_e$,
  where $I_e$ is a Bernoulli($y_e - \floor{y_e}$) random variable. We
  claim that whp the random variables $Y_e/3$ form a feasible dual---
  i.e., they satisfy all the constraints $\{\sum_{e \in R} (Y_e/3) \leq
  \newc_R\}_{R \in \fprime}$ with high probability.  Indeed, consider a
  dual constraint corresponding to $R\in \fprime$: since we have
  $\sum_{e \in R} \floor{y_e} \leq \newc_R$, we get that $\Pr[\sum_{e\in
    R} Y_e > 3\cdot \newc_R] \leq \Pr[\sum_{e\in R} I_e > 2\cdot
  \newc_R]$. But now we use a Chernoff bound~\cite{MR-book} to bound the
  probability that the sum of independent 0-1 r.v.s, $\sum_{e\in R} I_e$, exceeds twice its
  mean (here $\sum_{e\in R} E[I_e] \le \sum_{e\in R} y_e\le \newc_R$) by $e^{-\newc_R/3} \le e^{-\beta/18} \leq m^{-2}$, since each
  $\newc_R\ge \beta/6$ and $\beta = 36\cdot\ln m$. Finally, a trivial
  union bound implies that $Y_e/3$ satisfies all the $m$ contraints with
  probability at least $1-1/m$.  Moreover, the expected dual objective
  is $\sum_{e\in S'} y_e \ge (\gamma-1) k\ge 1$ (since $\gamma \geq 3$
  and $k \geq 1$), and by another Chernoff Bound, $\Pr[\sum_{e\in S'}
  Y_e > \frac{\gamma-1}2 \cdot k] \geq a$, where $a>0$ is some
  constant.  Putting it all together, with probability at least
  $a-\frac1m$, we have a {\em feasible} dual solution $Y'_e := Y_e/3$
  with objective value at least $\frac{\gamma-1}6\cdot k$.

  \emph{Why is this dual $Y'_e$ any better than the original dual
    $y_e$?} It is ``near-integral''---specifically, each $Y'_e$ is
  either zero or at least $\frac13$. So order the elements of $S'$ in
  decreasing order of their $Y'$-value, and let $Q$ be the set of the
  {\em first $k$ elements} in this order. The total dual value of
  elements in $Q$ is at least $\min\{ \frac{\gamma-1}{6}k, \frac{k}3\}
  \ge \frac{k}3$, since $\gamma \ge 3$, and each non-zero $Y'$ value is
  $\geq 1/3$. This valid dual for elements in $Q$ shows a lower bound of
  $\frac{k}3$ on minimum (fractional) $\newc$-cost to cover the $k$
  elements in $Q$. Using $c_R > \frac{3\Tstar}k\cdot \newc_R$ for each
  $R\in\fprime$, the minimum $c$-cost to fractionally cover $Q$ is $>
  \frac{3\Tstar}k\cdot \frac{k}3=\Tstar$.  Hence, if $Q$ is the realized
  scenario, the optimal second stage cost will be $> \Tstar$ (as no
  element in $Q$ is covered by $\Phistar$)---this contradicts the fact
  that OPT can cover $Q \in \smash{\binom{U}{k}}$ with cost at most
  $\Tstar$. Thus we must have $\gamma \le 2$, which completes the proof
  of \lref[Claim]{cl:main}.
\end{proofof}

\paragraph{The $k$-Max-Min Set Cover Problem.}
The proof of \lref[Claim]{cl:main} suggests how to get a $(H_n, 12H_n, 36 \ln m)$ strongly discriminating algorithm.
When $\lambda=1$ (and so $\Phistar = 0$), the proof shows that if $c(\fst) > 12 H_n \cdot T$, there is a randomized
algorithm that outputs $k$-set $Q$ with optimal covering cost $> T$ (witnessed by the dual solution having cost $>T$).
Now using \lref[Lemma]{lem:apxstrong}, we get the claimed $O(\log m + \log n)$ algorithm for the $k$-max-min set cover
problem. This nearly matches the hardness of $\Omega(\frac{\log
  m}{\log\log m} + \log n)$ given by~\cite{FJMM07}.

\noindent {\bf Remarks:} \ignore{We note that our $k$-robust algorithm also extends to the more general setting of
(uncapacitated) {\em Covering Integer Programs}; a CIP (see eg.~\cite{S99}) is given by $A\in [0,1]^{n\times m}$, $b\in
[1,\infty)^n$ and $c\in \mathbb{R}_+^m$, and the goal is to minimize $\{c^T\cdot x \mid Ax\ge b,\, x\in
\mathbb{Z}_+^m\}$.} The result above (as well as the~\cite{FJMM07} result) also hold in the presence of set-dependent
inflation factors---details appear in \lref[Appendix]{app:non-unif-sc}.  Results for the other covering problems do not
extend to the case of non-uniform inflation: this is usually inherent, and not just a flaw in our analysis. Eg.,
\cite{KKMS08} give an $\Omega(\log^{1/2-\epsilon} n)$ hardness for $k$-robust Steiner forest under just two distinct
inflation-factors, whereas we give an $O(1)$-approximation under uniform inflations (in
\lref[Section]{sec:steiner-forest}).

\section{$k$-Robust Minimum Cut}
\label{sec:minimum-cut}

We now consider the $k$-robust minimum cut problem, where we are given an undirected graph $G=(V,E)$ with edge
capacities $c:E\rightarrow \mathbb{R}_+$, a root $r\in V$, terminals $U\sse V$, inflation factor $\lambda$. Again, any
subset in $\smash{\binom{U}{k}}$ is a possible second-stage scenario, and again we seek to give a discriminating
algorithm. This algorithm, like for set cover, is non-adaptive: we just pick all the ``expensive'' terminals and cut
them in the first stage.

\begin{algorithm}
  \caption{Algorithm for $k$-Robust Min-Cut}
  \begin{algorithmic}[1]
    \STATE \textbf{input:} $k$-robust minimum-cut instance and threshold $T$.
    \STATE \textbf{let} $\beta \gets \Theta(1)$, and $S\leftarrow \{v\in
    U \mid \mbox{ min cut separating $v$ from root $r$ has cost at least
    } \beta\cdot \frac{T}{k}\}$.
    \STATE \textbf{output} first stage solution \fst as the minimum cut
    separating $S$ from $r$.
    \STATE {\bf define} $\snd(\{i\})$ as the min-$r$-$i$ cut in
    $G \setminus \fst$, for $i\in U\setminus S$; and
    $\snd(\{i\})=\emptyset$ for $i\in S$.
     \STATE \textbf{output} second stage solution \snd where
     $\snd(D):=\bigcup_{i\in D}\snd(\{i\})$ for all
     $D\sse U$.
  \end{algorithmic}
\end{algorithm}

\begin{claim}[Property~A for Min-Cut]\label{cl:mincut-2nd}
  For all $T \ge 0$ and $D \in \smash{\binom{U}{k}}$, the edges
  $\fst\bigcup \snd(D)$ separate the terminals $D$ from $r$; moreover,
  the cost $c(\snd(D))\le \beta\, T$.
\end{claim}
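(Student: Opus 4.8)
The plan is to imitate, essentially verbatim, the argument used for set cover in \lref[Claim]{cl:sc-2nd}, since the min-cut algorithm has exactly the same non-adaptive shape: it picks the ``expensive'' terminals $S$, cuts all of them in the first stage, and then handles each remaining requested terminal by a single per-terminal cut. First I would dispose of feasibility. Partition the scenario $D$ into $D\cap S$ and $D\setminus S$. For $i\in D\cap S$, the first-stage solution $\fst$ was chosen as a cut separating the entire set $S$ from $r$, so it already disconnects $i$ from $r$; for $i\in D\setminus S$, the set $\snd(\{i\})$ is by definition a min $r$-$i$ cut in the edge-deleted graph $G\setminus\fst$, so $\fst\cup\snd(\{i\})$ disconnects $i$ from $r$ in $G$. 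Taking the union over all $i\in D$ shows that $\fst\cup\snd(D)=\fst\cup\bigcup_{i\in D}\snd(\{i\})$ separates every terminal of $D$ from $r$, which is the first assertion.

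For the cost bound, the one small fact I need is monotonicity of min-cut values under edge deletion: for any terminal $i$, the min $r$-$i$ cut value in $G\setminus\fst$ is at most the min $r$-$i$ cut value in $G$. Indeed, if $C$ is any $r$-$i$ cut of $G$, then $C\setminus\fst$ is an $r$-$i$ cut of $G\setminus\fst$ (removing it from $G\setminus\fst$ leaves the edge set $E(G)\setminus(\fst\cup C)\subseteq E(G)\setminus C$, in which $r$ and $i$ are already separated), and $c(C\setminus\fst)\le c(C)$. Now for $i\in D\setminus S$, the defining property of $S$ says the min $r$-$i$ cut in $G$ costs less than $\beta\cdot T/k$, so by monotonicity $c(\snd(\{i\}))<\beta T/k$; and for $i\in D\cap S$ we have $\snd(\{i\})=\emptyset$. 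Summing, $c(\snd(D))\le\sum_{i\in D}c(\snd(\{i\}))\le |D|\cdot \beta T/k=\beta T$, using $|D|=k$. That is exactly the claimed bound.

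I do not expect any genuine obstacle in this claim: it is the ``easy half'' of the discriminating guarantee (the min-cut analogue of \lref[Claim]{cl:sc-2nd}), and it follows purely from how the algorithm selects $S$, $\fst$, and the per-terminal cuts $\snd(\{i\})$. The only points needing a line of justification are the edge-deletion monotonicity of min cuts and the fact that a union of feasible per-terminal cuts is a feasible cut for the whole scenario, both of which are immediate. The real work for $k$-robust min-cut — proving the analogue of Property~B (a net-type property extracting a bad $k$-set, with the extra complication that $\fst$ and $\snd$ must be charged separately via a Gomory--Hu-tree argument) — lies in the remainder of \lref[Section]{sec:minimum-cut} and is not invoked here.
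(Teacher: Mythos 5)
Your proof is correct and follows exactly the route the paper takes for the analogous claims (the paper in fact omits an explicit proof for the min-cut version, giving only the set-cover and multicut analogues, which argue the same way): feasibility splits $D$ into $D\cap S$ handled by $\fst$ and $D\setminus S$ handled per-terminal, and the cost bound sums $k$ terms of at most $\beta T/k$. Your extra observation that min-cut values can only decrease when passing from $G$ to $G\setminus\fst$ is the one small point the paper leaves implicit, and you justify it correctly.
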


\begin{theorem}[Property~B for Min-Cut]
  \label{thm:mincut-main}
  Let $\Phistar$ denote the optimal first stage solution (and its cost),
  and $\Tstar$ the optimal second stage cost.  If $\beta\ge
  \frac{10e}{e-1}$ and $T\ge\Tstar$ then $c(\fst)\le 3\cdot \Phistar +
  \frac\beta{2}\cdot \Tstar$.
\end{theorem}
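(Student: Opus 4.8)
The plan is to pass to the flow side of the problem and argue by contradiction. Since $c(\fst)$ is, by construction, the value of the minimum cut separating $S$ from $r$ in $G$, by the max-flow/min-cut theorem it equals the value of a maximum flow from the super-source $S$ to $r$ in $G$. Assume for contradiction that $c(\fst) > 3\Phistar + \tfrac{\beta}{2}\Tstar$. I will exhibit a scenario $A' \in \binom{U}{k}$ whose optimal second-stage cost --- which is exactly the minimum $r$--$A'$ cut in $G \setminus \Phistar$, since $\opt$'s first stage $\Phistar$ is free for the second stage --- exceeds $\Tstar$; this contradicts the definition of $\Tstar$ as $\opt$'s worst-case second-stage cost. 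Throughout I use the hypothesis $T \ge \Tstar$, so every terminal placed in $S$ has $\mathrm{mincut}_G(v,r) \ge \beta T/k \ge \beta\Tstar/k$: the terminals we cut in the first stage are ``individually expensive''. (The case $|S|\le k$ is immediate, since then $S$ itself, padded to size $k$, is a scenario and $\mathrm{mincut}_{G\setminus\Phistar}(S,r)\ge c(\fst)-\Phistar$ would already exceed $\Tstar$; so assume $|S|>k$.)

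The first task is to disentangle the two stages of $\opt$. Let $S'' \subseteq S$ be the terminals of $S$ that remain connected to $r$ in $G \setminus \Phistar$. Since $\Phistar$ together with any $r$--$S''$ cut of $G \setminus \Phistar$ separates all of $S$ from $r$, we get $c(\fst) \le c(\Phistar) + \mathrm{mincut}_{G\setminus\Phistar}(S'',r)$; the more delicate point is that each $v \in S''$ has shed a lot of its original connectivity --- from $\ge \beta\Tstar/k$ down to $\le \Tstar$, since $\opt$'s second stage can cut any single terminal for at most $\Tstar$ --- and this ``damage'' must be paid for out of $c(\Phistar)$. I would make this precise with a Gomory--Hu tree of $G$ rooted at $r$: for each $v$ the min-weight edge on its tree path to $r$ realizes $\mathrm{mincut}_G(v,r)$ as an actual $G$-cut, and comparing these cuts (with the $\Phistar$-edges removed) against $\mathrm{mincut}_{G\setminus\Phistar}(v,r) \le \Tstar$ lets one charge the lost connectivity of the $S''$-terminals to $c(\Phistar)$ while controlling overcounting. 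This Gomory--Hu bookkeeping is what produces the coefficient $3$ on $\Phistar$, and reduces the theorem to the residual claim: if $\mathrm{mincut}_{G\setminus\Phistar}(S'',r)$ is large and each surviving $v\in S''$ still has a respectable effective individual cut to $r$ in $G\setminus\Phistar$, then some $k$ of the terminals of $S''$ already have aggregate $r$-cut $> \Tstar$.

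The residual claim is a single-sink \emph{flow-aggregation lemma}, and I expect it to be the main obstacle. The naive attempt --- take a maximum $r$--$S''$ flow, decompose into paths, keep the $k$ sources carrying the most flow --- fails, because each source carries at most $\Tstar$ (its restricted flow is a feasible $v$--$r$ flow in $G\setminus\Phistar$) while the total value can be spread thinly over far more than $k$ sources. One must instead \emph{concentrate} the flow, exploiting that each surviving source is individually well connected to $r$; the tool for this is (fractional) Steiner-tree packing on $\{r\}\cup S''$, which should guarantee that a carefully --- indeed, randomly --- chosen set $A'$ of $k$ sources jointly routes a constant fraction of $\min\{\mathrm{mincut}_{G\setminus\Phistar}(S'',r),\,\beta\Tstar\}$ units of flow to $r$ in $G\setminus\Phistar$. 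Since $\mathrm{mincut}_{G\setminus\Phistar}(S'',r)$ is large, this min is $\Omega(\beta\Tstar)$, and the numerology works out exactly so that a $\tfrac{e-1}{e}$-fraction (the loss of a randomized selection) of a $\Theta(\beta)\Tstar$ quantity strictly exceeds $\Tstar$ precisely when $\beta \ge \tfrac{10e}{e-1}$ --- the constant $10$ absorbing the $\tfrac{\beta}{2}$ of the conclusion together with the slack in the Gomory--Hu charging and the packing bound. Then the scenario $A'$ cannot be covered by $\opt$'s second stage within budget $\Tstar$, which is the contradiction. The genuinely delicate parts are making the re-routing truly \emph{joint} (so that the $k$ chosen sources together, not merely separately, route the claimed amount in $G\setminus\Phistar$) and checking that the Gomory--Hu accounting of $\Phistar$ and the Steiner-packing step remain compatible when applied to the same residual graph.
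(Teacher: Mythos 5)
Your overall strategy matches the paper's exactly: bound the min $r$--$S$ cut in $H := G\setminus\Phistar$ by splitting the terminals into a part charged to $\Phistar$ via a Gomory--Hu argument, and a part handled by a Steiner-tree-packing ``flow-aggregation'' (the paper's Redistribution Lemma) leading to a witness scenario of $k$ terminals with cut cost $>\Tstar$. The two ingredients are the right ones, and your numerology sketch for the constant $\frac{10e}{e-1}$ is on target.

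However, your decomposition of $S$ is wrong in a way that breaks the second step. You define $S''$ as the terminals still \emph{connected} to $r$ in $H$, whereas the paper defines $S_h$ as those whose min $r$--$v$ cut in $H$ is still $\ge M := \frac{\beta}{2}\cdot\frac{\Tstar}{k}$. These are not the same: a terminal can remain connected while its min-cut in $H$ has dropped well below $M$, and your $S''$ would include it. But your residual claim explicitly requires that ``each surviving $v\in S''$ still has a respectable effective individual cut to $r$ in $G\setminus\Phistar$''---that lower bound is exactly what the Steiner-tree packing needs (in the paper's Lemma~\ref{lem:redistr}, min-cut$_N(i,j)\ge 1$ after scaling by $M$), and your definition of $S''$ does not guarantee it. The correct split is by the threshold $M$, not by mere connectivity.

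Relatedly, your Gomory--Hu charging is off in two ways. First, you compare $\mathrm{mincut}_G(v,r)\ge\beta\Tstar/k$ against $\mathrm{mincut}_H(v,r)\le\Tstar$; but $\Tstar$ can vastly exceed $M=\frac{\beta}{2}\frac{\Tstar}{k}$ when $k$ is large, so ``shedding from $\ge 2M$ to $\le\Tstar$'' does not give the constant-factor drop needed to charge the residual $H$-cut to $\Phistar$. The paper's Claim~\ref{clm:cut-low} is specifically about terminals whose min-cut fell from $\ge 2M$ (in $G$) to $\le M$ (in $H$); then for each such $u$, $c(\partial_H(D_u))\le M\le c(\partial_G(D_u))-M$, so $c(\partial_H(D_u))\le c(\partial_{\Phistar}(D_u))$, and disjointness of the $D_u$ gives the factor $2$. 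Second, you build the Gomory--Hu tree on $G$; the paper builds it on $H$, which is what makes $D_u$ a min $r$--$u$ cut \emph{in $H$} (so that $c(\partial_H(D_u))\le M$) while simultaneously being a feasible $r$--$u$ cut in $G$ (so that $c(\partial_G(D_u))\ge 2M$). Built on $G$, the tree would not deliver both bounds, and the charging would need a different argument. Both gaps are fixable---replace ``connected'' with ``min $r$--$v$ cut in $H$ at least $M$'' and take the Gomory--Hu tree of $H$---but as written the plan does not go through.
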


Here's the intuition for this theorem: As in the set cover proof, we claim that if the optimal cost of separating $S$
from the root $r$ is high, then there must be a dual solution (which prescribes flows from vertices in $S$ to $r$) of
large value. We again ``round'' this dual solution by aggregating these flows to get a set of $k$ terminals that have a
large combined flow (of value $> \Phistar + \Tstar$) to the root---but this is impossible, since the optimal solution
promises us a cut of at most $\Phistar + \Tstar$ for any set of $k$ terminals.

However, more work is required. For set-cover, each element was either covered by the first-stage, or it was not; for
cut problems, things are not so cut-and-dried, since both stages may help in severing a terminal from the root! So we
divide $S$ into two parts differently: the first part contains those nodes whose min-cut in $G$ is large (since they
belonged to $S$) but it fell by a constant factor in the graph $G \setminus \Phistar$. These we call ``low'' nodes, and
we use a Gomory-Hu tree based analysis to show that all low nodes can be completely separated from $r$ by paying only
$O(\Phistar)$ more (this we show in \lref[Claim]{clm:cut-low}). The remaining ``high'' nodes continue to have a large
min-cut in $G \setminus \Phistar$, and for these we use the dual rounding idea sketched above to show a min-cut of
$O(\Tstar)$ (this is proved in \lref[Claim]{clm:cut-high}). Together these claims imply
\lref[Theorem]{thm:mincut-main}.

To begin the proof of \lref[Theorem]{thm:mincut-main}, let $H:=G\setminus \Phistar$, and let $S_h\sse S$ denote the
``high'' vertices whose min-cut from the root in $H$ is at least $M:= \frac\beta2\cdot \frac{\Tstar}k$. The following
claim is essentially from Golovin et al.~\cite{GGR06}.
\begin{claim}[Cutting Low Nodes]
  \label{clm:cut-low}
  If $T\ge \Tstar$, the minimum cut in $H$ separating $S\setminus S_h$
  from $r$ costs at most $2\cdot \Phistar$.
\end{claim}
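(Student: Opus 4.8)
The plan is to assemble a cut separating the low nodes from $r$ out of a single Gomory--Hu tree of $H$, and then to pay for it using the first-stage edges. Write $L:=S\setminus S_h$ for the low nodes (if $L=\emptyset$ the claim is trivial), and fix a cut-equivalent (Gomory--Hu) tree $\mathcal{T}$ of $H$ rooted at $r$. For each $v\in L$, I would pick a minimum-weight edge $e_v$ on the $r$--$v$ path of $\mathcal{T}$ and let $D_v\subseteq V$ be the component of $\mathcal{T}-e_v$ not containing $r$; then $v\in D_v$, $r\notin D_v$, and the capacity $c(\delta_H(D_v))$ of the $H$-edges leaving $D_v$ equals the minimum $r$--$v$ cut value in $H$, which is $<M$ since $v$ is low. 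The structural fact I would use is that, because $\mathcal{T}$ is a \emph{rooted} tree, the family $\{D_v:v\in L\}$ is laminar; taking its inclusion-maximal members $D_{v_1},\dots,D_{v_t}$ (pairwise disjoint, by laminarity and maximality) and setting $W:=\bigcup_{i=1}^t D_{v_i}$, we get $L\subseteq W$ and $r\notin W$, so $\delta_H(W)$ is a valid cut. Disjointness of the $D_{v_i}$ forces each edge of $\delta_H(W)$ to cross exactly one $D_{v_i}$, whence $c(\delta_H(W))\le\sum_{i=1}^t c(\delta_H(D_{v_i}))<tM$.

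Next I would bound $tM$ by $2\Phistar$, which is where the hypothesis $T\ge\Tstar$ is used. Fix a maximal piece $D_{v_i}$. Since $v_i\in S$, its minimum $r$--$v_i$ cut in the \emph{full} graph $G$ has value at least $\beta\cdot\frac{T}{k}\ge\beta\cdot\frac{\Tstar}{k}=2M$, so $c(\delta_G(D_{v_i}))\ge 2M$ (as $D_{v_i}$ is $r$--$v_i$ separating in $G$ too). Because the edge set of $G$ is the disjoint union of that of $H$ and $\Phistar$, we have $c(\delta_G(D_{v_i}))=c(\delta_H(D_{v_i}))+c(\delta_{\Phistar}(D_{v_i}))$, and combining with $c(\delta_H(D_{v_i}))<M$ shows each maximal piece carries more than $M$ worth of first-stage edges on its boundary: $c(\delta_{\Phistar}(D_{v_i}))>M$. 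Finally, since the $D_{v_i}$ are pairwise disjoint, each edge of $\Phistar$ has each endpoint in at most one piece and so lies on at most two of the boundaries $\delta_{\Phistar}(D_{v_i})$; summing therefore gives $tM<\sum_{i=1}^t c(\delta_{\Phistar}(D_{v_i}))\le 2\,c(\Phistar)=2\Phistar$. Chaining with the first paragraph, $c(\delta_H(W))<tM<2\Phistar$, so the minimum cut in $H$ separating $S\setminus S_h$ from $r$ costs less than $2\Phistar$, as claimed.

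The step I expect to be the crux is keeping $c(\delta_H(W))$ under control: the boundary of a union of vertex sets can be far larger than the sum of the individual boundaries, and it is exactly the laminar structure inherited from the rooted Gomory--Hu tree that rules this out and yields $\delta_H(W)\subseteq\bigcup_i\delta_H(D_{v_i})$ without loss. The only other place to be careful is the factor of $2$ in the charge to $\Phistar$: a first-stage edge may straddle two distinct maximal pieces and hence be counted twice, and this is precisely what produces ``$2\Phistar$'' rather than ``$\Phistar$'' in the statement. The remaining ingredients --- the Gomory--Hu cut-equivalence identities and the additive split $c(\delta_G(\cdot))=c(\delta_H(\cdot))+c(\delta_{\Phistar}(\cdot))$ --- are routine.
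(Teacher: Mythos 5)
Your proof is correct and takes essentially the same route as the paper: build the cut from a rooted Gomory--Hu tree of $H$, take a disjoint (in the paper, ``closest to root''; in yours, inclusion-maximal in the laminar family) collection of min-cut pieces $D_{v_i}$, observe that each piece has $H$-boundary $<M$ while its $G$-boundary is $\ge 2M$, hence its $\Phistar$-boundary exceeds $M$, and finish by noting each first-stage edge is charged at most twice across disjoint pieces. The only cosmetic difference is that the paper compares $c(\partial_H(D_u))\le c(\partial_{\Phistar}(D_u))$ term by term, whereas you pass through the intermediate quantity $tM$; both chains deliver the same $2\Phistar$ bound.
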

\begin{proof}
  Let $S':=S\setminus S_h$, and $t:=\beta\cdot \frac{\Tstar}k$. For
  every $v\in S'$, the minimum $r-v$ cut is at least $\beta\cdot
  \frac{T}k\ge \beta\cdot \frac{\Tstar}k=2M$ in $G$, and at most $M$ in
  $H$. Consider the {\em Gomory-Hu} (cut-equivalent) tree $\tf(H)$ on
  graph $H$ rooted at~$r$~\cite[Chap.~15]{Schr-book}.  For each $u \in S'$ let
  $D_u\sse V$ denote the minimum $r-u$ cut
  in $\tf(H)$ where $u\in D_u$ and $r\not\in D_u$. Pick a subset
  $S''\sse S'$ of terminals such that the union of their respective
  min-cuts in $\tf(H)$ separate all of $S'$ from the root and their
  corresponding sets $D_u$ are disjoint (the set of cuts in tree $\tf(H)$ closest to the
  root $r$ gives such a collection).  It follows that (a) $\{D_u\mid u\in
  S''\}$ are disjoint, and (b) $F:=\cup_{u\in S''}
  \partial_H (D_u)$ is a feasible cut in $H$ separating $S'$ from $r$.
  Note that for all $u\in S''$, we have $c(\partial_H(D_u))\le M$ (since
  it is a minimum $r$-$u$ cut in $H$), and $c(\partial_G(D_u))\ge 2 M$
  (it is a feasible $r$-$u$ cut in $G$).  Thus $c(\partial_H(D_u))\le
  c(\partial_G(D_u)) - c(\partial_H(D_u)) =
  c(\partial_{\Phistar}(D_u))$. Now,
  $\ts c(F) \le \sum_{u\in S''} c(\partial_H(D_u))\le \sum_{u\in S''}
  c(\partial_{\Phistar}(D_u)) \le 2\cdot \Phistar$.
  The last inequality uses disjointness of $\{D_u\}_{u\in S''}$.  Thus
  the minimum $r-S'$ cut in $H$ is at most $2\Phistar$.
\end{proof}

\begin{claim}[Cutting High Nodes]
  \label{clm:cut-high}
  If $T \geq \Tstar$, the minimum $r$-$S_h$ cut in $H$ costs at most
  $\frac\beta{2}\cdot \Tstar$, when $\beta\ge \frac{10\cdot e}{e-1}$.
\end{claim}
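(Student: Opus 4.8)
The plan is to argue by contradiction. Suppose the minimum $r$-$S_h$ cut in $H$ has cost strictly more than $\frac\beta2\cdot\Tstar = Mk$; then by the max-flow/min-cut theorem there is a single-sink flow $f$ in $H$ sending strictly more than $Mk$ units from the sources $S_h$ to the sink $r$. The goal is to use $f$, together with the fact that every $v\in S_h$ individually has an $r$-$v$ flow of value at least $M$ in $H$, to exhibit a \emph{witness} set $Q\sse S_h$ of at most $k$ terminals whose minimum $r$-$Q$ cut in $H$ exceeds $\Tstar$. This is impossible: for the scenario $Q\in\binom{U}{k}$ the optimal solution separates $Q$ from $r$ using its first-stage edges $\Phistar$ together with second-stage edges of cost at most $\Tstar$, so in $H = G\setminus\Phistar$ the minimum $r$-$Q$ cut costs at most $\Tstar$. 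Hence no such $Q$ can exist, which forces the claimed bound on the $r$-$S_h$ cut.

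It remains to extract $Q$ from $f$, i.e.\ to \emph{aggregate} a flow of value $>Mk$ that is spread over the (possibly many) sources $S_h$ into a flow of value $>\Tstar = \tfrac2\beta Mk$ emanating from only $k$ of them. If $|S_h|\le k$ this is immediate: take $Q$ to be $S_h$ padded with arbitrary terminals, and observe that the minimum $r$-$Q$ cut in $H$ is at least the minimum $r$-$S_h$ cut, which is $>Mk\ge\Tstar$ since $\beta>2$. The substantive case is $|S_h|>k$. Here I first note that since every $v\in S_h$ has $r$-$v$ min-cut at least $M$ in $H$, the Gomory-Hu triangle inequality ($\mathrm{mincut}(u,w)\ge\min\{\mathrm{mincut}(u,r),\mathrm{mincut}(r,w)\}$) shows that \emph{every} pair of vertices in $Z:=S_h\cup\{r\}$ has min-cut at least $M$ in $H$; consequently every Steiner cut of the terminal set $Z$ costs at least $M$, and by the approximate max-Steiner-tree-packing / min-Steiner-cut theorem there is a fractional packing of Steiner trees on $Z$ of total weight $\Omega(M)$. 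The flow-aggregation step then samples a random $Q\sse S_h$ of size $k$ and argues — using the sub-flow of $f$ carried by the sampled sources, together with the Steiner-tree packing to re-route the flow those sources would otherwise ``lose'' toward $r$ — that with positive probability the joint $r$-$Q$ flow has value $\Omega(Mk)$. The Chernoff bounds used in this probabilistic argument are exactly what pin down the constant: one needs $\Omega(Mk)>\Tstar=\tfrac2\beta Mk$, and the $\tfrac{e}{e-1}$ (the usual $(1-1/e)$ loss of randomized rounding) together with the Steiner-packing factor of $2$ leads to the requirement $\beta\ge\frac{10e}{e-1}$.

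The main obstacle is precisely this aggregation when $|S_h|>k$: a generic flow of value $>Mk$ out of $S_h$ may be so evenly spread that the $k$ largest-carrying sources jointly carry only a $\tfrac{k}{|S_h|}$ fraction of it, which is far below $\Tstar$ once $|S_h|\gg k$, so ``pick the top $k$'' or naive proportional sampling fails. What rescues the argument is the \emph{uniform} lower bound $M$ on the individual min-cuts: via the triangle inequality it makes $S_h\cup\{r\}$ highly connected as a terminal set, hence supports a dense Steiner-tree packing, and this structure is exactly what lets a random $k$-subset re-aggregate the flow while losing only a constant factor. Combining this with \lref[Claim]{clm:cut-low}, which disposes of the complementary ``low'' nodes via the Gomory-Hu argument, completes the proof of \lref[Theorem]{thm:mincut-main}.
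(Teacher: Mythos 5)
Your proposal matches the paper's approach: it sets up the same contradiction (a too-large $r$-$S_h$ max-flow yields, after aggregation into $k$ sources, a witness $k$-set whose min-cut in $H$ exceeds $\Tstar$), identifies the same obstacle (the flow may be spread too thinly for ``take the top $k$'' to work), and invokes the same fix — exploit the uniform $M$ lower bound on pairwise min-cuts in $S_h\cup\{r\}$ to pack Steiner trees fractionally, then sample a random $k$-subset and re-route flow along the packing — which is exactly the content of the paper's Redistribution Lemma. One small imprecision: the paper's aggregation step needs only a linearity-of-expectation argument (the probability a terminal is within $q/\ell$ hops of a sampled vertex on each Euler tour is $\ge 1-1/e$), not a Chernoff bound; the constant $\frac{10e}{e-1}$ comes from combining the $(1-1/e)$ sampling loss, a factor $2$ for Euler tours, a factor $2$ for Steiner-tree packing integrality gap, and a final $4/5$ from averaging the redistribution flow with the original max-flow.
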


\begin{proof}
  Consider an $r$-$S_h$ max-flow in the graph $H = G \setminus \Phistar$,
  and suppose it sends $\alpha_i\cdot M$ flow to vertex $i\in S_h$. By
  making copies of terminals, we can assume each $\alpha_i\in (0,1]$;
  the $k$-robust min-cut problem remains unchanged under making copies.
  Hence if we show that $\sum_{i\in S_h} \alpha_i \le k$, the total flow
  (which equals the min $r$-$S_h$ cut) would be at most $k\cdot M =
  \frac{\beta}{2}\cdot \Tstar$, which would prove the claim.  For a
  contradiction, we suppose that $\sum_{i\in S_h} \alpha_i >k$. We will
  now claim that there exists a subset $W \sse S_h$ with $|W|\le k$ such
  that the min $r$-$W$ cut is more than $\Tstar$, contradicting the fact
  that every $k$-set in $H$ can be separated from $r$ by a cut of value
  at most $\Tstar$. To find this set $W$, the following redistribution
  lemma (proved at the end of this theorem) is useful.

  \begin{lemma}[Redistribution Lemma]
    \label{lem:redistr}
    Let $N = (V,E)$ be a capacitated undirected graph. Let $X \sse V$ be
    a set of terminals such min-cut$_N(i,j) \geq 1$ for all nodes $i,j \in X$.
    For each $i\in X$, we are given a value $\epsilon_i\in (0,1]$. Then
    for any integer $\ell \leq \sum_{i \in X} \epsilon_i$, there exists
    a subset $W \sse X$ with $|W| \leq \ell$ vertices, and a feasible
    flow $\flow$ in $N$ from $X$ to $W$ so that (i) the total
    $\flow$-flow into $W$ is at least $\frac{1-e^{-1}}4\cdot \ell$ and
    (ii)~the \flow-flow out of each $i \in X$ is at most $\epsilon_i/4$.
  \end{lemma}

  We apply this lemma to $H = G \setminus \Phistar$ with terminal set
  $S_h$, but with capacities scaled down by $M$. Since for any cut
  separating $x,y \in S_h$, the root $r$ lies on one side on this cut
  (say on $y$'s side), min-cut$_H(x,y) \geq M$---hence the scaled-down
  capacities satisfy the conditions of the lemma. Now set $\ell = k$,
  and $\epsilon_i:=\alpha_i$ for each terminal $i\in S_h$; by the
  assumption $\sum_{i\in S_h} \epsilon_i =\sum_{i\in S_h} \alpha_i \ge k
  = \ell$. Hence \lref[Lemma]{lem:redistr} finds a subset $W \sse S_h$
  with $k$ vertices, and a flow \flow in (unscaled) graph $H$ such that
  \flow sends a total of at least $\frac{1-1/e}4\cdot kM$ units into
  $W$, and at most $\frac{\alpha_i}4\cdot M$ units out of each $i \in
  S_h$. Also, there is a feasible flow $g$ in the network $H$ that
  simultaneously sends $\alpha_i \cdot M$ flow from the root to each $i
  \in S_h$, namely the max-flow from $r$ to $S_h$. Hence the flow
  $\frac{g + 4\flow}{5}$ is {\em feasible} in $H$, and sends at least
  $\frac45\cdot \frac{1-1/e}4\cdot kM = \frac{1-1/e}5\cdot kM $ units
  from $r$ into $W$.  Finally, if $\beta > \frac{10\cdot e}{e-1}$, we
  obtain that the min-cut in $H$ separating $W$ from $r$ is greater than
  $\Tstar$: since $|W| \leq k$, this is a contradiction to the
  assumption that any set with at most $k$ vertices can separated from
  the root in $H$ at cost at most $\Tstar$.
\end{proof}

From \lref[Claim]{cl:mincut-2nd} and \lref[Theorem]{thm:mincut-main}, we obtain a
$(3,\frac\beta{2},\beta)$-discriminating algorithm for $k$-robust minimum cut, when $\beta\ge \frac{10e}{e-1}$. We set
$\beta=\frac{10e}{e-1}$ and use \lref[Lemma]{lem:apx} to infer that the approximation ratio of this algorithm is
$\max\{ 3, \frac\beta{2\lambda}+\beta\} = \frac\beta{2\lambda}+\beta$. Since picking edges only in the second-stage is
a trivial $\lambda$-approximation, the better of the two gives an approximation of
$\min\{\frac\beta{2\lambda}+\beta,~\lambda\}<17$. Thus we have,
\begin{theorem}[Min-cut Theorem]
  There is a 17-approximation algorithm for $k$-robust minimum cut.
\end{theorem}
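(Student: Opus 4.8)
The plan is to read the $17$-approximation directly off a discriminating algorithm. The displayed algorithm is $(3,\beta/2,\beta)$-discriminating for $\beta=\tfrac{10e}{e-1}$: Property~A is \lref[Claim]{cl:mincut-2nd}, which I would verify by hand --- the terminals of $D\cap S$ are already severed from $r$ by $\fst$, each $i\in D\setminus S$ is severed by $\snd(\{i\})$, so $\fst\cup\snd(D)$ is a feasible cut, and since $i\notin S$ forces the minimum $r$--$i$ cut in $G$ (hence also in $G\setminus\fst$) below $\beta T/k$, we get $c(\snd(D))\le\sum_{i\in D}c(\snd(\{i\}))<\beta T$. Property~B is \lref[Theorem]{thm:mincut-main}, which I would take as given: its proof splits $S$ into ``low'' terminals whose min-cut drops once $\Phistar$ is removed (cut for cost $2\Phistar$ by the Gomory--Hu argument of \lref[Claim]{clm:cut-low}) and ``high'' terminals (cut for cost $\tfrac{\beta}{2}\Tstar$ by the dual-rounding argument of \lref[Claim]{clm:cut-high}), giving $c(\fst)\le 3\Phistar+\tfrac{\beta}{2}\Tstar$.

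Given this, I would feed $\alpha_1=3$ and $\alpha_2=\beta/2$ into \lref[Lemma]{lem:apx}; since $\beta>3$ the relevant maximum is $\beta+\tfrac{\beta}{2\lambda}$, so for every $\epsilon>0$ this produces a $(1+\epsilon)\bigl(\beta+\tfrac{\beta}{2\lambda}\bigr)$-approximation. In parallel, the strategy of doing nothing in the first stage and taking a minimum $r$--$D$ cut in the (costlier) second stage pays $\lambda$ times the minimum $r$--$D$ cut, which is at most $\lambda(\Phistar+\Tstar)\le\lambda\cdot\opt$, hence a $\lambda$-approximation. Returning the better of the two --- a choice that depends only on $\lambda$ --- gives a $\min\{(1+\epsilon)(\beta+\tfrac{\beta}{2\lambda}),\,\lambda\}$-approximation, and a one-line optimization over $\lambda\ge1$ (the two branches cross near $\lambda\approx16.3$ when $\beta\approx15.82$) shows this stays strictly below $17$ once $\epsilon$ is small enough.

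The real work, hidden inside \lref[Claim]{clm:cut-high}, is the Redistribution Lemma (\lref[Lemma]{lem:redistr}): given terminals $X$ with all pairwise min-cuts $\ge1$ and weights $\epsilon_i\in(0,1]$, one must route a flow pulling at most $\epsilon_i/4$ out of each $i$ while delivering a total of $\Omega(\ell)$ onto a sink set $W$ of only $\ell\le\sum_i\epsilon_i$ terminals. I would attack this through fractional Steiner-tree packing: pairwise connectivity $\ge1$ makes the minimum Steiner cut of $X$ at least $1$, so there is a capacity-respecting fractional packing of $X$-spanning Steiner trees of total value $\Omega(1)$, which normalizes to a distribution over such trees. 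One then samples a random $W$ (keeping each terminal with probability weighted by $\epsilon_i$ so that $\mathbb{E}|W|\le\ell$), samples a tree, routes each terminal's small supply along that tree toward $W$, and averages over the randomness, using Chernoff bounds to control both $|W|$ and the flow reaching it. Making the capacity bookkeeping and the constants ($1/4$ and $(1-e^{-1})/4$) come out simultaneously is the delicate point, and this Steiner-packing-plus-sampling step --- together with its harder oblivious-routing analogue for multicut --- is where I expect the main obstacle to be.
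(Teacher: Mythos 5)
Your proposal reproduces the paper's argument essentially verbatim: combine Claim~\ref{cl:mincut-2nd} and Theorem~\ref{thm:mincut-main} via Lemma~\ref{lem:apx} to get a $(1+\epsilon)\bigl(\beta+\tfrac{\beta}{2\lambda}\bigr)$-approximation with $\beta=\tfrac{10e}{e-1}$, compare against the trivial $\lambda$-approximation from buying only in the second stage, and optimize over $\lambda$ to land strictly below $17$. Your sketch of the supporting machinery---Gomory--Hu charging for the low nodes, Steiner-tree packing plus random sampling for the redistribution lemma---is also the paper's route; the paper pins down the constants you flag as delicate by routing along Euler tours of the packed trees with an explicit ``close-to-$W$'' criterion rather than generic Chernoff bookkeeping, but the high-level plan is the same.
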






It now remains to prove the redistribution lemma. At a high level, the proof shows that if we add each vertex $i \in X$
to a set $W$ independently with probability $\epsilon_i\, \ell/(\sum_i \epsilon_i)$, then this set $W$ will (almost)
satisfy the conditions of the lemma whp. A natural approach to prove this would be to invoke Gale/Hoffman-type
theorems~\cite[Chap.~11]{Schr-book}: e.g., it is necessary and sufficient to show that $c(\partial V') \geq
|\text{demand}(V') - \text{supply}(V')|$ for all $V' \sse V$ for this random choice $W$.  But we need to prove such
facts for \emph{all} subsets, and all we know about the network is that the min-cut between any pair of nodes in $X$ is
at least $1$! Also, such a general approach is likely to fail, since the redistribution lemma is false for directed
graphs (see remark at the end of this section) whereas the Gale-Hoffman theorems hold for digraphs. In our proof, we
use undirectedness to fractionally pack Steiner trees into the graph, on which we can do a randomized-rounding-based
analysis.

\begin{proofof}{\lref[Lemma]{lem:redistr} (Redistribution Lemma)}
To begin, we assume w.l.o.g. that the bounds $\epsilon_i = 1/P$ for all $i \in X$ for some integer $P$. Indeed, let
$P\in \mathbb{N}$ be large enough so that $\hat{\epsilon}_i = \epsilon_i P$ is an integer for each $i\in X$. Add, for
each $i \in X$, a star with $\hat{\epsilon}_i-1$ leaves centered at the original vertex~$i$, set all these new vertices
to also be terminals, and let all new edges have unit capacity. Set the new $\epsilon$'s to be $1/P$ for all terminals.
To avoid excess notation, call this graph $N$ as well; note that the assumptions of the lemma continue to hold, and any
solution $W$ on this new graph can be mapped back to the original graph.



Let $c_e$ denote the edge capacities in $N$, and recall the assumption that every cut in $N$ separating $X$ has
capacity at least one. Since the natural LP relaxation for Steiner-tree has integrality gap of~$2$, this implies the
existence of Steiner trees $\{T_a\}_{a \in A}$ on the terminal set $X$ that fractionally pack into the edge capacities
$\bar{c}$. I.e., there exist positive multipliers $\{\lambda_a\}_{a \in
  A}$ such that $\sum_a \lambda_a = \frac12$, and $\sum_a \lambda_a\cdot
\bar{\chi}(T_a) \le \bar{c}$, where $\bar{\chi}(T_a)$ is the characteristic vector of the tree $T_a$. Choose $W \sse X$
by taking $\ell$ samples uniformly at random (with replacement) from $X$. We will construct the flow $\flow$ from $X$
to $W$ as a sum of flows on these Steiner trees. In the following, let $q:=|X|$; note that $\ell \leq |X|\epsilon =
q/P$.

Consider any fixed tree $T_a$ in this collection, where we think of the edges as having unit capacities. We claim that
in expectation, $\Omega(\ell)$ units of flow can be feasibly routed from $X$ to $W$ in $T_a$ such that each terminal
supplies at most $\ell/q$. Indeed, let $\tau_a$ denote an oriented Euler tour corresponding to $T_a$. Since the tour
uses any tree edge twice, any feasible flow routed in $\tau_a$ (with unit-capacity edges) can be scaled by half to
obtain a feasible flow in $T_a$. We call a vertex $v \in X$ {\em $a$-close} if there is some $W$-vertex located at most
$q/\ell$ hops from $v$ on the (oriented) tour $\tau_a$.  Construct a flow $\flow_a$ on $\tau_a$ by sending $\ell/q$
flow from each $a$-close vertex $v\in X$ to its nearest $W$-vertex along $\tau_a$. By the definition of $a$-closeness,
the maximum number of flow paths in $\flow_a$ that traverse an edge on $\tau_a$ is $q/\ell$; since each flow path
carries $\ell/q$ flow, the flow on any edge in $\tau_a$ is at most one, and hence $\flow_a$ is always feasible.

For any vertex $v\in X$ and a tour $\tau_a$, the probability that $v$ is {\em not} $a$-close is at most
$(1-\frac{q/\ell}{q})^{\ell} \le e^{-1}$; hence $v\in X$ sends flow in $\flow_a$ with probability at least $1-e^{-1}$.
Thus the expected amount of flow sent in $\flow_a$ is at least $(1-e^{-1}) |X| \cdot (\ell/q) = (1-e^{-1})\cdot \ell$.
Now define the flow $\flow := \frac12 \sum_a \lambda_a\cdot \flow_a$ by combining all the flows along all the Steiner
trees. It is easily checked that this is a feasible flow in $N$ with probability one. Since $\sum_a \lambda_a=
\frac12$, the expected value of flow $\flow$ is at least $\frac{1-1/e}{4} \ell$. Finally the amount of flow in \flow
sent out of any terminal is at most $\frac14\cdot \ell/q \le \frac1{4P}$. This completes the proof of the
redistribution lemma.
\end{proofof}

\medskip\textbf{The $k$-max-min Min-Cut Problem.} When $\lambda =
1$ and $\Phistar = 0$, the proof of \lref[Theorem]{thm:mincut-main} gives a randomized algorithm such that if the
minimum $r$-$S$ cut is greater than $\frac{\beta}2 T$, it finds a subset $W$ of at most $k$ terminals such that
separating $W$ from the root costs more than $T$ (witnessed by the dual value). Using this we get a randomized
$(3,\frac\beta{2},\beta)$ {\em strongly} discriminating algorithm, and hence a randomized $O(1)$-approximation
algorithm for $k$-max-min min cut from \lref[Lemma]{lem:apxstrong}. We note that for $k$-max-min min-cut, a
$(1-1/e)$-approximation algorithm was already known (even for directed graphs) via submodular maximization. However the
above approach has the advantage that it also extends to $k$-robust min-cut.

\medskip\textbf{Bad Example for Directed Graphs.}
Let us show that our theorems for $k$-robust min-cut have to use the undirectedness of the graph crucially, and that
the theorems are in fact false for directed graphs. Consider the digraph $G$ with a root $r$, a ``center'' vertex $c$,
and $\ell$ terminals $v_1, v_2, \ldots, v_\ell$. This graph has arcs are $(c,r)$, $\{(r, v_i)\}_{i \in [\ell]}$ and
$\{(v_i, c)\}_{i \in [\ell]}$; each having unit capacity. Note that the min-cut between every $v_i$-$v_j$ pair is $1$,
but if we give each of the $v_i$'s $\epsilon_i = 1/\sqrt{\ell}$ flow, there is no way to choose $\sqrt{\ell}$ of these
vertices and collect a total of $\Omega(\sqrt{\ell})$ flow at these ``leaders''. This shows that the redistribution
lemma (\lref[Lemma]{lem:redistr}) is false for digraphs.

A similar example shows that that thresholded algorithms perform poorly for $k$-robust directed min-cut, even for
$k=1$. Consider graph $D$ with vertices $r$, $c$ and $\{v_i\}_{i \in [\ell]}$ as above. Graph $D$ has unit capacity
arcs $\{(v_i, r)\}_{i\in[\ell]}$, and $\sqrt{\ell}$ capacity arcs $(c,r)$ and $\{(v_i, c)\}_{i\in[\ell]}$. The
inflation factor is $\lambda = \sqrt{\ell}$. The optimal strategy is to delete the arc $(c,r)$ in the first stage.
Since $k=1$, one of the terminals $v_i$ demands to be separated from the root in the second stage, whence deleting the
edge $(v_i,r)$ costs $\lambda\cdot 1 = \sqrt{\ell}$ resulting in a total cost of $2\sqrt{\ell}$. However, any
threshold-based algorithm would either choose none of the terminals (resulting in a recourse cost of $\lambda
\sqrt{\ell} = \ell$), or all of them (resulting in a first-stage cost of at least $\ell$).


\section{$k$-Robust Multicut}
\label{sec:robust-multicut}

We now consider the multicut problem: we are given an undirected graph $G=(V,E)$ with edge-costs $c:E\rightarrow
\mathbb{R}_+$, and $m$ vertex-pairs $\{s_i,t_i\}_{i=1}^m$. In the $k$-robust version, we are also given an inflation
parameter $\lambda$ and bound $k$ on the cardinality of the realized demand-set. Let $\Phistar$ denote the optimal
first stage solution (and its cost), and $\Tstar$ the optimal second stage cost; so $\opt=\Phistar+\lambda\cdot
\Tstar$. The algorithm (given below) is essentially the same as for minimum cut, however the analysis requires
different arguments.

\begin{algorithm}
  \caption{Algorithm for $k$-Robust MultiCut}
  \begin{algorithmic}[1]
      \STATE \textbf{input:} $k$-robust multicut instance and threshold $T$.
    \STATE \textbf{let} $\rho :=O(\log n)$ be the approximation factor in R\"{a}cke's oblivious routing
    scheme~\cite{R08}, $\epsilon\in(0,\frac12)$ any constant, and $\beta := \rho\cdot
\frac{16\log n}{\epsilon \log\log n}$.
    \STATE \textbf{let} $S\leftarrow \{i\in [m] \mid \mbox{ min
      $s_i$-$t_i$ cut has cost at least } \beta\cdot \frac{T}{k}\}$.
    \STATE \textbf{output} first stage solution \fst as the $O(\log n)$-approximate multicut~\cite{GVY96} for $S$.
    \STATE {\bf define} $\snd(\{i\})$ as edges in the min $s_i-t_i$ cut, for $i\in [m]\setminus S$; and  $\snd(\{i\})=\emptyset$ for $i\in S$.
    \STATE \textbf{output} second stage solution \snd where $\snd(\omega):=\bigcup_{i\in \omega}\snd(\{i\})$ for all
     $\omega\sse [m]$.
  \end{algorithmic}
\end{algorithm}

\begin{claim}[Property~A for Multicut]\label{cl:multicut-2nd}
For all $T\ge 0$ and $\omega\sse [m]$, the edges $\fst\bigcup \snd(\omega)$ separate $s_i$ and $t_i$ for all
$i\in\omega$; additionally if $|\omega|\le k$ then the cost $c(\snd(\omega))\le \beta\, T$.
\end{claim}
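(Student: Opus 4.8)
The plan is to verify the two assertions directly from the definitions of $S$, $\fst$, and $\snd$ in the algorithm, exactly mirroring the analogous Property~A claims for set cover (\lref[Claim]{cl:sc-2nd}) and min-cut (\lref[Claim]{cl:mincut-2nd}). First I would establish feasibility. Fix any $\omega\sse[m]$ and any index $i\in\omega$. If $i\in S$, then $\fst$ is by construction an ($O(\log n)$-approximate) multicut for the pair-set $S$, so it already separates $s_i$ from $t_i$. If instead $i\in[m]\setminus S$, then $\snd(\{i\})$ is defined to be a minimum $s_i$--$t_i$ cut, which separates the pair; and since $\snd(\omega)=\bigcup_{j\in\omega}\snd(\{j\})\supseteq \snd(\{i\})$, the edge set $\fst\cup\snd(\omega)$ separates $s_i$ from $t_i$. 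As $i\in\omega$ was arbitrary, the first part follows.

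For the cost bound I would use the defining property of $S$: an index $i$ lies outside $S$ precisely when its minimum $s_i$--$t_i$ cut costs strictly less than $\beta\cdot T/k$. Hence $c(\snd(\{i\}))< \beta\, T/k$ for every $i\in[m]\setminus S$, while $\snd(\{i\})=\emptyset$ for $i\in S$; in either case $c(\snd(\{i\}))\le \beta\, T/k$. Summing over $\omega$ and using subadditivity of $c$ over unions,
\[
c(\snd(\omega))\ \le\ \sum_{i\in\omega} c(\snd(\{i\}))\ \le\ |\omega|\cdot \frac{\beta\,T}{k}\ \le\ \beta\,T,
\]
where the final inequality invokes the hypothesis $|\omega|\le k$. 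This yields the claimed bound.

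This proof is entirely routine and I do not anticipate any real obstacle; it is a bookkeeping check. The only point worth stating carefully is that the accounting is tight in the sense of ``$\le \beta T/k$ per demand, at most $k$ demands,'' so it matters that the threshold used to define $S$ is the same $\beta T/k$ appearing in the cost target, and that the second-stage bound is asserted only for scenarios of size at most $k$ (for larger $\omega$ the bound $\beta T$ need not hold, which is exactly why Property~A(ii) of \lref[Definition]{defn:algo} is phrased for $D\in\binom{[n]}{k}$). All of the genuine difficulty for $k$-robust multicut is deferred to Property~B, whose proof must bound $c(\fst)$ via the R\"acke-oblivious-routing-based flow-aggregation argument sketched in the introduction.
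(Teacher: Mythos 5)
Your proof is correct and follows exactly the same route as the paper's: feasibility by casing on whether $i\in S$ (covered by $\fst$) or $i\notin S$ (covered by the min $s_i$--$t_i$ cut in $\snd(\{i\})$), and the cost bound by observing each $\snd(\{i\})$ costs at most $\beta T/k$ by definition of $S$ and summing over the at most $k$ indices of $\omega$. No gaps.
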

\begin{proof}
Pairs in $\omega\cap S$ are separated by \fst. By definition of \snd, for each pair $i\in \omega\setminus S$ edges
$\snd(\{i\})$ form an $s_i-t_i$ cut. Thus we have the first part of the claim. For the second part, note that by
definition of $S$, the cost of $\snd(\{i\})$ is at most $\beta\,T/k$ for all $i\in [m]$.
\end{proof}

\begin{theorem}[Property~B for Multicut]
  \label{th:multicut-1st}
  If $T\ge \Tstar$ then $c(\fst)\le O(\log n)\cdot  \Phistar + O(\log^{2+\epsilon} n)\cdot \Tstar$.
\end{theorem}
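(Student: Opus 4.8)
The plan is to mimic the min-cut argument, splitting the ``expensive'' pairs $S$ into \emph{low} and \emph{high} pairs according to whether their min-cut collapses by a constant factor in $H := G \setminus \Phistar$. For the low pairs I would run a Gomory--Hu / laminar-cut argument analogous to \lref[Claim]{clm:cut-low}, charging the extra cost of separating them in $H$ to $O(1) \cdot \Phistar$; here one needs the multicut analogue of the single-sink Gomory--Hu tree, so the charging will be to $O(\log n) \cdot \Phistar$ rather than $O(1)\cdot\Phistar$ because the first-stage solution \fst is built using the $O(\log n)$-approximate multicut algorithm of~\cite{GVY96}, and because aggregating pairwise cuts for multicut is itself lossy. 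For the high pairs $S_h$, each pair $\{s_i,t_i\}$ still has min $s_i$-$t_i$ cut $\gg T/k$ in $H$; the goal is to show the multicut separating all of $S_h$ in $H$ costs $O(\log^{2+\epsilon} n)\cdot \Tstar$.

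The heart of the argument is the high-pairs bound, and it will replace the single-sink flow-aggregation (Redistribution Lemma) by a \emph{multiflow} aggregation statement proved via R\"acke's oblivious routing~\cite{R08}. Concretely: suppose for contradiction that the min multicut for $S_h$ in $H$ exceeds $\tfrac\beta2 \Tstar$. By LP duality this gives a large fractional multicommodity flow among the pairs in $S_h$ (total value $\gg k \cdot \tfrac{\beta}{\rho}\cdot\tfrac{\Tstar}{k}$ after accounting for the $\rho = O(\log n)$ oblivious-routing loss). I would then use oblivious routing to reroute this flow onto the R\"acke tree (or an $O(\log n)$-competitive tree family), where flows aggregate cleanly, and apply a randomized-rounding / bucketing argument in the tree — pick a random sub-collection of $\le k$ pairs, in a way that their combined tree-flow, and hence their combined flow in $H$, is $> \Tstar$. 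This contradicts the fact that OPT separates every $k$-set of pairs in $H$ with cost $\le \Tstar$. The $\log^{2+\epsilon} n$ (rather than $\log^2 n$) in the bound, and the $\frac{16\log n}{\epsilon\log\log n}$ factor in $\beta$, come from needing the bucketing over $O(\log n / \log\log n)$ flow-value scales to succeed simultaneously with a union bound, which costs an extra $\log n/\log\log n$ on top of the two $O(\log n)$ factors from \cite{GVY96} and from $\rho$.

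The main obstacle I anticipate is the multiflow aggregation step: unlike the single-sink case, there is no single target to collect flow at, so ``picking $k$ pairs with large mutual flow'' is genuinely more delicate — one must argue that after restricting to a random $k$-subset, the surviving flow among those pairs (not flow leaking to discarded pairs) is still $\Omega(\Tstar)$. Routing through the R\"acke tree is what makes this tractable, since in a tree the flow between a set of pairs decomposes along edges in a controlled way and randomized rounding of which pairs to keep behaves well; the price is the oblivious-routing factor $\rho$ and an extra logarithmic factor from handling the distribution of flow values across scales. A secondary technical point is making the low/high split interact correctly with the fact that \fst is only an approximate multicut on $S$, not on $S\setminus S_h$; I would handle this by first bounding the optimal multicut of $S$ in $G$ by $O(\Phistar + \Tstar\cdot\text{poly}\log n)$ and only then invoking the $O(\log n)$-approximation of \cite{GVY96}.
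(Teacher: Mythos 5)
Your high-level plan matches the paper's: split $S$ into pairs whose min-cut collapsed in $H = G \setminus \Phistar$ (``low'') and those that did not (``high''), charge the low pairs to $\Phistar$ via a cut-equivalent-tree argument, bound the fractional multicut of the high pairs using R\"acke's oblivious routing templates plus randomized rounding to get a contradiction with a $k$-set, and finally round the combined fractional multicut with~\cite{GVY96}. That skeleton is correct. However, several of the mechanisms you describe are not how the pieces actually interlock.

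First, the Gomory--Hu step for low pairs is \emph{not} lossy for multicut: the paper's \lref[Lemma]{lem:multicut:GH} builds a cut-equivalent tree on $H$, contracts large edges and dead clusters, and shows an integral multicut on the low pairs of cost at most $2\Phistar$. So the $O(\log n)\cdot\Phistar$ term comes solely from the GVY rounding at the end, not from any aggregation loss in the low-pairs argument as you suggest. Second, the extra $\log^\epsilon n$ in the $\Tstar$ coefficient does not come from a union bound over $O(\log n/\log\log n)$ flow-value scales. The randomized rounding keeps each pair with probability $y_i/\alpha$ where $\alpha = e\rho\log^\epsilon n$, and the role of $\alpha$ is simply to scale down the per-edge expected congestion by $\rho/\alpha$ so that the multiplicative Chernoff bound $(e/(1+\delta))^{\mu(1+\delta)}$ with $\mu(1+\delta) = \beta/4\rho$ evaluates to $n^{-4}$; the $\beta = \rho\cdot\frac{16\log n}{\epsilon\log\log n}$ and $\alpha$ are tuned jointly to make that exponent exactly $4\log n$. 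There is a single union bound over the $n^2$ edges, not over scales. Third, your worry about ``flow leaking to discarded pairs'' in the multiflow aggregation is unfounded: each R\"acke template $\f_i$ routes only between $s_i$ and $t_i$, so dropping a pair simply removes its template. The only issue is capacity feasibility of the superposed surviving templates, which is exactly what the Chernoff bound controls. Once feasibility holds, any $k$ of the surviving pairs (each carrying $\Tstar/k$) already witnesses a multicommodity flow of value $\Tstar$, yielding the contradiction directly --- no delicate ``surviving flow among kept pairs'' argument is needed. With these corrections your outline becomes essentially the paper's proof.
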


To prove the theorem, the high level approach is similar to that for $k$-robust min-cut. We first show in
\lref[Lemma]{lem:multicut:GH} that the subset of pairs $\widetilde{S} \sse S$ whose min-cut fell substantially on
deleting the edges in $\Phistar$ can actually be completely separated by paying $O(1)\Phistar$. This is based on a
careful charging argument on the Gomory-Hu tree and generalizes the~\cite{GGR06} lemma from min-cut to multicut. Then
in \lref[Lemma]{lem:multicut-high} we show that the  remaining pairs in $S \setminus \widetilde{S}$ can be {\em
fractionally} separated at cost $O(\log^{1+\epsilon} n)\,\Tstar$. This uses the dual-rounding approach combined with
R\"{a}cke's oblivious routing scheme~\cite{R08}. Finally since the~\cite{GVY96} algorithm for multicut is relative to
the LP, this would imply \lref[Theorem]{th:multicut-1st}.

Let us begin by formally defining the cast of characters.  Let $H:=G\setminus \Phistar$ and
$M:=\beta\cdot\frac\Tstar{k}$. Define,
$$\ts \widetilde{S}:=\left\{i\in
  S\mid \mbox{ min cost $s_i$-$t_i$ cut in $H$ is less than
  }\frac{M}4\right\}$$ to be the set of pairs whose mincut in $G$ was at
least $M$, but has fallen to at most $M/4$ in $H = G \setminus \Phistar$.

\begin{lemma}
  \label{lem:multicut:GH}
  If $T\ge \Tstar$, there is a multicut separating pairs $\widetilde{S}$ in
  graph $H$ which has cost at most $2\,\Phistar$.
\end{lemma}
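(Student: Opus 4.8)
Lemma~\ref{lem:multicut:GH} should follow the template of \lref[Claim]{clm:cut-low} (from~\cite{GGR06}), lifted from a single sink to demand pairs. First I would build a Gomory--Hu (cut-equivalent) tree $\tf(H)$ of $H$, and call a tree edge \emph{light} if its weight is below $M/4$. Fix a pair $i\in\widetilde S$: since its minimum $s_i$--$t_i$ cut in $H$ is less than $M/4$, the lightest edge $f_i$ on the $s_i$--$t_i$ path of $\tf(H)$ is light; let $D_i$ be the subtree hanging below $f_i$, so that $D_i$ contains exactly one of $s_i,t_i$ and $\partial_H(D_i)$ is an $s_i$--$t_i$ cut in $H$ of cost $c(\partial_H(D_i))=w(f_i)<M/4$. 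The crucial point is that this small $H$-cut is dwarfed by the $\Phistar$-edges crossing it: because $i\in S$ and $T\ge\Tstar$, the minimum $s_i$--$t_i$ cut in $G$ is at least $\beta T/k\ge\beta\Tstar/k=M$, hence $c(\partial_G(D_i))\ge M$, and therefore $c(\partial_{\Phistar}(D_i))=c(\partial_G(D_i))-c(\partial_H(D_i))>\tfrac34M>3\,c(\partial_H(D_i))$.

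The next step is to extract a sub-collection $\mathcal C\subseteq\{D_i: i\in\widetilde S\}$ that still separates every pair in $\widetilde S$ and whose members overlap little on $\Phistar$, say $\sum_{D\in\mathcal C}c(\partial_{\Phistar}(D))\le 6\Phistar$ (it suffices, for instance, that the sets in $\mathcal C$ be pairwise disjoint, since then each edge of $\Phistar$ lies on the boundary of at most two of them). Given such a $\mathcal C$, I would output the multicut $F:=\bigcup_{D\in\mathcal C}\partial_H(D)$, which separates all of $\widetilde S$ by construction, and bound
\[ c(F)\le\sum_{D\in\mathcal C}c(\partial_H(D))<\tfrac13\sum_{D\in\mathcal C}c(\partial_{\Phistar}(D))\le\tfrac13\cdot 6\Phistar=2\Phistar, \]
which is exactly the claimed bound. (The factor-$4$ gap between the $G$-mincut, $\ge M$, and the $H$-mincut, $<M/4$, of a pair in $\widetilde S$ --- versus the factor-$2$ gap that suffices in the min-cut argument --- is what provides the slack needed in this selection step.)

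The heart of the proof, and where the multicut case genuinely departs from min-cut, is producing this $\mathcal C$. In \lref[Claim]{clm:cut-low} every demand pairs a terminal with the fixed root $r$, so rooting $\tf(H)$ at $r$ and keeping the tree-cuts \emph{closest to $r$} automatically gives a disjoint family that still severs every terminal from $r$; with genuine source--sink pairs this fails, since a maximal subtree-cut can contain \emph{both} endpoints of a pair whose own cut was nested inside it, leaving that pair unseparated. I would attack this by rooting $\tf(H)$ arbitrarily, scanning the laminar family $\{D_i\}$ from the leaves upward, committing to $D_i$ only when pair $i$ is not yet separated by the already-committed cuts, uncrossing each newly committed cut against any committed cut it strictly contains, and then arguing by induction on the laminar order that the committed family remains an antichain and separates all of $\widetilde S$; an alternative is to take an inclusion-minimal set $\mathcal E'$ of light tree edges whose deletion separates $\widetilde S$ and use, for each $f\in\mathcal E'$, the private witness pair guaranteed by minimality together with a path-packing argument inside $\tf(H)$ to show that no edge of $\Phistar$ has more than a constant number of $\mathcal E'$-edges on its tree-path. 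Making one of these selection/charging arguments actually go through --- or, failing a perfect antichain, settling for bounded rather than zero overlap on $\Phistar$ --- is, I expect, the main obstacle.
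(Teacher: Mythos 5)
You correctly identify the template (Gomory--Hu tree, light edges of weight below $M/4$, charge the multicut cost to $\Phistar$ via the gap between the $G$-mincut $\geq M$ and the $H$-mincut $< M/4$), and you correctly diagnose the obstacle: the ``cuts closest to the root'' selection from the single-sink case has no analogue here, since a larger subtree-cut can swallow both endpoints of a pair whose own cut is nested inside it. But both of your proposed fixes --- the sweep-and-uncross and the inclusion-minimal light-edge set with path-packing --- are left as speculative sketches, and you yourself flag completing either as ``the main obstacle.'' That is a genuine gap, not a detail to be filled in.

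The paper closes it by abandoning the attempt to select a disjoint (or laminar-antichain) family of per-pair cuts. Instead it contracts the Gomory--Hu tree: merge the two endpoints of any heavy tree edge (weight $>M/4$), and absorb every \emph{dead} cluster (one not containing exactly one endpoint of some pair in $\widetilde S$) into a neighbour. The result is a cut-equivalent tree $Q$ whose clusters partition $V$ into \emph{active} clusters joined only by light edges, and the multicut is $\mathcal{D}=\bigcup_{N\in\n(Q)}\partial_H(N)=\bigcup_{e\in E(Q)} D_e$ --- a union of \emph{overlapping}, not pairwise-disjoint, subtree cuts. Feasibility follows because a light edge persists on every $s_i$--$t_i$ path throughout the contraction, so no pair in $\widetilde S$ gets glued together. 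The cost bound does not rest on disjointness of the $D_e$'s but on counting clusters: every active cluster $N$ has $c(\partial_G(N))\ge M$; a cluster of tree-degree $\le 2$ (at least half of all clusters in any tree) has $c(\partial_H(N))\le 2\cdot M/4 = M/2$, hence $c(\partial_{\Phistar}(N))\ge M/2$; and since the clusters partition $V$, summing these charges hits each $\Phistar$-edge at most twice, yielding $|\n(Q)|\cdot M\le 2|\n_2(Q)|\cdot M\le 8\Phistar$. Then $c(\mathcal{D})\le |E(Q)|\cdot \frac{M}{4} < |\n(Q)|\cdot \frac{M}{4}\le 2\Phistar$. The pivot from ``select a disjoint family of per-pair cuts'' to ``contract heavy edges and dead clusters, then count degree-$\le 2$ clusters in the resulting partition'' is precisely the idea your sketch is missing.
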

\begin{proof}
  We work with graph $H=(V,F)$ with edge-costs $c:F\rightarrow
  \mathbb{R}$. A {\em cluster} refers to any subset of vertices. A {\em
    cut equivalent tree} (c.f.~\cite{4Bill-book}), $P = (\n(P), E(P))$ is an
  edge-weighted tree on clusters $\n(P) = \{N_j\}_{j=1}^r$ such that:
  \begin{OneLiners}
  \item the clusters $\{N_j\}_{j=1}^r$ form a partition of $V$, and
  \item for any edge $e\in E(P)$, its weight in $P$ equals the $c$-cost
    of the cut corresponding to deleting this edge in $P$. I.e., if
    $(S_e, S_e^c)$ is the partition of $V$ obtained by unioning the
    vertices in the clusters belonging to the two connected components
    of $P \setminus \{e\}$, then $e$'s weight in $P$ equals $c(\delta(S_e)) = c(\delta(S_e^c))$.
  \end{OneLiners}

The \emph{Gomory-Hu} tree $P_{GH} = (V, E(P_{GH}))$ of $H$ is a
  cut-equivalent tree where the clusters are singleton vertices, and
  which has the additional property that for every $u,v\in V$ the
  minimum $u$-$v$ cut in $P_{GH}$ equals the minimum $u$-$v$ cut in
  $H$. For any cut-equivalent tree, a cluster $N\sse V$ is called
  {\em active} if there is some $i\in \widetilde{S}$ such that $|N\cap
  \{s_i,t_i\}|=1$; otherwise the cluster $N$ is called {\em dead}.
  We obtain a cut-equivalent tree $Q$ from $P_{GH}$ by repeatedly performing one of the following modifications:
  (1)~ for each edge having weight greater than $\frac{M}4$, merge the clusters corresponding to its end points; and
  (2)~ for each dead cluster, merge it with any of its neighboring clusters.
Note   that in the resulting tree $Q$, every edge in $E(Q)$ has weight at most  $\frac{M}4$, and every cluster in
$\n(Q)$ is active. Let $\mathcal{D}:=\bigcup_{N\in\n(Q)} \partial_H(N)$. In the next two claims we show that
$\mathcal{D}$ is a feasible multicut for $\widetilde S$ with cost at most $2\,\Phistar$.

\begin{Myquote}
\begin{claim}
   \label{cl:mcut-firststage-feas}
$\mathcal{D}$ is a feasible multicut separating pairs $\widetilde S$ in $H$.
\end{claim}
\begin{proof}
Clearly for each pair $i\in \widetilde S$, vertices $s_i$ and $t_i$ are in distinct active clusters of the Gomory-Hu
tree $P_{GH}$. Additionally there is some edge of weight less that $\frac{M}4$ on the $s_i-t_i$ path in $P_{GH}$: since
the minimum $s_i-t_i$ cut in $H$ is less than $\frac{M}4$. Observe that in obtaining tree $Q$ from $P_{GH}$, we never
contract two active clusters nor an edge of weight less that $\frac{M}4$. Thus $s_i$ and $t_i$ lie in distinct clusters
of $Q$. Since this holds for all $i\in \widetilde S$, the claim follows by definition of $\mathcal{D}$.
\end{proof}

\begin{claim}
   \label{cl:mcut-firststage-cost}
   The cost $c(\mathcal{D}) = \sum_{e\in \mathcal{D}} c_e \le 2\,\Phistar$, if $T\ge \Tstar$.
\end{claim}
\begin{proof}
Consider any cluster $N\in \n(Q)$. Since all clusters in $\n(Q)$ are active, $N$ contains exactly one of $\{s_i,t_i\}$
for some $i \in \widetilde S$. Hence the cut  $\partial_G(N)$ (in graph $G$) has cost at least $\beta\cdot \frac{T}k\ge
\beta\cdot \frac{\Tstar}k = M$, by definition of the set $S\supseteq \widetilde S$.

Let $\n_2(Q)\sse \n(Q)$ denote all clusters in $Q$ having degree at most two in $Q$. Note that $|\n_2(Q)|\ge \frac12
|\n(Q)|$. Using the above observation and the fact that clusters in $\n_2(Q)$ are disjoint, we have
\begin{equation}\label{eq:mcut-1st-gh} |\n_2(Q)|\, M\le \sum_{N\in\n_2(Q)} c(\partial_G(N)) = \sum_{N\in\n_2(Q)} \left( c(\partial_H(N)) +
c(\partial_{\Phistar}(N)) \right) \le \sum_{N\in\n_2(Q)}  c(\partial_H(N)) + 2 \Phistar.\end{equation}

We now claim that for any $N\in\n_2(Q)$, the cost $c(\partial_H(N))\le \frac{M}2$. Let $e_1$ and $e_2$ denote the two
edges incident to cluster $N$ in $Q$ (the case of a single  edge is easier). Let $(U_l,V\setminus U_l)$ denote the cut
corresponding to edge $e_l$ (for $l=1,2$) where $N\sse U_l$. Each of these cuts has cost $c(\partial_H(U_l))\le
\frac{M}4$ by property of cut-equivalent tree $Q$, and their union $\partial_H(U_1)\bigcup \partial_H(U_2)$ is the cut
separating $N$ from $V \setminus N$. Hence  it follows that   $c(\partial_H(N))\le 2\cdot\frac{M}4=\frac{M}2$. Using
this in~\eqref{eq:mcut-1st-gh} and simplifying, we obtain  $|\n(Q)|\, M\le 2\cdot |\n_2(Q)|\, M \le 8\,\Phistar$.

For each edge $e\in E(Q)$, let $D_e\sse F$ denote the edges in graph $H$ that go across the two components of
$Q\setminus \{e\}$. By the property of cut-equivalent tree $Q$, we have $c(D_e)\le \frac{M}4$. Since $\mathcal{D} =
\bigcup_{e\in E(Q)} D_e$,
$$c(\mathcal{D})\le \sum_{e\in E(Q)} c(D_e)\le |E(Q)|\, \frac{M}4\le |\n(Q)|\, \frac{M}4\le 2\Phistar$$
This proves the claim. \end{proof}
\end{Myquote}

Combining \lref[Claims]{cl:mcut-firststage-feas} and~\ref{cl:mcut-firststage-cost}, we obtain the lemma.
\end{proof}

Now we turn our attention to the remaining pairs $W:=S\setminus \widetilde{S}$, and show that there is a cheap cut
separating them in $H$. For this we use a dual-rounding argument, based on R\"{a}cke's oblivious routing scheme. Recall
that constant $0<\epsilon<\frac12$, $\rho =O(\log n)$ (R\"{a}cke's approximation factor), and $\beta = \rho\cdot
\frac{16\log n}{\epsilon \log\log n}$. Define $\alpha := e\rho\cdot \log^{\epsilon} n$.

\begin{lemma}
  \label{lem:multicut-high}
  There exists a \emph{fractional} multicut separating pairs $W$ in the
  graph $H$ which has cost $8\alpha\cdot\Tstar$.
\end{lemma}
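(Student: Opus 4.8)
The plan is to prove the bound by contradiction: assuming that no fractional multicut of $W:=S\setminus\widetilde{S}$ in $H$ of cost at most $8\alpha\Tstar$ exists, I will use LP duality to extract a large multicommodity flow and then round it---with the aid of R\"{a}cke's oblivious routing scheme~\cite{R08}---into a feasible flow in $H$ supported on at most $k$ of the pairs of $W$ and of value $>\Tstar$. Since the minimum multicut separating any set of pairs is at least the maximum multicommodity flow between them, this contradicts the fact that \opt's second stage is a multicut in $H$ of cost at most $\Tstar$ for every $k$-scenario. First I would record the two facts that drive the argument: by the definition of $\widetilde{S}$, every $i\in W$ has $\mathrm{mincut}_H(s_i,t_i)\ge M/4$; and taking any $k$-scenario containing $i$ and applying \opt shows $\mathrm{mincut}_H(s_i,t_i)\le\Tstar$ (so in particular $M/4\le\Tstar$ and $k\ge\beta/4$). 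Now suppose the minimum fractional multicut of $W$ in $H$ exceeds $8\alpha\Tstar$. By multicut LP duality (the fractional multicut LP is dual to maximum multicommodity flow) there is a feasible (congestion $\le1$) multicommodity flow $f$ among the pairs of $W$ in $H$; rescaling, assume $F:=\sum_{i\in W}f_i=8\alpha\Tstar$, where $f_i\le\mathrm{mincut}_H(s_i,t_i)\le\Tstar$ denotes the $s_i$--$t_i$ flow value.

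Split $W=W_{\mathrm{big}}\cup W_{\mathrm{small}}$ with $W_{\mathrm{big}}=\{i:f_i\ge F/(2k)\}$; then $|W_{\mathrm{big}}|\le 2k$ since $\sum f_i=F$. If $\sum_{i\in W_{\mathrm{big}}}f_i\ge F/2$, then the $\min\{k,|W_{\mathrm{big}}|\}$ pairs of $W_{\mathrm{big}}$ with the largest $f_i$ carry a subflow of $f$ (hence feasible in $H$) of value at least $F/4>\Tstar$, and padding this set of at most $k$ pairs to size exactly $k$ gives the contradiction immediately. So assume instead $\sum_{i\in W_{\mathrm{small}}}f_i\ge F/2$; this is the substantive case. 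Apply R\"{a}cke's oblivious routing scheme to $H$: it produces, for each pair $i$, a fixed $s_i$--$t_i$ unit flow $g_i$ such that for every demand vector $d\ge0$ the aggregate $\sum_i d_ig_i$ has congestion at most $\rho\cdot\mathrm{optcong}(d)$. Applying this to the single-pair demand ``send $\mathrm{mincut}_H(s_i,t_i)$ units across $(s_i,t_i)$'', whose optimal congestion is $1$, yields the pointwise bound $g_i(e)\le\rho\,c_e/\mathrm{mincut}_H(s_i,t_i)\le 4\rho\,c_e/M$ on every edge $e$.

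Now I would randomly round the flow. Put each $i\in W_{\mathrm{small}}$ into $W'$ independently with probability $p_i:=2kf_i/F$ (which is $<1$ by the definition of $W_{\mathrm{small}}$), and for each $i\in W'$ route $q_i/\Lambda$ units of flow along $g_i$, where $q_i:=f_i/(\rho p_i)=F/(2\rho k)$ is a constant and $\Lambda=\Theta(\log^{\e}n)$. There are three things to verify. (i)~$\mathbb{E}[|W'|]=\sum p_i\in[k,2k]$, and since $k\ge\beta/4=\Omega(\log n/\log\log n)$ is large, a Chernoff bound gives $|W'|\ge k/2$ with high probability; discard pairs if necessary to reach at most $k$. (ii)~On each edge $e$, the total routed flow $X_e$ is a sum of independent terms with $\mathbb{E}[X_e]\le\frac{1}{\rho\Lambda}\sum_{i\in W}f_ig_i(e)\le c_e/\Lambda$ (the last inequality is obliviousness, since $\{f_i\}$ routes with congestion $1$), each term being at most $\frac{q_i}{\Lambda}g_i(e)\le\frac{2Fc_e}{kM\Lambda}=\frac{16\alpha}{\beta\Lambda}c_e=\Theta(\log\log n/\log n)\cdot c_e$; this last estimate is precisely where the choice $\beta=\rho\cdot\frac{16\log n}{\e\log\log n}$ is used. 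A Chernoff bound per edge plus a union bound over the edges of $H$ then shows that with high probability $X_e\le c_e$ for every $e$, i.e.\ the flow is feasible in $H$. (iii)~Restricting this feasible flow to $k/2$ pairs of $W'$, each carrying $q_i/\Lambda=F/(2\rho k\Lambda)$ units, leaves a feasible multicommodity flow among at most $k$ pairs of value at least $\frac{k}{2}\cdot\frac{F}{2\rho k\Lambda}=\frac{F}{4\rho\Lambda}=\frac{2e\log^{\e}n}{\Lambda}\Tstar$, which is $>\Tstar$ once the constant inside $\Lambda=\Theta(\log^{\e}n)$ is fixed appropriately (here one uses $\alpha=e\rho\log^{\e}n$). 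Hence the maximum multicommodity flow, and therefore the minimum multicut, separating these at most $k$ pairs in $H$ is $>\Tstar$---contradicting \opt. This proves that the required fractional multicut of cost at most $8\alpha\Tstar$ exists.

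The hard part is making steps (ii) and (iii) compatible. A large final flow value pushes $M=\beta\Tstar/k$ to be small, whereas controlling each random summand $\tfrac{q_i}{\Lambda}g_i(e)$ as a $\tilde O(1/\log n)$ fraction of $c_e$---so that per-edge Chernoff concentration survives a union bound over all of $H$'s edges---pushes $M$ to be large, and the oblivious-routing estimate $g_i(e)\le 4\rho c_e/M$ is the only control we have over an individual $g_i$. Balancing these forces $\beta=\Theta(\rho\log n/\log\log n)$ and leaves only a $\log^{\e}n$ cushion, which is exactly why the cost is $8\alpha\Tstar=O(\log^{1+\e}n)\cdot\Tstar$ rather than $O(\log n)\cdot\Tstar$. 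The remaining friction---reconciling ``$\mathbb{E}[|W'|]\le2k$'' with the requirement of at most $k$ pairs and with the fact that scenarios have cardinality exactly $k$---is absorbed by the same cushion (subsample down to $k$ pairs, and pad up to $k$, which only increases the multicut).
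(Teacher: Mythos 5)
Your overall plan matches the paper's proof: assume the LP value is large, extract a multicommodity flow, and randomly round it using R\"{a}cke's oblivious routing templates, bounding congestion by a per-edge Chernoff bound plus a union bound over the $O(n^2)$ edges of $H$, thereby producing a $k$-set with flow (hence multicut) exceeding $\Tstar$. Your $W_{\mathrm{big}}/W_{\mathrm{small}}$ split is a fine substitute for the paper's device of copying vertex pairs until each $y_i \in [0,1]$; both serve to keep the selection probabilities below one. The use of the min-cut lower bound $M/4$ to control the per-edge contribution $g_i(e)\le 4\rho c_e/M$ of each oblivious template, and the observation that $k\ge\beta/4$ so that sums over $W'$ concentrate, are exactly the paper's ingredients.

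However, the specific parameter accounting in the $W_{\mathrm{small}}$ case does not close. Writing $\Lambda = Ce\log^{\e}n$, your per-pair flow is $q/\Lambda = F/(2\rho k\Lambda)$, so the flow carried by $k/2$ pairs is $F/(4\rho\Lambda)=2\Tstar/C$, which exceeds $\Tstar$ only if $C<2$. On the other hand the per-term cap is $\tfrac{16\alpha}{\beta\Lambda}c_e$, the scaled mean is $\tfrac{\beta}{16\alpha}$, and the deviation factor is $\Lambda$, so the Chernoff bound gives roughly $\Pr[X_e>c_e]\le (e/\Lambda)^{\Lambda\beta/(16\alpha)}=n^{-C-o(1)}$; surviving the union bound over $\Theta(n^2)$ edges then requires $C\ge 2$ asymptotically. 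These two constraints are incompatible at $k/2$ pairs. The margin you correctly identify as a ``$\log^{\e}n$ cushion'' is actually eaten entirely by the factor-of-two slack in ``$|W'|\ge k/2$''. The fix is available to you: since $k\ge\beta/4 = \Omega(\log^2 n/\log\log n)$ is large and $\mathbb{E}[|W'|]\ge k$, a Chernoff lower tail gives $|W'|\ge(1-o(1))k$ with probability $1-n^{-\omega(1)}$, so keeping $\min(|W'|,k)$ pairs yields flow $(1-o(1))\cdot F/(2\rho\Lambda) = (1-o(1))\cdot 4\Tstar/C$, which exceeds $\Tstar$ for any $C<4$; choosing, say, $C=3$ then also gives the union bound ($n^{-3-o(1)}\cdot n^2 \to 0$). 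The paper avoids this tension differently, by sampling each pair with probability $y_i/\alpha$ so that the expected number selected is $>8k$ rather than $[k,2k]$, routing exactly $\Tstar/k$ per selected pair (rather than your $4\Tstar/k$), which shrinks the per-term Chernoff cap by a factor of $4$ and directly yields $n^{-4}$ per edge. Either calibration works, but as written your $k/2$ accounting leaves no room for a consistent choice of $\Lambda$.
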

\begin{proof}
For  any demand vector $d:W\rightarrow \mathbb{R}_+$, the optimal {\em
    congestion} of routing $d$ in $H$, denoted $\cg(d)$, is the smallest
  $\eta\ge 0$ such that there is a flow routing $d_i$ units of flow
  between $s_i$ and $t_i$ (for each $i\in W$), using capacity at most
  $\eta\cdot c_e$ on each edge $e\in H$.   Note that for every $i\in W$, the $s_i$-$t_i$ min-cut in $H$ has cost
  at least $L:= \frac{M}{4} = \frac\beta{4}\cdot \frac\Tstar{k}$. Hence for any $i\in W$, the optimal congestion
  for a \emph{unit} demand between $s_i$-$t_i$ (and zero between all
  other pairs) is at most $\frac1L$.

  Now consider R\"{a}cke's oblivious routing scheme~\cite{R08} as
  applied to graph $H$. This routing scheme, for each $i\in W$,
  prescribes a unit flow $\f_i$ between $s_i$-$t_i$ such that for every
  demand vector $d:W\rightarrow \mathbb{R}_+$,
  $$ \max_{e\in H} \frac{\sum_{i\in W} d_i\cdot \f_i(e)}{c_e}\le
  \rho\cdot \cg(d),\qquad \mbox{where $\rho=O(\log n)$};$$ i.e., the
  congestion achieved by using these oblivious templates to route the
  demand $d$ is at most $\rho$ times the best congestion possible for
  that particular demand $d$.

  Now consider a maximum multicommodity flow in $H$; suppose  that it sends $y_i\cdot
  \frac\Tstar{k}$ units between $s_i,t_i$ for each $i\in W$. For a
  contradiction, suppose that $\sum_{i\in W} y_i> 8\alpha \cdot k$.
  (Otherwise the maximum multicommodity flow, and hence its dual, the
  minimum fractional multicut is at most $8\alpha \Tstar$, and the lemma holds.) By making copies of vertex-pairs, we
  may assume that $y_i\in [0,1]$ for all $i\in W$; this does not change
  the $k$-robust multicut instance. Define a (not necessarily feasible)
  multicommodity flow $\g:= \sum_{i\in W} X_i\cdot \frac\Tstar{k}\cdot
  \f_i$, where each $X_i$ is an independent 0-1 random variable with
  $\Pr[X_i=1]=\frac{y_i}{\alpha}$, and $\f_i$ is the R\"{a}cke oblivious
  routing template. The flow has expected magnitude at least $\sum_i
  \frac{y_i}{\alpha} \frac\Tstar{k} \geq 8\Tstar$, and is the sum of
  $\{0, \frac{\Tstar}{k}\}$-valued random variables, hence by a Chernoff
  bound:

  \begin{Myquote}
  \begin{claim}
    With constant probability, the magnitude of flow $\g$ is at least $\Tstar$.
  \end{claim}

  \begin{claim}
    \label{cl:multicut-rand}
    The flow $\g$ is feasible with
    probability $1-o(1)$.
  \end{claim}

  \begin{proof}
    Fix any edge $e\in H$, and let $u_i(e) := \frac\Tstar{k}\cdot
    \f_i(e)$ for all $i\in W$. Note that the random process gives us a
    flow of $\sum_i X_i \cdot u_i(e)$ on the edge $e$.  The feasibility of the
    maximum multicommodity flow says that $\cg(\{y_i\cdot
    \frac\Tstar{k}\}_{i\in W})\le 1$. Since oblivious routing loses only a
    $\rho$ factor in the congestion,
    $$\ts \sum_{i\in W} y_i\cdot u_i(e) = \sum_{i\in W} y_i\cdot
    \frac\Tstar{k}\cdot \f_i(e) \le \rho\cdot c_e;$$ and the expected
    flow on edge $e$ sent by the random process above is $\sum_{i \in W}
    \frac{y_i}{\alpha}\cdot u_i(e) \leq \frac\rho\alpha  c_e$.

    Now, since the min $s_i$-$t_i$-cut is at least $L$ for any $i\in W$,
    a unit of flow can (non-obliviously) be sent between $s_i,t_i$ at
    congestion at most $\frac1L$. Hence using the oblivious routing
    template $\f_i$ incurs a congestion at most $\frac{\rho}L$. Hence,
    $$\ts u_i(e) = \frac\Tstar{k}\cdot \f_i(e) \le  \frac\Tstar{k}\cdot
    \frac{\rho}L\cdot c_e = \frac{4\rho}{\beta}\cdot c_e$$


    We divide the individual contributions by the edge
    capacity and further scale up by $\frac\beta{4\rho}$ by defining new $[0,1]$-random
    variables $Y_i = \frac{X_i \cdot u_i(e)}{c_e} \cdot \frac\beta{4\rho}$. We get
    that $\mu:= E[\sum Y_i] \leq \frac{\beta}{4\alpha}$. Recall
    the Chernoff bound that says that for independent $[0,1]$-valued
    random variables $Y_i$,
    $$\Pr\left[\sum Y_i \geq (1+\delta)\cdot \mu\right]
    \leq \left(\frac{e}{1+\delta}\right)^{\mu(1+\delta)}$$
Using this with $\mu(1+\delta) = \frac{\beta}{4\rho}$ (hence $\delta+1\ge \frac{\alpha}{\rho}$) we
    get that
\[ \Pr\left[\sum_i X_i \cdot u_i(e) \geq c_e\right] = \Pr\left[\sum_i Y_i \geq
    \frac{\beta}{4\rho}\right] \leq \left( \frac{e\rho}{\alpha}\right)^{\beta/4\rho} = \exp\left(-\epsilon \log\log n\cdot \frac{4\log n}{\epsilon \log\log n}\right) = \frac{1}{n^4}, \]
since $\alpha = e\rho\cdot \log^\epsilon n$ and $\beta=\rho\cdot \frac{16\log n}{\epsilon \log\log n}$.   Now a trivial
union   bound over all $n^2$ edges gives the claim.
  \end{proof}
\end{Myquote}

  By another union bound, it follows that there exists a feasible
  multicommodity flow $\g$ that sends either zero or $\frac\Tstar{k}$
  units for each pair $i\in W$, and the total value of $\g$ is at least
  $\Tstar$. Hence there exists some $k$-set $W'\sse W$ such that the
  maximum multicommodity flow for $W'$ on $H$ is at least $\Tstar$. This
  contradicts the fact that every $k$-set has a multicut of cost less
  than $\Tstar$ in $H$. Thus we must have $\sum_{i\in W} y_i\le
  8\alpha \cdot k$, which implies \lref[Lemma]{lem:multicut-high}.
\end{proof}

Combining \lref[Lemmas]{lem:multicut:GH} and~\ref{lem:multicut-high},  we obtain a {\em fractional} multicut for pairs
$S$ in graph $G$, having cost $O(1)\cdot \Phistar+ O(\log^{1+\epsilon} n)\cdot \Tstar$. Since the Garg et
al.~\cite{GVY96} algorithm for multicut is an $O(\log n)$-approximation relative to the LP, we obtain
\lref[Theorem]{th:multicut-1st}.

From \lref[Claim]{cl:multicut-2nd} and \lref[Theorem]{th:multicut-1st}, it follows that this algorithm is $O\left(\log
n, \log^{2+\epsilon}n,\beta\right)$-discriminating for $k$-robust multicut. Since $\beta=O(\log^2n/\log\log n)$, using
\lref[Lemma]{lem:apx}, we obtain an approximation ratio of:
$$\max\left\{\log n, \frac{\log^2n}{\log\log n} + \frac{\log^{2+\epsilon}n}{\lambda}\right\}.$$

This is an $O\left(\frac{\log^2n}{\log \log n}\right)$-approximation when $\lambda\ge \log^{2\epsilon}n$. On the other
hand, when $\lambda\le \log^{2\epsilon}n$, we can use the trivial algorithm of buying all edges in the second stage
(using the GVY algorithm~\cite{GVY96}); this implies an $O(\log^{1+2\epsilon}n)$-approximation. Since
$\log^{1+2\epsilon}n=o\left(\frac{\log^2n}{\log\log n}\right)$, we obtain:

\begin{theorem}
  There is an $O\left(\frac{\log^2n}{\log\log n}\right)$-approximation
  algorithm for $k$-robust multicut.
\end{theorem}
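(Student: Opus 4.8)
The plan is to invoke \lref[Lemma]{lem:apx}, so that it suffices to exhibit a discriminating algorithm with good parameters and then optimize over the inflation $\lambda$; the discriminating algorithm will be the thresholded Algorithm for $k$-Robust MultiCut displayed above. \lref[Claim]{cl:multicut-2nd} already supplies Property~A, with augmentation cost at most $\beta \cdot T$ where $\beta = \rho \cdot \frac{16 \log n}{\eps \log\log n} = O(\log^2 n / \log\log n)$; and \lref[Theorem]{th:multicut-1st} supplies Property~B, namely $c(\fst) \le O(\log n) \cdot \Phistar + O(\log^{2+\eps} n) \cdot \Tstar$. Hence the algorithm is $(O(\log n),\, O(\log^{2+\eps} n),\, O(\log^2 n / \log\log n))$-discriminating for every constant $\eps \in (0, \tfrac12)$.

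Next I would plug these parameters into \lref[Lemma]{lem:apx}: up to a $(1+\eps')$ factor the resulting approximation ratio is $\max\{\alpha_1,\ \beta + \alpha_2/\lambda\} = O\!\left( \max\{ \log n,\ \frac{\log^2 n}{\log\log n} + \frac{\log^{2+\eps} n}{\lambda} \} \right)$. The only term here that is not automatically $O(\log^2 n / \log\log n)$ is $\log^{2+\eps} n / \lambda$, which is large precisely when $\lambda$ is small, so I would do a case split on $\lambda$ at the crossover $\lambda_0 := \log^{2\eps} n$. When $\lambda \ge \lambda_0$ we have $\log^{2+\eps} n / \lambda \le \log^{2-\eps} n = o(\log^2 n / \log\log n)$, so the discriminating-algorithm route already achieves the claimed bound.

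For the complementary regime $\lambda < \lambda_0$, the plan is to run, in parallel, the trivial two-stage algorithm --- buy nothing in stage one, and when a scenario $\omega$ with $|\omega| \le k$ arrives, buy in stage two the $O(\log n)$-approximate multicut of Garg--Vazirani--Yannakakis~\cite{GVY96} for $\omega$ --- and return whichever of the two solutions is cheaper. For any $k$-set $\omega$, the union of the stage-one and stage-two edges of the optimal robust solution separates all pairs in $\omega$, so the optimal offline multicut for $\omega$ costs at most $\Phistar + \Tstar \le \Phistar + \lambda \Tstar = \opt$; hence the trivial algorithm costs at most $\lambda \cdot O(\log n) \cdot \opt \le O(\log^{1+2\eps} n) \cdot \opt$, which for $\eps < \tfrac12$ is also $o(\log^2 n / \log\log n)$. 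Taking the better of the two algorithms thus gives an $O(\log^2 n / \log\log n)$-approximation for every $\lambda$; fixing any constant $\eps$ (say $\eps = \tfrac14$) finishes the proof.

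The genuinely hard content all lies upstream, in Property~B (\lref[Theorem]{th:multicut-1st}) --- the Gomory--Hu charging of \lref[Lemma]{lem:multicut:GH} and the oblivious-routing-based dual rounding of \lref[Lemma]{lem:multicut-high} --- which I am entitled to assume at this point. What remains, and the only place requiring care, is the two-sided bookkeeping of the $\lambda$-case split: one must verify that \emph{both} the residual term $\log^{2+\eps} n / \lambda$ coming out of \lref[Lemma]{lem:apx} (for large $\lambda$) and the cost $\lambda \cdot O(\log n)$ of the trivial algorithm (for small $\lambda$) are dominated by $\beta = O(\log^2 n / \log\log n)$, with the choice $\lambda_0 = \log^{2\eps} n$ making the two regimes overlap and the constraint $\eps < \tfrac12$ being exactly what makes $\log^{1+2\eps} n$ subsumed by $\beta$. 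I do not expect any obstacle beyond this routine calculation.
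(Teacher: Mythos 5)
Your proposal is correct and follows the paper's argument essentially verbatim: the same appeal to \lref[Lemma]{lem:apx} with the $(O(\log n),\, O(\log^{2+\eps}n),\, O(\log^2 n/\log\log n))$-discriminating algorithm, the same crossover $\lambda_0 = \log^{2\eps} n$, and the same fallback to the GVY-based trivial second-stage algorithm for small $\lambda$. Nothing to add.
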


\textbf{The $k$-max-min Multicut Problem.} The above ideas also lead to a $\left(c_1\cdot\log n,\, c_2\cdot\log^2 n,\,
c_3\cdot\log^2 n\right)$ strongly discriminating algorithm for multicut, where $c_1,c_2,c_3$ are large enough
constants. The algorithm is exactly Algorithm~5 with parameter $\beta:= \Theta(\log n)\cdot \rho$ with an appropriate
constant factor; recall that $\rho=O(\log n)$ is the approximation ratio for oblivious routing~\cite{R08}.
\lref[Lemma]{lem:multicut-high} shows that this algorithm is $\left(c_1\cdot\log n, c_2\cdot\log^2 n, c_3\cdot\log^2
n\right)$ discriminating (the parameters are only slightly different and the analysis still applies). To establish the
property in \lref[Definition]{def:strong-disc}, consider the case $\lambda=1$ (i.e. $\Phistar=0$) and $c(\fst)\ge
(c_2\log^2 n)\cdot T$. Since the~\cite{GVY96} algorithm is $O(\log n)$-approximate relative to the LP, this implies a
feasible multicommodity flow on pairs $W$ (since $\Phistar=0$ we also have $W=S$) of value at least $(c_4\,\log n)\cdot
T$ for some constant $c_4$. Then the randomized rounding (with oblivious routing) can be used to produce a $k$-set
$W'\sse W$ and a feasible multicommodity flow on $W'$ of value at least $T$; by weak duality it follows that the
minimum multicut on $W'$ is at least $T$ and so \lref[Definition]{def:strong-disc} holds. Thus by
\lref[Lemma]{lem:apxstrong} we get a randomized $O(\log^2n)$-approximation algorithm for $k$-max-min multicut.

\medskip\textbf{All-or-Nothing Multicommodity Flow.} As a possible use of the oblivious routing and randomized-rounding
based approach, let us state a result for the all-or-nothing multicommodity flow problem studied by Chekuri et
al.~\cite{CKS04}: given a capacitated undirected graph $G = (V,E)$ and source-sink pairs $\{s_i, t_i\}$ with demands
$d_i$ such that the min-cut$(s_i, t_i) = \Omega(\log^2 n) d_i$, one can approximate the maximum throughput to within an
$O(\log n)$ factor without violating the edge-capacities, even with $d_{\max} \geq c_{\min}$---the results of Chekuri
et al.~\cite{CKS04, CKS05} violated the edge-capacities in this case by an additive $d_{\max}$. This capacity violation
in the previous all-or-nothing results is precisely the reason they can not be directly used in our analysis of
$k$-robust multicut.

\medskip\textbf{Summarizing Properties from Dual Rounding.} The proofs for all problems
considered so far (set cover, minimum cut, multicut) used certain dual rounding arguments. We now summarize the
resulting properties in a self-contained form.
\begin{theorem}
  \label{lem:DR-set-cover}
Consider any instance of set cover; let $B\in\mathbb{R}_+$ and  $k\in\mathbb{Z}_+$ be values such that
  \begin{OneLiners}
  \item the set of minimum cost covering any element costs at least $36\,\ln m\cdot \frac{B}k$, and
  \item the minimum cost of covering any $k$-subset of elements is at most $B$.
  \end{OneLiners}
  Then the minimum cost of covering {\bf all} elements is at most
  $O(\log n)\cdot B$.
\end{theorem}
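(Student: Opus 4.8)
The plan is to recognize that Theorem~\ref{lem:DR-set-cover} is precisely the engine behind Theorem~\ref{th:main-sc} and Claim~\ref{cl:main}, specialized to the case $\Phistar=0$ with the bound $B$ playing the role of $\Tstar$ and the threshold set to $T=B$; so I would re-run that dual-rounding argument in self-contained form. First note that the second hypothesis forces every element to be coverable (any singleton is a $k$-subset), so the claimed bound is meaningful. Restrict attention to the subfamily $\fprime\sse\F$ of sets that contain at least one element; the first hypothesis says every $R\in\fprime$ has $c_R\ge 36\ln m\cdot B/k\ge 6B/k$.

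Next I would coarsen the costs: set $\newc_R:=\ceil{c_R k/(6B)}$ for $R\in\fprime$. Because $c_R\ge 6B/k$, one gets $\newc_R\cdot(6B/k)\in[c_R,2c_R)$ and $\newc_R\ge 6\ln m$. Write the fractional set-cover LP for covering \emph{all} elements with these integer costs $\newc$, together with its packing dual $\max\{\sum_e y_e : \sum_{e\in R}y_e\le\newc_R\ \forall R\in\fprime,\ y\ge 0\}$, and let $x,y$ be optimal primal and dual solutions, of common value $V:=\sum_R\newc_R x_R=\sum_e y_e$.

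The heart of the argument is to show $V\le 2k$. Suppose not, say $V\in((\gamma-1)k,\gamma k]$ for an integer $\gamma\ge 3$. Randomly round the dual by $Y_e:=\floor{y_e}+I_e$ with independent $I_e\sim\mathrm{Bernoulli}(y_e-\floor{y_e})$. For each $R\in\fprime$, since $\sum_{e\in R}\floor{y_e}\le\newc_R$, a Chernoff bound on $\sum_{e\in R}I_e$ (mean at most $\newc_R$) gives $\Pr[\sum_{e\in R}Y_e>3\newc_R]\le e^{-\newc_R/3}\le e^{-2\ln m}=m^{-2}$; a union bound over the $m$ sets shows $Y_e/3$ is dual-feasible with probability $\ge 1-1/m$. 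Since $\mathbb{E}[\sum_e Y_e]=\sum_e y_e\ge(\gamma-1)k$, a lower-tail Chernoff bound gives $\sum_e Y_e\ge\tfrac{\gamma-1}{2}k$ with constant probability, so both events hold simultaneously with positive probability. In such an outcome, $Y'_e:=Y_e/3$ is a feasible dual with $\sum_e Y'_e\ge\tfrac{\gamma-1}{6}k$ and every nonzero $Y'_e\ge\tfrac13$. Letting $Q$ be the $k$ elements of largest $Y'$-value, the dual mass on $Q$ is at least $\min\{\tfrac{\gamma-1}{6}k,\tfrac{k}{3}\}=\tfrac{k}{3}$ (using $\gamma\ge 3$); by weak duality, fractionally covering $Q$ costs $\ge k/3$ in the $\newc$-costs, hence $>\tfrac{6B}{k}\cdot\tfrac{k}{3}=2B$ in the true costs $c$ — contradicting the second hypothesis. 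Hence $\gamma\le 2$ and $V\le 2k$.

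Finally, $\sum_R c_R x_R\le\tfrac{6B}{k}\sum_R\newc_R x_R=\tfrac{6B}{k}V\le 12B$, so there is a fractional cover of all elements of cost $\le 12B$; running greedy set cover (an $H_n$-approximation relative to the LP value) rounds this to an integral cover of cost $\le 12H_n\cdot B=O(\log n)\cdot B$, as claimed. The one genuinely delicate step is the dual rounding: the $\tfrac13$ scaling has to survive the union bound over all $m$ sets — which is exactly why the first hypothesis carries the $36\ln m$ factor, forcing $\newc_R\ge 6\ln m$ — while the rounded objective loses only a constant factor, and the near-integrality of $Y'$ is what lets us extract a $k$-element witness $Q$ whose optimal covering cost exceeds $B$.
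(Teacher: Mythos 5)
Your proposal reproduces the paper's argument almost verbatim: it is exactly the dual-rounding proof of Theorem~\ref{th:main-sc}~/~Claim~\ref{cl:main}, specialized to $\Phistar=0$ with $B$ playing the role of $\Tstar$, and the bookkeeping (coarsening to $\newc_R=\ceil{c_R k/(6B)}$, the Bernoulli rounding of the dual, the Chernoff/union bound that needs $\newc_R\ge 6\ln m$, the extraction of a near-integral dual and the top-$k$ witness $Q$, then weak duality) matches the paper step for step.

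One small slip in the final conversion from coarse to true costs: since $\newc_R\cdot\frac{6B}{k}\in[c_R,2c_R)$, the correct lower bound is $c_R>\frac{3B}{k}\newc_R$, not $c_R\ge\frac{6B}{k}\newc_R$ (the latter is the \emph{upper} bound direction). So the witness set $Q$ has optimal $c$-cost strictly greater than $\frac{3B}{k}\cdot\frac{k}{3}=B$, not $2B$. This still contradicts the hypothesis that every $k$-subset costs at most $B$, so the proof goes through, but the intermediate quantity you wrote is off by a factor of~$2$.
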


\begin{theorem}
  \label{lem:DR-min-cut}
Consider any instance of minimum cut in an $n$-vertex undirected graph with root $r$ and terminals $X$; let
$B\in\mathbb{R}_+$ and $k\in\mathbb{Z}_+$ be values such that
  \begin{OneLiners}
  \item the minimum cut separating $r$ and $u$ costs at least $10\cdot \frac{B}k$, for each terminal $u\in X$.
  \item the minimum cut separating $r$ and $S$ costs at most $B$, for every $k$-set $S\in {X \choose k}$.
  \end{OneLiners}
Then the minimum cut separating $r$ and all terminals $X$ costs at most $O(1)\cdot B$.
\end{theorem}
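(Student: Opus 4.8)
The plan is to run essentially the ``cutting high nodes'' argument from the proof of \lref[Claim]{clm:cut-high}, which is in fact cleaner in this setting: there is no first-stage solution here, so there is no ``low''/``high'' split to make---hypothesis~1 already guarantees that \emph{every} terminal has large min-cut to the root in the given graph. Set $M := 10\cdot\frac{B}{k}$. Hypothesis~1 gives $\text{min-cut}(r,u)\ge M$ for every $u\in X$; and since the root lies on one side of any cut separating two terminals $x,y\in X$, that cut also separates $r$ from one of them, so $\text{min-cut}(x,y)\ge M$ for all $x,y\in X$ as well.

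Next I would look at a maximum flow from $r$ to $X$ (regarding $X$ as a single super-sink), say delivering $\alpha_i\cdot M$ units to each terminal $i$. By making copies of terminals---which preserves both hypotheses and does not change the min-cut of any subset---we may assume WLOG that $\alpha_i\in(0,1]$. If $\sum_i\alpha_i\le k$, then by max-flow/min-cut the minimum cut separating $r$ from all of $X$ costs $\sum_i\alpha_i M\le kM = 10B = O(B)$, and we are done. So assume for contradiction that $\sum_i\alpha_i>k$; I will derive a $k$-set violating hypothesis~2.

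Apply the Redistribution Lemma (\lref[Lemma]{lem:redistr}) to the given graph with all capacities scaled down by $M$, terminal set $X$, values $\epsilon_i:=\alpha_i$, and $\ell:=k$: the scaled graph has $\text{min-cut}(x,y)\ge 1$ for all $x,y\in X$ and $\sum_i\epsilon_i>k=\ell$, so the hypotheses hold. The lemma returns a set $W\sse X$ with $|W|\le k$ together with (rescaling back up by $M$) a feasible flow $\flow$ in the given graph from $X$ to $W$ that carries at least $\frac{1-e^{-1}}{4}\,kM$ units into $W$ while pulling at most $\frac{\alpha_i}{4}\,M$ out of each terminal $i\in X$. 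Let $g$ denote the max-flow above, which deposits $\alpha_i M$ at each $i\in X$. Then $\frac{g+4\flow}{5}$ is feasible on every edge (each of $g$ and $\flow$ is feasible, so $4\flow$ has congestion at most $4$); the component $4\flow$ removes at most $\alpha_i M$ from each terminal---precisely what $g$ deposits there---and forwards at least $(1-e^{-1})kM$ units into $W$, so $\frac{g+4\flow}{5}$ routes at least $\frac{1-e^{-1}}{5}\,kM = 2(1-e^{-1})\,B > B$ units from $r$ into $W$ (here $2(1-e^{-1})>1$, which is exactly why the constant $10$ appears in hypothesis~1). Thus $\text{min-cut}(r,W)>B$ with $|W|\le k$, contradicting hypothesis~2; hence $\sum_i\alpha_i\le k$, and by the previous paragraph $\text{min-cut}(r,X)\le 10B$.

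I expect the only genuinely delicate points to be (i) phrasing the ``make copies'' normalization so that hypothesis~2 is manifestly preserved, and (ii) verifying that $\frac{g+4\flow}{5}$ actually routes the claimed quantity of flow from $r$ all the way to $W$ (rather than merely being a feasible circulation); this is the same bookkeeping carried out in \lref[Claim]{clm:cut-high}, where $g$ supplies the terminals and $4\flow$ forwards that supply into $W$. All quantitative content is inherited from the Redistribution Lemma, which is available to us, and the undirectedness of the graph is used only through that lemma.
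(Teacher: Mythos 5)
Your proof is correct and is essentially the same argument the paper uses: Theorem~\ref{lem:DR-min-cut} is stated as a ``self-contained summary'' of the machinery developed for $k$-robust min-cut, and its proof is exactly the Cutting-High-Nodes argument (\lref[Claim]{clm:cut-high}) specialized to $\Phistar = 0$, which is what you carry out. You correctly observe that the low/high split disappears, set up the max-flow $g$ with normalized $\alpha_i\in(0,1]$ via terminal copies, invoke the Redistribution Lemma on the $M$-scaled graph, and combine $g$ and $\flow$ as $(g+4\flow)/5$ to drive $\Omega(kM)$ flow from $r$ into $W$; the arithmetic $2(1-1/e)>1$ is the right check for why the constant $10$ suffices (it plays the role of $\beta/2$, and $\beta=20>10e/(e-1)$ is what \lref[Claim]{clm:cut-high} requires). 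The two delicate points you flag at the end are the right ones, and are handled exactly as in the paper.
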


\begin{theorem}
  \label{lem:DR-multicut}
Consider any instance of multicut in an $n$-vertex undirected graph with source-sink pairs $\{s_i,t_i\}_{i\in [m]}$;
let $B\in\mathbb{R}_+$ and $k\in\mathbb{Z}_+$ be values such that
  \begin{OneLiners}
  \item the minimum $s_i-t_i$ cut costs at least $c\cdot \log^2n\cdot \frac{B}k$, for each pair $i\in [m]$.
  \item the minimum multicut separating pairs in $P$ costs at most $B$, for every $k$-set $P\in {[m] \choose k}$.
  \end{OneLiners}
Then the minimum multicut separating all  pairs $[m]$ costs at most $O(\log^2 n)\cdot B$. Here $c$ is a universal
constant that is independent of the multicut instance.
\end{theorem}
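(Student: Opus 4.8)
The plan is to recognize \lref[Theorem]{lem:DR-multicut} as exactly the $\Phistar = 0$ specialization of the machinery developed in \lref[Section]{sec:robust-multicut}, with the quantity $B$ playing the role of $\Tstar$. First I would note that because every $s_i$-$t_i$ minimum cut is already at least $c\log^2 n\cdot\frac Bk$ in the graph itself, the ``low'' pair set $\widetilde S$ of \lref[Lemma]{lem:multicut:GH} is empty (there is no first-stage deletion to make any min-cut drop), so the Gomory--Hu charging step plays no role here. It therefore suffices to produce a \emph{fractional} multicut separating \emph{all} $m$ pairs of cost $O(\log n)\cdot B$ and then round it with the Garg--Vazirani--Yannakakis $O(\log n)$-approximation relative to the multicut LP~\cite{GVY96}, which costs one further $\log n$ factor and yields the claimed $O(\log^2 n)\cdot B$.

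For the fractional multicut I would rerun the argument of \lref[Lemma]{lem:multicut-high} on $G$ with the full pair set $[m]$. Let the maximum multicommodity flow route $y_i\cdot\frac Bk$ units between $s_i$ and $t_i$; if $\sum_i y_i\le O(\alpha\,k)$ for a dilution parameter $\alpha=O(\log n)$ fixed below, its dual --- the minimum fractional multicut of all pairs --- is at most $O(\alpha)\cdot B$ and we are done. Otherwise, after duplicating pairs so that each $y_i\in[0,1]$ (which changes neither hypothesis), keep each pair $i$ independently with probability $y_i/\alpha$ and route a $\Theta(\frac Bk)$ amount between its endpoints along R\"acke's oblivious template $\f_i$~\cite{R08}, forming a random multicommodity flow $\g$. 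Since $\sum_i y_i$ is large we get $E[\sum_i X_i]\ge 8k$, so $\sum_i X_i \ge k$ with high probability; and, exactly as in \lref[Claim]{cl:multicut-rand}, a Chernoff bound on each edge (oblivious routing loses only a $\rho=O(\log n)$ congestion factor, and each $\f_i$ has congestion at most $\rho/(c\log^2 n\cdot\frac Bk)$ because the $s_i$-$t_i$ min-cut is that large) followed by a union bound over the $\le n^2$ edges shows $\g$ is feasible with probability $1-o(1)$. Intersecting these events, we may restrict $\g$ to any $k$ of the active pairs to obtain a feasible multicommodity flow of value strictly more than $B$ on a $k$-set, contradicting hypothesis~(2), which says every $k$-subset has a multicut (hence a max-flow) of value at most $B$. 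Hence $\sum_i y_i=O(\alpha\,k)$ after all, so the minimum fractional multicut of all pairs is $O(\alpha)\cdot B$.

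The single place where the stronger hypothesis matters is the choice of $\alpha$. In \lref[Lemma]{lem:multicut-high} the threshold was only $\beta = \Theta(\log^2 n/\log\log n)$, which forced $\alpha=\Theta(\log^{1+\epsilon}n)$ to drive the per-edge failure probability $\approx(e\rho/\alpha)^{\Theta(\beta/\rho)}$ below $n^{-4}$; here hypothesis~(1) directly supplies min-cuts at least $c\log^2 n\cdot\frac Bk$ with $c$ a large absolute constant, so the failure probability becomes $\approx(e\rho/\alpha)^{\Theta((c/C)\log n)}$ (with $C$ the constant in $\rho=C\log n$) and is already below $n^{-4}$ when $\alpha=O(\rho)=O(\log n)$. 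This makes the fractional multicut cost $O(\log n)\cdot B$, and the GVY rounding then gives a multicut of all pairs of cost $O(\log^2 n)\cdot B$, as required.

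The main thing to be careful about is the strict-versus-weak inequality in the final contradiction: hypothesis~(2) promises only a $k$-multicut of cost \emph{at most} $B$, so the sub-flow on $k$ active pairs must have value \emph{strictly} greater than $B$. This is ensured by routing a slightly larger constant multiple of $\frac Bk$ per active pair (and demanding $E[\sum_i X_i]\ge 8k$ so that $\ge k$ pairs are active with high probability), a constant-factor adjustment absorbed into the $O(\cdot)$'s and the Chernoff constants. Apart from this, the argument is a verbatim transcription of \lref[Lemmas]{lem:multicut:GH}--\ref{lem:multicut-high} and the discussion following them, specialized to $\Phistar\mapsto 0$ and $\Tstar\mapsto B$.
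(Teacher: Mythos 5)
Your proposal is correct and is precisely the paper's intended derivation: specialize \lref[Lemma]{lem:multicut-high} to $\Phistar=0$, $\Tstar\mapsto B$ (so $\widetilde S=\emptyset$ and the Gomory--Hu step of \lref[Lemma]{lem:multicut:GH} is vacuous), raise the threshold parameter from $\beta=\Theta(\log^2n/\log\log n)$ to $\Theta(\log^2 n)$ so that the Chernoff exponent $\beta/(4\rho)=\Theta(\log n)$ lets you take $\alpha=O(\rho)=O(\log n)$, and then apply the GVY $O(\log n)$-LP-rounding. Your explicit handling of the strict-versus-weak inequality in the final contradiction is a clean fix for a point the paper glosses over.
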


Such properties rely crucially on the specific problem structure, and cannot hold for general covering problems---even
for the Steiner-tree cost function on a tree metric (which, in fact, is submodular). Consider a tree on vertices
$\{r,u\}\bigcup\{v_i\}_{i=1}^n$ with root $r$ and terminals $\{v_i\}_{i=1}^n$.  The edges set contains $(r,u)$ with
cost $k$, and for each $i\in [n]$ the edge $(u,v_i)$ with cost one. For parameter $B=2k$, the cost for connecting any
single terminal to the root is $k+1>\frac{B}2$, whereas the cost for connecting {\em any} $k$-set of terminals is
$2k=B$. If a theorem like the ones above held, we might have hoped the cost to connect all the $n$ terminals would be
$\tilde{O}(B)$; instead it is $n+k > \frac{n}{2k}\cdot B$. This is also the reason why the algorithms for Steiner tree
and Steiner forest (which appear in the next section) are slightly more involved, and their proofs rely on a
primal-dual argument instead of dual rounding.

\ignore{if the cost function is submodular. Consider cost function $f$ on groundset $[n]$ defined as $f(S) =
\min\left\{ \alpha\frac{B}{k}\cdot|S|, \,\, \alpha\frac{B}{k} + \frac{(k-\alpha) B}{k^2}\cdot |S|\right\}$ for any
$\alpha\in \left(1,\frac{k}2\right)$. Function $f$ is clearly submodular. We have (a) $f(\{i\})\ge \alpha\cdot
\frac{B}k$ for all $i\in[n]$, and (b) $f(P)\le B$ for all $P\in {[n]\choose k}$. But still $f([n])\ge \frac{n}{2k}\cdot
B$.}
\section{$k$-Robust Steiner Forest}
\label{sec:steiner-forest} \vspace{-2mm}

In $k$-robust Steiner forest, we have a graph $G=(V,E)$ with edge costs $c:E\rightarrow \mathbb{R}_+$, and a set $U
\sse V \times V$ of potential terminal pairs; any set in $\smash{\binom{U}{k}}$ is a valid scenario in the second
stage. For a set of pairs $S \sse V \times V$, the graph $G / S$ is obtained by identifying each pair in $S$ together;
$d_{G / S}(\cdot, \cdot)$ is the distance in this ``shrunk'' graph. The algorithm is given below.  This algorithm is a
bit more involved than the previous ones, despite a similar general structure: we maintain a set of ``fake'' pairs
$S_f$ that may not belong to $U$ for this case. The following analysis shows a constant-factor guarantee. (Without
\lref[lines]{sf-algo1}-\ref{sf-algo2}, the algorithm is more natural, but for that we can currently only show an
$O(\log n)$-approximation; it seems that an $O(1)$-approximation for that version would imply an $O(\log
n)$-competitiveness for online greedy Steiner forest.)

\begin{algorithm}
  \caption{Algorithm for $k$-Robust Steiner Forest}
  \begin{algorithmic}[1]
    \STATE \textbf{input:} $k$-robust Steiner forest instance and threshold $T$.
    \STATE \textbf{let} $\beta \gets \Theta(1), \gamma \gets \Theta(1)$
    such that $\gamma \leq \beta/2$.

    \STATE \textbf{let} $S_r, S_f, W \gets \emptyset$

    \WHILE{there exists a pair $(s,t) \in U$ with $d_{G / (S_r \cup
        S_f)}(s,t) > \beta\cdot \frac{T}{k}$}

    \STATE \textbf{let} $S_r \gets S_r \cup \{(s,t)\}$

    \STATE \textbf{if} $d_G(s, w) < \gamma \cdot \frac{T}{k}$ for
    some $w \in W$ \textbf{then} $S_f \gets S_f \cup \{(s,w)\}$
    \textbf{else} $W \gets W \cup \{s\}$ \label{sf-algo1}

    \STATE \textbf{if} $d_G(t, w') < \gamma \cdot \frac{T}{k}$ for
    some $w' \in W$ \textbf{then} $S_f \gets S_f \cup \{(t,w')\}$
    \textbf{else} $W \gets W \cup \{t\}$ \label{sf-algo2}

    \ENDWHILE

    \STATE \textbf{output} first stage solution \fst to be the
    2-approximate Steiner forest~\cite{AKR95,GW95} on pairs $S_r$ along with shortest-paths connecting every pair in $S_f$.

   \STATE \textbf{define} $\snd(\{i\})$ to be the edges on the $s_i-t_i$
   shortest-path in $G/(S_r\cup S_f)$, for each pair $i\in U$.

   \STATE \textbf{output} second stage solution \snd where
   $\snd(S):=\bigcup_{i\in D} \snd(\{i\})$ for all $D \sse U$.
  \end{algorithmic}
\end{algorithm}

\begin{claim}[Property~A for Steiner forest]\label{cl:card-sf-2nd}
  For all $T\ge 0$ and $D \in \binom{U}{k}$, the edges $\fst\bigcup
  \snd(D)$ connect every pair in $D$, and have cost $c(\snd(D))\le
  \beta\, T$.
\end{claim}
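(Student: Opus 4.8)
The plan is to establish the two parts of the claim — feasibility of $\fst \cup \snd(D)$ for the scenario $D$, and the bound $c(\snd(D)) \le \beta T$ — directly from the stopping condition of the \textbf{while} loop in the algorithm above, the only subtlety being the bookkeeping needed to relate contractions in $G/(S_r \cup S_f)$ to the connectivity supplied by $\fst$.

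First I would record two facts about $S_r \cup S_f$ at the moment the loop terminates. (a) The loop terminates after at most $|U|$ iterations: each iteration adds to $S_r$ a pair $(s,t) \in U$ with $d_{G/(S_r\cup S_f)}(s,t) > \beta T/k$, and once $s$ and $t$ have been identified this pair can never again satisfy the guard. (b) On termination, for \emph{every} pair $(s_i,t_i) \in U$ we have $d_{G/(S_r\cup S_f)}(s_i,t_i) \le \beta T/k$ — this is exactly the negation of the loop condition — and since any feasible instance has $d_G(s_i,t_i) < \infty$, this shortest distance is realized by an actual $s_i$–$t_i$ walk in $G/(S_r\cup S_f)$ of total edge-cost at most $\beta T/k$; its edge set is $\snd(\{i\})$.

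For part (i), the key observation is that $\fst$ ``realizes'' the contraction: $\fst$ contains a $2$-approximate Steiner forest on the pairs $S_r$ together with a shortest path joining the two endpoints of every pair in $S_f$, so in the graph $(V,\fst)$ the two endpoints of every pair of $S_r \cup S_f$ lie in a common connected component. Consequently the connected components of the auxiliary graph on vertex set $V$ with edge set $S_r \cup S_f$ — which are precisely the super-nodes of $G/(S_r\cup S_f)$ — are each connected inside $\fst$. Identifying every edge of $G/(S_r\cup S_f)$ with the corresponding edge of $G$, the walk realizing $\snd(\{i\})$ lifts to a collection of $G$-edges which, stitched together through the $\fst$-connectivity inside each super-node, forms a genuine $s_i$–$t_i$ walk in $G$. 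Taking the union over $i \in D$ shows $\fst \cup \snd(D)$ connects every pair in $D$.

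For part (ii) I would simply bound $c(\snd(D)) = c\big(\bigcup_{i\in D}\snd(\{i\})\big) \le \sum_{i\in D} c(\snd(\{i\}))$, and use fact (b) to bound each summand by $d_{G/(S_r\cup S_f)}(s_i,t_i) \le \beta T/k$; since $|D| = k$ this yields $c(\snd(D)) \le k\cdot \beta T/k = \beta T$. I do not expect any real obstacle here: the argument holds for \emph{every} $T \ge 0$ and uses none of the algorithm's subtler features (the ``fake'' pairs $S_f$ and the set $W$, which exist only to make the companion bound on $c(\fst)$ go through). The one point that demands care is the lift in part (i) — spelling out the edge identification between $G/(S_r\cup S_f)$ and $G$ and checking the walk lifts correctly — and this is routine. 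The genuine difficulty of the Steiner forest analysis lies entirely in Property~B, i.e.\ bounding $c(\fst)$ by $O(1)\cdot(\Phistar+\Tstar)$, which requires a primal-dual net-type argument rather than the dual-rounding used for set cover, min-cut and multicut.
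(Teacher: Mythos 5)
Your proposal is correct and follows essentially the same route as the paper: part (ii) is the identical telescoping bound via the loop's termination condition, and part (i) is the paper's one-line observation (``$\fst$ connects every pair in $S_r\cup S_f$, hence the contracted walk lifts'') spelled out in full detail. The extra bookkeeping you supply on the contraction lift is sound but is just the unpacking the paper leaves implicit.
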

\begin{proof}
  The first part is immediate from the definition of \snd and the fact
  that \fst connects every pair in $S_r\cup S_f$. The second part
  follows from the termination condition $d_{G / (S_r \cup
    S_f)}(s_i,t_i) \le \beta\cdot \frac{T}{k}$ for all pairs $i\in U$;
  this implies $c(\snd(D)) \le \sum_{i\in D} c(\snd(\{i\})) \le
  \sum_{i\in D} d_{G / (S_r \cup S_f)}(s_i,t_i) \le \frac{|D|}k\cdot
  \beta\, T$.
\end{proof}

\begin{lemma}
  \label{lem:witness}
  The optimal value of the Steiner forest on pairs $S_r$ is at least
  $|W| \times \frac{\gamma}{2} \frac{T}{k}$.
\end{lemma}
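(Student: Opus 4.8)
The plan is to lower-bound $\opt$ for the Steiner-forest instance on $S_r$ by exhibiting a feasible solution of value $|W|\cdot\frac{\gamma}{2}\cdot\frac{T}{k}$ to the standard moat-packing LP (the dual of the cut-covering relaxation of Steiner forest); since that LP relaxes the integral problem, weak duality then yields the lemma. Two structural facts come out of the algorithm. First, a vertex is put into $W$ (lines~\ref{sf-algo1}--\ref{sf-algo2}) only when it is at $G$-distance at least $\gamma\cdot\frac{T}{k}$ from every vertex already in $W$; since $W$ only grows, any two distinct $w,w'\in W$ satisfy $d_G(w,w')\ge\gamma\cdot\frac{T}{k}$. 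Second, each $w\in W$ entered $W$ together with a pair $(s,t)$ just added to $S_r$, with $w\in\{s,t\}$; writing $\bar w$ for the other endpoint, the while-guard gives $d_G(w,\bar w)\ge d_{G/(S_r\cup S_f)}(w,\bar w)>\beta\cdot\frac{T}{k}>\frac{\gamma}{2}\cdot\frac{T}{k}$.

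For the dual I would use the classical growing-moats construction around the centers in $W$: for each $w\in W$ and each radius $r\in(0,\frac{\gamma}{2}\frac{T}{k}]$, place unit density on the ball $B_G(w,r)=\{v:d_G(v,w)<r\}$. Each such ball contains $w$ but excludes $\bar w$ (as $d_G(w,\bar w)>\frac{\gamma}{2}\frac{T}{k}\ge r$), so it is a valid moat for the $S_r$-instance, separating the pair $(w,\bar w)\in S_r$. The total value of this dual is $\sum_{w\in W}\frac{\gamma}{2}\frac{T}{k}=|W|\cdot\frac{\gamma}{2}\cdot\frac{T}{k}$, so everything reduces to checking feasibility: for each edge $e=\{a,b\}$, the total density on moats $S$ with $e\in\delta(S)$ (edges with exactly one endpoint in $S$) should be at most $c_e$.

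The feasibility check is where the argument has to be done carefully, and it is the step I expect to be the main obstacle. From a single center $w$, the edge $e$ crosses $B_G(w,r)$ only for $r$ strictly between $d_G(w,a)$ and $d_G(w,b)$, an interval of length at most $d_G(a,b)\le c_e$, so one center alone never overloads $e$. The radius-$\frac{\gamma}{2}\frac{T}{k}$ balls around distinct $W$-vertices are pairwise disjoint (their centers are $\ge\gamma\frac{T}{k}$ apart), so $a$ lies in at most one of them and so does $b$; hence at most two centers $w,w'$ load $e$'s constraint, and if both do then necessarily (up to renaming) $a$ lies in $w$'s moats and $b$ in $w'$'s. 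In that case $d_G(w,a)<\frac{\gamma}{2}\frac{T}{k}\le d_G(w,b)$ and $d_G(w',b)<\frac{\gamma}{2}\frac{T}{k}\le d_G(w',a)$, so the combined density on $e$ equals $\big(\frac{\gamma}{2}\frac{T}{k}-d_G(w,a)\big)+\big(\frac{\gamma}{2}\frac{T}{k}-d_G(w',b)\big)=\gamma\frac{T}{k}-d_G(w,a)-d_G(w',b)$, and since $\gamma\frac{T}{k}\le d_G(w,w')\le d_G(w,a)+d_G(a,b)+d_G(b,w')$, this is at most $d_G(a,b)\le c_e$. Thus the moat solution is feasible, and weak duality gives $\opt(S_r)\ge|W|\cdot\frac{\gamma}{2}\cdot\frac{T}{k}$. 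The delicate point -- and the reason the radius is capped at exactly $\frac{\gamma}{2}\frac{T}{k}$ -- is this two-moat collision on a possibly long edge, where the slack needed for feasibility is supplied for free by the pairwise $\gamma\frac{T}{k}$ separation that lines~\ref{sf-algo1}--\ref{sf-algo2} enforce on $W$.
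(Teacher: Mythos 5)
Your proposal is correct and follows essentially the same route as the paper: pack moats of radius $\frac{\gamma}{2}\frac{T}{k}$ around the $W$-vertices, note each such ball is a valid moat because it separates the pair that caused its center to enter $W$ (their endpoints being $>\beta\frac{T}{k}\ge 2\gamma\frac{T}{k}$ apart), and invoke weak duality against the Steiner-forest cut LP. The only difference is that the paper asserts feasibility directly from the pairwise $\gamma\frac{T}{k}$-separation of $W$, whereas you carry out the per-edge accounting (at most two centers load an edge; triangle inequality closes the gap); that verification is the right justification for what the paper's ``Hence we can feasibly pack'' is silently invoking.
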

\begin{proof}
  Consider the primal (covering) and dual (packing) LPs corresponding to
  Steiner forest on $S_r$. Note that for each pair $i\in S_r$, the distance
  $d_G(s_i,t_i)\ge \beta\cdot \frac{T}k\ge 2\gamma\cdot \frac{T}k$; so
  any ball of radius $\frac\gamma{2} \cdot \frac{T}k$ around a vertex in
  $S_r$ may be used in the dual packing problem since it separates some pair in $S_r$.  Observe that $W$
  consists of only vertices from $S_r$, and each time we add a vertex to
  $W$, it is at least $\gamma T/k$ distant from any other vertex in $W$.
  Hence we can feasibly pack dual balls of radius $\frac{\gamma}2\cdot
  \frac{ T}k$ around each $W$-vertex. This is a feasible dual to
  the Steiner forest instance on $S_r$, of value $|W| \gamma/2 \cdot
  T/k$. The lemma now follows by weak duality.
\end{proof}

\begin{lemma}
  \label{lem:num-fake}
  The number of ``witnesses'' $|W|$ is at least the number of
  ``real'' pairs $|S_r|$, and $|S_r|$ is at least the number of ``fake'' pairs $|S_f|$.
\end{lemma}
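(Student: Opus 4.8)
The plan is to reduce the lemma to the single inequality $|W| \geq |S_r|$. Each iteration of the while loop adds exactly one pair to $S_r$, and processes its two endpoints $s$ and $t$, adding for each of them \emph{either} a new vertex to $W$ \emph{or} a new pair to $S_f$. Hence at termination $|W| + |S_f| = 2|S_r|$, so $|S_f| \leq |S_r|$ is equivalent to $|W| \geq |S_r|$, and it suffices to prove the latter. Let $a_j \in \{0,1,2\}$ denote the number of endpoints promoted to $W$ in iteration $j$; since $|S_r|$ equals the number of iterations, $|W| - |S_r| = \sum_j (a_j - 1) = \#\{j : a_j = 2\} - \#\{j : a_j = 0\}$, so the goal is to show that ``bad'' iterations (with $a_j = 0$) are no more numerous than ``good'' ones (with $a_j = 2$).

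First I would record two easy facts. (i) Distinct vertices of $W$ are at $G$-distance at least $\gamma\cdot\frac{T}{k}$, because a vertex enters $W$ only after failing the ``$d_G(\cdot, w) < \gamma\cdot\frac{T}{k}$'' test against every $w$ already in $W$, and $W$ only grows. (ii) Within an iteration the second endpoint $t$ is never fake-paired to the first endpoint $s$: when $(s,t)$ was picked we had $d_{G/(S_r\cup S_f)}(s,t) > \beta\cdot\frac{T}{k}$, and contracting pairs only decreases distances, so $d_G(s,t) \geq d_{G/(S_r\cup S_f)}(s,t) > \beta\cdot\frac{T}{k} \geq \gamma\cdot\frac{T}{k}$. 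Consequently, in any bad iteration both endpoints are fake-paired to vertices $w, w'$ that \emph{already belonged to $W$} at the start of the iteration.

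The heart of the argument is the following claim about a bad iteration that adds $(s,t)$ to $S_r$ and fake pairs $(s,w),(t,w')$ to $S_f$: the vertices $w$ and $w'$ lie in \emph{different} connected components of the auxiliary (multi)graph on $V$ with edge set $S_r \cup S_f$ as it stands at the start of the iteration. Indeed, if they were in the same component they would be identified in $G/(S_r\cup S_f)$, and then --- using the two fake-pairing tests and $\gamma \leq \beta/2$ ---
\[
d_{G/(S_r\cup S_f)}(s,t) \;\leq\; d_G(s,w) + 0 + d_G(w',t) \;<\; 2\gamma\cdot\frac{T}{k} \;\leq\; \beta\cdot\frac{T}{k},
\]
contradicting the while-loop condition under which $(s,t)$ was selected. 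Thus the bad iteration merges the two distinct components of $w$ and $w'$, each of which contains a $W$-vertex, into a single component.

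To finish, I would track $p_j$, the number of connected components of the auxiliary graph (edge set $S_r \cup S_f$ after iteration $j$) that contain at least one vertex of $W$; note $p_0 = 0$ and $p_j \geq 0$ always. A short case analysis on $a_j$ --- keeping careful track of the order in which $s$ and then $t$ are processed --- shows that the net change of $p$ in iteration $j$ is at most $+1$ if $a_j = 2$ (a fresh $W$-component is born only when the newly-merged component of $s$ previously had no $W$-vertex, and $t$ then lands in that same component), at most $0$ if $a_j = 1$ (any momentarily-created $W$-component gets absorbed once the fake-paired endpoint is attached to a pre-existing $W$-vertex), and at most $-1$ if $a_j = 0$ by the claim above. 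Summing over all iterations, $0 \leq p_{\text{final}} \leq \#\{j : a_j = 2\} - \#\{j : a_j = 0\} = |W| - |S_r|$, which gives $|W| \geq |S_r|$ and hence $|S_f| \leq |S_r|$. The main obstacle is precisely the bad case $a_j = 0$, which on its face costs a unit of the $|W|$-versus-$|S_r|$ surplus; the resolution is to show each such iteration must destroy a $W$-component while $W$-components are created at a rate of at most one per good iteration --- everything else is bookkeeping around the within-iteration processing order.
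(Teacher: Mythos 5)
Your argument is correct, but it takes a genuinely different route from the paper's. The paper constructs a static auxiliary graph $H$ on vertex set $W$ with exactly one edge per real pair (the edge joining the $W$-representatives of the two endpoints, in the cases $S_g, S_o, S_b$), proves $H$ is acyclic via the same distance contradiction you use (if two $W$-vertices were in the same component of $H$ then the corresponding $s,t$ pair would already be within $2\gamma T/k \leq \beta T/k$ in the contracted graph), and then simply counts: $|S_r| = |E(H)| \leq |W| - 1$ and $|S_f| = 2|S_r| - |W|$. You instead run an amortized argument, tracking the number $p_j$ of connected components of the auxiliary graph on $V$ with edge set $S_r \cup S_f$ that contain a $W$-vertex, and showing the per-iteration change is $+1$, $\leq 0$, or $\leq -1$ according to $a_j \in \{2,1,0\}$. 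The key step --- that a bad iteration must merge two distinct $W$-components --- is exactly the paper's acyclicity argument in disguise, but you deploy it dynamically rather than statically. Your bound $|W| \geq |S_r|$ is one weaker than the paper's $|W| \geq |S_r| + 1$, but the lemma only needs the weaker version. One small point worth making explicit in your case analysis: when an endpoint is newly promoted to $W$, it cannot have appeared as an endpoint of any earlier pair in $S_r \cup S_f$ (if it had been fake-paired earlier it would still be within $\gamma T/k$ of a $W$-vertex now), so its component in the auxiliary graph is a singleton with no $W$-vertex --- this is what makes the $a_j = 2$ and $a_j = 1$ bookkeeping come out to $+1$ and $\leq 0$ respectively. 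Overall, the paper's forest argument is a bit more economical; your potential-function argument buys a more ``local'' per-iteration view of the same structure.
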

\begin{proof}
  Partition the set $S_r$ as follows: $S_g$ are the pairs where both
  end-points are added to $W$, $S_o$ are the pairs where exactly one
  end-point is added to $W$, and $S_b$ are the pairs where neither
  end-point is added to $W$. It follows that $|S_r|=|S_g|+|S_o|+|S_b|$
  and $|W|=2\cdot |S_g|+|S_o|$.

  Consider an auxiliary graph $H = (W, E(W))$ on the vertex set $W$
  which is constructed incrementally:
  \begin{OneLiners}
  \item When a pair $(s,t)\in S_g$ is added, vertices $s,t$ are added to
    $W$, and edge $(s,t)$ is added to $E(W)$.
  \item Suppose a pair $(s,t)\in S_o$ is added, where $s$ is added to
    $W$, but $t$ is not because it is ``blocked'' by $w' \in W$.  In
    this case, vertex $s$ is added, and edge $(s,w')$ is added to
    $E(W)$.
  \item Suppose a pair $(s,t)\in S_b$ is added, where $s$ and $t$ are
    ``blocked'' by $w$ and $w'$ respectively. In this case, no vertex is
    added, but an edge $(w,w')$ is added to $E(W)$.
  \end{OneLiners}

  \begin{Myquote}
    \begin{claim}
      \label{clm:st-one}
      At any point in the algorithm if $x,y \in W$ lie in the same
      component of $H$ then $d_{G / (S_f \cup S_r)}(x,y) = 0$.
    \end{claim}

    \begin{proof}
      By induction on the algorithm, and the construction of the graph $H$.
      \begin{OneLiners}
      \item Suppose pair $(s,t)\in S_g$ is added, then the claim is
        immediate. $H$ has one new connected component $\{s,t\}$ and
        others are unchanged. Since $(s,t)\in S_r$, $d_{G / (S_f \cup
          S_r)}(s,t) = 0$ and the invariant holds.
      \item Suppose pair $(s,t)\in S_o$ is added, with $s$ added to $W$
        and $t$ blocked by $w'\in W$. In this case, the component of $H$
        containing $w'$ grows to also contain $s$; other components are
        unchanged. Furthermore $(t,w')$ is added to $S_f$ and $(s,t)$ to
        $S_r$, which implies $d_{G / (S_f \cup S_r)}(s,w') = 0$. So the
        invariant continues to hold.
      \item Suppose pair $(s,t)\in S_b$ is added, with $s$ and $t$
        blocked by $w,w'\in W$ respectively. In this case, the
        components containing $w$ and $w'$ get merged; others are
        unchanged. Also $(s,w),(t,w')$ are added to $S_f$ and $(s,t)$ to
        $S_r$; so $d_{G / (S_f \cup S_r)}(w,w') = 0$, and the invariant
        continues to hold.
      \end{OneLiners}
      Since these are the only three cases, this proves the claim.
    \end{proof}

    \begin{claim}
      \label{clm:st-two}
      The auxiliary graph $H$ does not contain a cycle when $\gamma \leq
      \beta/2$
    \end{claim}

    \begin{proof}
      For a contradiction, consider the first edge $(x,y)$ that when added to $H$
      by the process above creates a cycle.  Let $(s,t)$ be the pair
      that caused this edge to be added, and consider the situation just
      before $(s,t)$ is added to $S_r$.  Since $(x,y)$ causes a cycle,
      $x,y$ belong to the same component of $H$, and hence $d_{G / (S_f
        \cup S_r)}(x,y) = 0$ by the claim above. But since $x$ is either
      $s$ or its ``blocker'' $w$, and $y$ is either $t$ or its blocker
      $w'$, it follows that $d_{G / (S_f \cup S_r)}(s,t) < 2\gamma \cdot
      \frac{T}{k}\le \beta \cdot \frac{T}{k}$. But this
      contradicts the condition which would cause $(s,t)$ to be chosen into $S_r$
      by the algorithm.
    \end{proof}
  \end{Myquote}

  Now for some counting. Consider graph $H$ at the end of the algorithm:
  $W$ denotes its vertices, and $E$ its edges.  From the construction of
  $H$, we obtain $|W|=2\cdot |S_g|+|S_o|$ and
  $|E|=|S_g|+|S_o|+|S_b|=|S_r|$.  Since $H$ is acyclic, $|S_r|=|E|\le
  |W|-1$. Also note that $|S_f|=2\cdot |S_b|+|S_o|=2\cdot
  |S_r|-|W|<|S_r|$. Thus we have $|W|\ge |S_r|\ge |S_f|$ as required in the lemma.
\end{proof}

\begin{theorem}[Property~B for Steiner forest]
  \label{th:card-sf-1st}
  Let $\Phistar$ denote the optimal first stage solution (and its cost),
  and $\Tstar$ the optimal second stage cost. If $T\ge \Tstar$ then
  $c(\fst)\le \frac{4\gamma}{\gamma-2}\cdot (\Phistar+\Tstar)$.
\end{theorem}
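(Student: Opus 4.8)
The first step is routine. By construction $\fst$ consists of a $2$-approximate Steiner forest on the ``real'' pairs $S_r$ together with shortest paths joining the ``fake'' pairs $S_f$, each of the latter having length less than $\gamma T/k$; writing $\mathsf{sf}(\cdot)$ for the optimal Steiner-forest cost of a set of pairs in $G$, this gives $c(\fst)\le 2\,\mathsf{sf}(S_r)+|S_f|\cdot \gamma T/k$. By \lref[Lemma]{lem:num-fake} we have $|S_f|\le |W|$, and by \lref[Lemma]{lem:witness} we have $\mathsf{sf}(S_r)\ge \tfrac{\gamma}{2}\tfrac{T}{k}\,|W|$; combining these, $|S_f|\cdot \gamma T/k\le 2\,\mathsf{sf}(S_r)$, so $c(\fst)\le 4\,\mathsf{sf}(S_r)$. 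Since $\tfrac{4\gamma}{\gamma-2}=4\cdot\tfrac{\gamma}{\gamma-2}$, it therefore suffices to prove the single inequality $\mathsf{sf}(S_r)\le \tfrac{\gamma}{\gamma-2}(\Phistar+\Tstar)$.

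To prove this I would build an explicit Steiner forest for $S_r$ out of the optimal two-stage solution, treating the first-stage edge set $\Phistar$ as a \emph{reusable backbone} rather than as a number. The one fact I can take off the shelf is that any $D\subseteq S_r$ with $|D|\le k$ is (contained in) a valid scenario, so $\Phistar$ together with the optimal second-stage edges for that scenario connects all of $D$ at cost at most $\Phistar+\Tstar$; equivalently, in $G/\Phistar$ every such $D$ has a Steiner forest of cost $\le \Tstar$, and in particular $d_{G/\Phistar}(s_i,t_i)\le \Tstar$ for every pair in $S_r$. The naive plan of splitting $S_r$ into $\lceil |S_r|/k\rceil$ groups of size $\le k$ and paying $\Tstar$ per group fails once $|S_r|>k$, and $|S_r|$ genuinely can exceed $k$; this is precisely why $\Phistar$ has to be recycled as an edge set and why the fake-pair/witness bookkeeping is needed.

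The next step of the plan is to group the pairs of $S_r$ according to the connected components of the auxiliary forest $H$ of \lref[Lemma]{lem:num-fake}: within each component $C$ all of its $W$-vertices $W_C$ collapse together under the pair-contractions, and every pair of $S_r$ has each endpoint within $\gamma T/k$ of a $W$-vertex lying in one component. Hence it is enough to connect, for each component $C$, the set $W_C$ in $G$ (and then reattach the true endpoints using the $S_f$-shortest-paths, of total length $<|S_f|\,\gamma T/k$). To connect $W_C$ I would use the backbone $\Phistar$ together with a bounded budget of new edges drawn from the second-stage solutions of suitable $\le k$-subsets of $S_r$, and then charge: backbone edges once to $\Phistar$, and the new edges and the reattachment paths to $\Tstar$ through a primal-dual/packing argument that plays the pairwise separation $d_G(s_i,t_i)>\beta T/k\ge 2\gamma T/k$ of the real pairs against the disjoint dual balls of \lref[Lemma]{lem:witness} (the latter also bounding $|W|$, hence $|S_f|$, in terms of $\Phistar+\Tstar$). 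The slack factor $\gamma/(\gamma-2)$ should come out of the hypothesis $\gamma\le\beta/2$, which guarantees the contractions shrink each real pair by at most half its length.

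The main obstacle is exactly this charging: one must establish that $S_r$, though it may contain many more than $k$ pairs, still has Steiner-forest cost at most a constant times $\Phistar+\Tstar$. This is a ``net-type'' statement, false for general covering objectives (cf.\ the tree example in \lref[Section]{sec:robust-multicut}), and it is the reason directly rounding the Steiner-forest dual does not work; making it go through is what forces the somewhat elaborate definition of the algorithm (the fake pairs $S_f$, \lref[lines]{sf-algo1}--\ref{sf-algo2}) and a primal-dual rather than a dual-rounding argument. Everything outside this step is bookkeeping.
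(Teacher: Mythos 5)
Your first paragraph is correct and in fact a clean reformulation: the chain $c(\fst)\le 2\,\mathsf{sf}(S_r)+|S_f|\gamma T/k$, then $|S_f|\le|W|$ (Lemma~\ref{lem:num-fake}) and $\mathsf{sf}(S_r)\ge\frac{\gamma}{2}\frac{T}{k}|W|$ (Lemma~\ref{lem:witness}), gives $c(\fst)\le 4\,\mathsf{sf}(S_r)$, reducing the theorem to $\mathsf{sf}(S_r)\le\frac{\gamma}{\gamma-2}(\Phistar+\Tstar)$. That is exactly where the paper is heading too.

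The rest of the proposal takes a wrong turn, and the turn is at the sentence where you declare that ``the naive plan of splitting $S_r$ into $\lceil|S_r|/k\rceil$ groups of size $\le k$ and paying $\Tstar$ per group fails once $|S_r|>k$.'' It does not fail: that upper bound, $\mathsf{sf}(S_r)\le\Phistar+\lceil|S_r|/k\rceil\Tstar$, is precisely what the paper uses. It is not unbounded in any harmful way because the witness lower bound caps $|S_r|$. Writing $|S_r|=\alpha k$ and using $|W|\ge|S_r|$ from Lemma~\ref{lem:num-fake}, Lemma~\ref{lem:witness} gives $\frac{\gamma}{2}\alpha T\le\mathsf{sf}(S_r)$; the naive grouping gives $\mathsf{sf}(S_r)\le\Phistar+(\alpha+1)\Tstar\le\Phistar+\Tstar+\alpha T$. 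Subtracting $\alpha T$ from both sides (here $\gamma>2$ is used) yields $\alpha T\le\frac{2}{\gamma-2}(\Phistar+\Tstar)$, and back-substituting gives $\mathsf{sf}(S_r)\le\frac{\gamma}{\gamma-2}(\Phistar+\Tstar)$ in one line. You had both ingredients in hand --- the witness dual packing as a lower bound and the $\le k$-group decomposition of $\opt$ as an upper bound --- but did not notice that the two, squeezed together, pin down $\alpha$. Because of this, the second half of your proposal invents a considerably harder argument (treating $\Phistar$ as a reusable edge set, grouping by components of $H$, a primal-dual charging against disjoint dual balls) that is both unnecessary and never carried through; as written it is a sketch with the hardest step (``the main obstacle is exactly this charging'') left as an acknowledged gap.

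To be clear about what the paper does and does not need: the fake pairs $S_f$ and the forest $H$ are not needed to turn $\Phistar$ into a reusable backbone; they exist only to guarantee $|W|\ge|S_r|\ge|S_f|$, which is what makes the dual lower bound strong enough to bound $\alpha$ and also lets the shortest-path costs for $S_f$ be absorbed. Once that accounting lemma is in place, the proof of Property~B is the two-sided inequality above plus the arithmetic in your first paragraph.
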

\begin{proof}
  Let $|S_r| = \alpha k$. Using \lref[Lemma]{lem:num-fake},
  \lref[Lemma]{lem:witness} and the optimal solution,
  \begin{equation}
    \label{eq:2}
    \ts \frac\gamma{2} \cdot \alpha \cdot T \le
    |W| \cdot \frac{\gamma}{2} \frac{T}{k} \le OPT(S_r) \leq
    \Phistar + \left\lceil{\frac{|S_r|}{k}} \right\rceil \Tstar \le
    \Phistar+ \Tstar + \alpha\cdot \Tstar
    \le \Phistar +\Tstar +\alpha\,T
  \end{equation}
  Thus $\alpha \cdot T \le \frac2{\gamma-2}\cdot (\Phistar+\Tstar)$ and
  $OPT(S_r)\le \frac{\gamma}{\gamma-2}\cdot (\Phistar+\Tstar)$. So the
  2-approximate Steiner forest on $S_r$ has cost at most
  $\frac{2\gamma}{\gamma-2}\cdot (\Phistar+\Tstar)$. Note that the
  distance between each pair in $S_f$ is at most $\gamma\cdot
  \frac{T}k$; so the total length of shortest-paths in $S_f$ is at most
  $|S_f|\cdot \gamma\cdot \frac{T}k \le |S_r|\cdot \gamma\cdot
  \frac{T}k$ (again by \lref[Lemma]{lem:num-fake}). Thus the algorithm's
  first-stage cost is at most $\frac{2\gamma}{\gamma-2}\cdot
  (\Phistar+\Tstar) + \alpha\gamma\cdot T\le
  \frac{4\gamma}{\gamma-2}\cdot (\Phistar+\Tstar)$.
\end{proof}

\begin{theorem}[Steiner Forest Main Theorem]
\label{th:card-sf} There is a 10-approximation for $k$-robust Steiner forest.
\end{theorem}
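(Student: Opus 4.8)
The plan is to assemble the pieces that are already in place. \lref[Claim]{cl:card-sf-2nd} establishes Property~A of \lref[Definition]{defn:algo} with parameter $\beta$, and \lref[Theorem]{th:card-sf-1st} establishes Property~B with $\alpha_1 = \alpha_2 = \frac{4\gamma}{\gamma-2}$. Hence the $k$-robust Steiner forest algorithm above is $\left(\frac{4\gamma}{\gamma-2},\, \frac{4\gamma}{\gamma-2},\, \beta\right)$-discriminating for any choice of constants with $2 < \gamma \le \beta/2$. First I would feed this into \lref[Lemma]{lem:apx}: for every $\epsilon > 0$ it yields a $(1+\epsilon)\max\left\{\frac{4\gamma}{\gamma-2},\ \beta + \frac{4\gamma}{\lambda(\gamma-2)}\right\}$-approximation for $\robkcov$.

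On its own this is not good enough when $\lambda$ is small: at $\lambda = 1$ the bound becomes $(1+\epsilon)\left(\beta + \frac{4\gamma}{\gamma-2}\right)$, and over $\beta \ge 2\gamma$ the quantity $\beta + \frac{4\gamma}{\gamma-2}$ is minimized at $\beta = 2\gamma,\ \gamma = 4$, where it equals $16$. So, exactly as in the $k$-robust min-cut argument, I would pair the discriminating algorithm with the trivial algorithm that buys nothing in the first stage and, when scenario $D \in \binom{U}{k}$ materializes, buys a $2$-approximate Steiner forest for $D$ entirely in the second stage. The optimal robust solution restricted to $D$ is a Steiner forest for $D$ of cost at most $\Phistar + \Tstar \le \opt$, so this trivial strategy has cost at most $2\lambda(\Phistar + \Tstar) \le 2\lambda\cdot\opt$, i.e.\ it is a $2\lambda$-approximation for every $\lambda \ge 1$.

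Running both and keeping the cheaper solution yields, on any instance with inflation $\lambda$, a ratio of at most $\min\left\{(1+\epsilon)\max\left\{\frac{4\gamma}{\gamma-2},\ \beta + \frac{4\gamma}{\lambda(\gamma-2)}\right\},\ 2\lambda\right\}$. I would then fix $\gamma = 4$ and $\beta = 8$ (so $\gamma \le \beta/2$ and $\frac{4\gamma}{\gamma-2} = 8$), making this $\min\left\{(1+\epsilon)\left(8 + \tfrac{8}{\lambda}\right),\ 2\lambda\right\}$. The first term decreases and the second increases in $\lambda$, so the worst case over $\lambda \ge 1$ occurs at the crossover $2\lambda = (1+\epsilon)(8 + 8/\lambda)$; solving this quadratic and letting $\epsilon \to 0$ gives $\lambda \to 2 + 2\sqrt{2}$ and a ratio approaching $4 + 4\sqrt{2} < 9.66$. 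Taking $\epsilon$ a sufficiently small constant keeps the ratio below $10$ uniformly in $\lambda$, which is the claimed bound.

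I do not expect a genuine obstacle here: all the substance is already contained in \lref[Lemmas]{lem:witness} and~\ref{lem:num-fake}, \lref[Theorem]{th:card-sf-1st}, and \lref[Claim]{cl:card-sf-2nd}, and what remains is the routine combination above. The one point worth emphasizing is the same as for min-cut: the discriminating algorithm degrades for small $\lambda$, and the trivial second-stage-only algorithm is exactly what rescues that regime; the final constant $10$ is simply a convenient round number above $4 + 4\sqrt{2}$ that absorbs the $(1+\epsilon)$ slack.
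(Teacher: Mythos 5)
Your proposal is correct and follows essentially the same route as the paper: invoke \lref[Claim]{cl:card-sf-2nd} and \lref[Theorem]{th:card-sf-1st} to get a $\left(\frac{4\gamma}{\gamma-2},\frac{4\gamma}{\gamma-2},\beta\right)$-discriminating algorithm, feed it to \lref[Lemma]{lem:apx}, and take the better of that and the trivial second-stage-only $2\lambda$-approximation. The only cosmetic difference is that the paper tunes $\gamma=2+2(1-1/\lambda)$ (with $\beta=2\gamma$) as a function of $\lambda$, which collapses the $\max$ to $4+\frac{4}{1-1/\lambda}$ and gives a crossover near $9.12$, whereas your fixed choice $\gamma=4,\beta=8$ gives the slightly looser crossover $4+4\sqrt{2}\approx 9.66$; both are comfortably below $10$.
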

\begin{proof}
  Using \lref[Claim]{cl:card-sf-2nd} and \lref[Theorem]{th:card-sf-1st},
  we obtain a $(\frac{4\gamma}{\gamma-2}, \frac{4\gamma}{\gamma-2},
  \beta)$-discriminating algorithm (\lref[Definition]{defn:algo}) for
  $k$-robust Steiner forest. Setting $\beta=2\gamma$ and
  $\gamma:=2+2\cdot (1-1/\lambda)$, \lref[Lemma]{lem:apx} implies an
  approximation ratio of $\max\{ \frac{4\gamma}{\gamma-2},~
  \frac{4\gamma/\lambda}{\gamma-2} + 2\gamma \} \le
  4+\frac4{1-1/\lambda}$.  Again the trivial algorithm that only buys
  edges in the second-stage achieves a $2\lambda$-approximation. Taking
  the better of the two, the approximation ratio is
  $\min\{2\lambda,~4+\frac4{1-1/\lambda}\} < 10$.
\end{proof}

\paragraph{The $k$-max-min Steiner Forest Problem.}
We now extend the $k$-robust Steiner forest algorithm to be $(\frac{4\gamma}{\gamma-2}, \frac{4\gamma}{\gamma-2},
2\gamma)$ {\em strongly} discriminating (when $\gamma=3$). As shown earlier, it is indeed discriminating. To show that
\lref[Definition]{def:strong-disc} holds, consider the proof of \lref[Theorem]{th:card-sf-1st} when $\lambda =1$ (so
$\Phistar = 0$) and suppose $c(\fst)\ge \frac{4\gamma}{\gamma-2} T\ge 2\gamma \,T$. The algorithm to output the $k$-set
$Q$ has two cases.
\begin{enumerate}
\item If the number of ``real'' pairs $|S_r|\le k$ then $Q := S_r$.  We
  have:
  $$c(\fst)\le 2\cdot OPT(S_r) + \frac{\gamma T}k\, |S_f|\le 2\cdot
  OPT(S_r) + \frac{\gamma T}k \, |S_r|\le 2\cdot OPT(S_r) + \gamma T.$$
  The first inequality is by definition of \fst and since distance
  between each pair in $S_f$ is at most $\gamma\cdot \frac{T}k$, the
  second inequality is by \lref[Lemma]{lem:num-fake}, and the last
  inequality uses $|S_r|\le k$. Since $c(\fst)\ge 2\gamma T$, it follows
  that $OPT(S_r) \ge \gamma T/2\ge T$.
\item If $|S_r|>k$ then the number of ``witnesses'' $|W|\ge |S_r|>k$, by \lref[Lemma]{lem:num-fake}.
  Let $Q \sse S_r$ be {\em any} $k$-set of pairs such that for each
  $i\in Q$ at least one of $\{s_i,t_i\}$ is in $W$. By the construction
  of $S_r$, we can feasibly pack dual balls of radius $\frac\gamma{2}
  \frac{T}k$ around each $W$-vertex, and so $OPT(Q)\ge |Q|\cdot
  \frac\gamma{2} \frac{T}k =\frac\gamma{2} T\ge T$.
\end{enumerate}
Thus we obtain a constant-factor approximation algorithm for $k$-max-min Steiner forest.


\section{Final Remarks}
In this paper, we presented a unified approach to directly solving
$k$-robust covering problems and $k$-max-min problems. The results for
all problems except multicut are fairly tight (and nearly match the
best-possible for the offline versions). It would be interesting to
obtain an $O(\log n)$-approximation for $k$-robust and $k$-max-min
multicut.

As mentioned earlier, approximating the {\em value} of any max-min
problem reduces to the corresponding robust problem, for {\em any}
uncertainty set. We show in the companion paper~\cite{GNR-rob-gen} that
there is also a relation in the reverse direction---for any covering
problem that admits good offline and online approximation algorithms, an
algorithm for the max-min problem implies one for the robust
version. This reduction can be used to give algorithms for robust
covering under matroid- and knapsack-type uncertainty
sets~\cite{GNR-rob-gen}.


\ignore{As mentioned in the introduction, one can show that solving the $k$-max-min problem also leads to a $k$-robust
algorithm---we give a general reduction in \lref[Appendix]{sec:gen-sets}. While this general reduction leads to poorer
approximation guarantees for the cardinality case, it easily extends to more general cases. Indeed, if the uncertainty
sets for robust problems are not just defined by cardinality constraints, we can ask: which families of
downwards-closed sets can we devise robust algorithms for? The general reduction in the Appendix shows how to
incorporate intersections of matroid-type and knapsack-type uncertainty sets as well.

Our work suggests several general directions for research. While the results for the $k$-robust case are fairly tight,
can we improve on our results for general uncertainty sets to match those for the cardinality case? Can we devise
algorithms to handle one matroid constraint that are as simple as our algorithms for the cardinality case? An
intriguing specific problem is to find a constant factor approximation for the robust Steiner forest problem in the
explicit scenarios version.}


\paragraph{Acknowledgments.} We thank Chandra Chekuri, Ravishankar
Krishnaswamy, Danny Segev, and Maxim Sviridenko for invaluable
discussions.

\bibliography{../robust,robust,../../abbrev,../../my-papers,../../embedding}
\bibliographystyle{plain}


\appendix



\section{$k$-Robust Steiner Tree}
\label{sec:steiner-tree} \vspace{-2mm}

In the $k$-robust Steiner tree, we are given a graph $G=(V,E)$ with edge costs $c:E\rightarrow \mathbb{R}_+$, a root
vertex $r$, and a set $U \sse V$ of potential terminals. Any set of $k$ terminals from $U$---i.e., any set in
$\smash{\binom{U}{k}}$---is a valid scenario in the second stage. Let $d(\cdot, \cdot)$ be the shortest-path distance
according to the edge costs. For a set $S \sse V$ of terminals, define the distance $d(v,S) := \min_{w \in S} d(v,w)$.

By the results in \lref[Section]{sec:framework}, a discriminating algorithm for this problem immediately gives us an
algorithm for the robust version, and this is how we shall proceed. Here is our discriminating algorithm for $k$-robust
Steiner tree: it picks a $\beta T/k$-net $S$ of the terminals in $U$, and builds a MST on $S$ as the first stage.

\begin{algorithm}
  \caption{Algorithm for $k$-Robust Steiner Tree}
  \begin{algorithmic}[1]
   \STATE {\bf input:} instance of $k$-robust Steiner tree and threshold $T$.
    \STATE \textbf{let} $\beta \gets \Theta(1)$, $S \gets \{r\}$.
    \WHILE{there exists a terminal $v \in U$ with $d(v, S) > \beta\cdot
      \frac{T}{k}$}

    \STATE $S \gets S \cup \{v\}$

    \ENDWHILE

    \STATE \textbf{output} first-stage solution \fst to be a
    minimum spanning tree on $S$.

   \STATE \textbf{for each} $i\in U$, define $\snd(\{i\})$ to be the edges
   on a shortest-path from  $i$ to $S$.

   \STATE \textbf{output} second-stage solution \snd where $\snd(D)
   := \bigcup_{i\in D} \snd(\{i\})$ for all $D \sse U$.
  \end{algorithmic}
\end{algorithm}

To show that the algorithm is discriminating, we need to show the two properties in \lref[Definition]{defn:algo}. The
first property is almost immediate from the construction: since every point in $U \setminus S$ is close to some point
in the net $S$, this automatically ensures that the second stage recourse cost is small.

\begin{claim}[Property~A for Steiner Tree]\label{cl:card-st-2nd}
  For all $T\ge 0$ and $D \in \binom{U}{k}$, the edges $\fst\cup
  \snd(D)$ connect the terminals in $D$ to the root $r$, and have cost
  $c(\snd(D))\le \beta\, T$.
\end{claim}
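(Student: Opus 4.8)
The plan is to verify directly the two requirements of Property~A, both of which follow essentially from the net construction in the algorithm, so I expect this to be the ``easy half'' of showing the algorithm is discriminating. First I would record the one structural fact the proof needs: when the \textbf{while} loop terminates, the set $S$ is a $\beta T/k$-net of $U$, i.e.\ every terminal $v\in U$ satisfies $d(v,S)\le \beta\cdot\frac{T}{k}$ --- for otherwise such a $v$ would have been added to $S$ and the loop would not have halted. Everything else is bookkeeping.

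For the connectivity part, I would note that $r\in S$ by initialization and $\fst$ is a spanning tree on $S$, so $\fst$ connects $r$ to every vertex of $S$. For each $i\in D$, $\snd(\{i\})$ is by definition the edge set of a shortest path from $i$ to $S$, hence a path joining $i$ to some $w_i\in S$; concatenating it with the $w_i$--$r$ path inside $\fst$ gives an $i$--$r$ path contained in $\fst\cup\snd(\{i\})\subseteq \fst\cup\snd(D)$. Since this holds for all $i\in D$, the terminals of $D$ are connected to $r$.

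For the cost bound, the net property gives $c(\snd(\{i\}))=d(i,S)\le \beta\cdot\frac{T}{k}$ for every $i\in U$, so summing over $i\in D$ and using $|D|=k$ yields $c(\snd(D))\le \sum_{i\in D}c(\snd(\{i\}))\le k\cdot\beta\cdot\frac{T}{k}=\beta T$ (the same estimate gives $c(\snd(D))\le|D|\cdot\beta T/k\le\beta T$ even if one only assumes $|D|\le k$).

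There is no genuine obstacle in this claim; the whole content of the analysis is deferred to Property~B, where one must bound $c(\fst)$ by arguing that the net $S$ has size $O(k)$ and hence its MST is comparable to $\opt$. The only subtlety worth stating explicitly is that $\snd(D)$ is a union of shortest paths \emph{each terminating in $S$}, which is exactly what makes the telescoping with $\fst$ reach the root; this is immediate from the definition of $\snd$ in the algorithm.
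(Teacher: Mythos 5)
Your proof is correct and follows essentially the same argument as the paper's: both use the termination condition of the \textbf{while} loop to get the $\beta T/k$-net property, both observe that $\fst$ spans $S\ni r$ so that each $\snd(\{i\})$-path to $S$ concatenates to the root, and both bound the cost by summing $d(i,S)\le \beta T/k$ over the $k$ demands. Your write-up is just a bit more explicit about the concatenation step; there is no substantive difference.
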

\begin{proof}
  From the definition of the second-stage solution, $\snd(D)$ contains
  the edges on shortest paths from each $D$-vertex to the set $S$.
  Moreover, \fst is a minimum spanning tree on $S$ (which in turn
  contains the root $r$). Hence $\fst \cup \snd(D)$ connects $D$ to the
  root $r$.  To bound the cost, note that by the termination condition
  in the \textbf{while} loop, every terminal $i\in U$ satisfies
  $d(i,S)\le \beta \frac{T}k$. Thus,
  $$c(\snd(D))\le \sum_{i\in D} c(\snd(\{i\})) = \sum_{i\in D}
  d(i,S) \le \frac{|D|}k \cdot \beta\,T.$$ This completes the proof that
  the algorithm above satisfies Property~A.
\end{proof}

It now remains to show that the algorithm satisfies Property~B as well. Let us show this for a sub-optimal settings of
values; we will improve on these values subsequently. The proof is dual-based and shows that if the cost of the MST on
$S$ were large, then the optimal first stage solution cost $\Phistar$ must have been large as well!

\begin{theorem}[Property~B for Steiner tree]\label{cl:card-st-1st-easy}
  Let $\Phistar$ denote the optimal first stage solution (and its cost),
  and $\Tstar$ the optimal second stage cost. If $T > T^*$ then the
  first stage cost $c(\fst) \leq 2\cdot \frac{\beta}{\beta-2}\cdot
  (\Phistar + \Tstar)$.
\end{theorem}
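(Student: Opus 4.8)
The plan is to mimic the dual-based strategy that worked for $k$-robust set cover (Theorem~\ref{th:main-sc}): argue that if the MST on the net $S$ is expensive, then a fractional Steiner-tree/dual solution on $S$ must be expensive, hence there is a large feasible dual packing; then ``sparsify'' this dual to only $k$ terminals and conclude that the optimal solution would have to pay too much on some $k$-scenario---a contradiction. Concretely, the net structure is the key: every distinct pair of terminals in $S$ (other than those equal to $r$) is at distance $> \beta T/k$ apart, so one can feasibly pack a ball of radius $\tfrac{\beta}{2}\cdot\tfrac{T}{k}$ around each vertex of $S\setminus\{r\}$ in the Steiner-tree LP dual (each such ball separates that terminal from the root). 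Since $|S\setminus\{r\}|$ balls pack feasibly, weak LP duality gives $OPT_{\mathrm{ST}}(S) \ge (|S|-1)\cdot \tfrac{\beta}{2}\cdot\tfrac{T}{k}$, and since MST is a $2$-approximation for Steiner tree, $c(\fst) \le 2\cdot OPT_{\mathrm{ST}}(S)$. Wait---that's the wrong direction; the dual lower-bounds $OPT_{\mathrm{ST}}(S)$, which I want to \emph{upper}-bound. So instead the argument must go: bound $OPT_{\mathrm{ST}}(S)$ from above using the optimal robust solution, and separately use the net packing to relate $|S|$ to $OPT_{\mathrm{ST}}(S)$, thereby bounding $|S|$ and hence $c(\fst)$.

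So here is the corrected skeleton. Let $|S| - 1 = \alpha k$ (the number of non-root net points). First, the optimal robust solution gives a cheap way to connect all of $S$ to $r$: partition $S\setminus\{r\}$ into $\lceil \alpha \rceil \le \alpha + 1$ groups of at most $k$ terminals each; the optimal first-stage tree $\Phistar$ plus the optimal second-stage augmentation for each group connects that group to $r$ at cost $\le \Phistar + \Tstar$, so summing over groups, $OPT_{\mathrm{ST}}(S) \le \lceil\alpha\rceil(\Phistar + \Tstar) \le (\alpha+1)(\Phistar + \Tstar)$. Second, the net packing lower bounds this: the feasible dual of value $\alpha k \cdot \tfrac{\beta}{2}\cdot\tfrac{T}{k} = \tfrac{\alpha\beta T}{2}$ gives $OPT_{\mathrm{ST}}(S) \ge \tfrac{\alpha\beta T}{2}$. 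Combining, $\tfrac{\alpha\beta T}{2} \le (\alpha+1)(\Phistar + \Tstar) \le (\alpha+1)(\Phistar + T)$ since $T > \Tstar$. Rearranging (using $T > \Tstar$ so the $\alpha \cdot T$ and $\alpha\cdot\Tstar$ terms can be absorbed) yields $\alpha \cdot T \le \tfrac{2}{\beta - 2}(\Phistar + \Tstar)$, hence $OPT_{\mathrm{ST}}(S) \le (\alpha+1)(\Phistar+\Tstar) \le \tfrac{\beta}{\beta-2}(\Phistar+\Tstar)$, and since MST is a $2$-approximation, $c(\fst) \le 2\cdot\tfrac{\beta}{\beta-2}(\Phistar+\Tstar)$, which is exactly the claimed bound.

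The main obstacle---and the step requiring the most care---is the double-counting in relating $OPT_{\mathrm{ST}}(S)$ to the optimal robust solution: one must check that chopping $S\setminus\{r\}$ into at most $\lceil\alpha\rceil$ pieces of size $\le k$ and reusing $\Phistar$ for each piece is legitimate (it is, since $\Phistar$ is bought once and $\Tstar$ is an upper bound on the recourse for \emph{any} $k$-set), and that the ceiling does not cost more than the extra additive ``$+1$'' group we budgeted for. The dual-packing claim itself is the easy half, being essentially the standard moat/ball packing lower bound for Steiner tree, but one should state explicitly why balls of radius $\tfrac{\beta}{2}\cdot\tfrac{T}{k}$ around distinct net points are disjoint (pairwise distance $> \beta T/k$) and why each is a valid dual variable (it separates its center, a terminal in $S\setminus\{r\}$, from the root $r$). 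Everything else is routine arithmetic with the $\tfrac{\beta}{\beta-2}$ factor, and the subsequent (claimed) improvement to the constant $4.5$ presumably comes from replacing the MST $2$-approximation with a better Steiner-tree approximation and tightening the net radius, but that refinement is deferred.
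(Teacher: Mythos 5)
Your argument follows the paper's proof essentially verbatim: upper-bound $OPT_{\mathrm{ST}}(S)$ by the optimal robust solution's cost via a partition of $S$ into $\lceil\alpha\rceil$ groups of size at most $k$, lower-bound it by the disjoint-ball (moat) packing afforded by the $\beta T/k$-net, combine, and invoke the MST $2$-approximation. The one slip is in the line ``summing over groups, $OPT_{\mathrm{ST}}(S)\le\lceil\alpha\rceil(\Phistar+\Tstar)$'': this double-counts $\Phistar$, and taken literally the rearrangement to $\alpha T\le\tfrac{2}{\beta-2}(\Phistar+\Tstar)$ does not go through because of the leftover $\alpha\Phistar$ term. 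What you actually need---and what your remark about ``reusing $\Phistar$ for each piece'' shows you already understand---is $OPT_{\mathrm{ST}}(S)\le\Phistar+\lceil\alpha\rceil\Tstar$, since the first stage is bought once and only the recourse is paid per group; with that corrected the arithmetic is exactly the paper's.
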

\begin{proof}
  Suppose $|S| = \alpha k$. We can divide up $S$ into $\ceil{\alpha}$
  sets $S_1, S_2, \ldots, S_{\ceil{\alpha}}$ with at most $k$ terminals
  each, and let $E(S_i)$ denote the second-stage edges bought by the
  optimal solution under scenario $S_i$. Hence $\Phistar \cup (\cup_{i
    \leq \ceil{\alpha}} E(S_i))$ is a feasible solution to the Steiner
  tree on $S$ of cost at most $\Phistar + \ceil{\alpha}\cdot \Tstar$.
  Also, since each of the points in $S$ is at least at distance $\beta
  T/k$ from each other, we get (below $OPT(S)$ is the length of the
  minimum Steiner tree on $S$),
  \begin{equation*}
    \ts \frac\beta{2} \cdot \alpha T = |S|\cdot\frac{\beta}{2}\cdot \frac{T}{k} \leq OPT(S) \leq
    \Phistar + \ceil{\alpha} \cdot \Tstar \le     \Phistar + (\alpha +1)
    \cdot \Tstar \le \Phistar +\Tstar + \alpha T.
  \end{equation*}
  Hence $\alpha T\le \frac{2}{\beta-2}(\Phistar+\Tstar)$ and $ OPT(S) \le \frac{\beta}{\beta-2}\cdot (\Phistar+\Tstar)$;
  since the MST heuristic is a $2$-approximation to the optimal Steiner  tree, we get the theorem.
\end{proof}

Combining \lref[Claim]{cl:card-st-2nd} and \lref[Theorem]{cl:card-st-1st-easy} shows that our algorithm is a
$(\frac{2\beta}{\beta-2}, \frac{2\beta}{\beta-2}, \beta)$-discriminating algorithm for $k$-robust Steiner tree.
Setting, say, $\beta = 4$ and applying \lref[Lemma]{lem:apx} gives us an $\max(4, 4 + 4) = 8$-approximation for
$k$-robust Steiner tree. In the next subsection, we will show how to improve this guarantee.

\subsection{Improved Approximation for Steiner Tree}
\label{sec:improve-steiner}

In the previous analysis, we just wanted to show the main ideas and hence were somewhat sloppy with the analysis. Let
us now show how to get a tighter bound using a fractional analysis.

\begin{theorem}[Improved Property~B for Steiner Tree]
  \label{th:card-st-1st}
  If $T\ge \Tstar$ then $c(\fst)\le \frac{2\beta}{\beta-2}\cdot \Phistar
  + 2\cdot \Tstar$.
\end{theorem}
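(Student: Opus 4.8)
The plan is to mirror the proof of \lref[Theorem]{cl:card-st-1st-easy}, changing exactly one step. In that proof one bounds $OPT(S)$ from above by the \emph{integral} solution $\Phistar \cup \bigcup_{i\le\lceil\alpha\rceil} E(S_i)$, of cost $\Phistar + \lceil\alpha\rceil\Tstar$, and the ceiling is what forces the extra $\frac{2\beta}{\beta-2}$ factor onto $\Tstar$. Instead I will produce a \emph{fractional} Steiner tree on $S$ of cost at most $\Phistar + \alpha\,\Tstar$ (no ceiling), feed this tighter number into the same moat lower bound to control $\alpha T$, and then bound the actual cost $c(\fst)=MST(S)\le 2\,OPT(S)$ using the integral solution only at the very end. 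Write $\alpha := |S|/k$; if $\alpha<1$ a single $k$-set scenario $D\supseteq S$ gives $c(\fst)\le 2\,OPT(S)\le 2(\Phistar+\Tstar)\le \frac{2\beta}{\beta-2}\Phistar+2\Tstar$ since $\beta>2$, so assume $\alpha\ge 1$.

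Let $\Phistar$ also denote the optimal first-stage edge set, and for each $k$-set $D\in\binom{S}{k}$ let $E(D)$ be an optimal second-stage edge set for scenario $D$, so $c(E(D))\le\Tstar$ and $\Phistar\cup E(D)$ connects every vertex of $D$ to $r$. Let $\mu$ be the uniform distribution on $\binom{S}{k}$, so $\Pr_\mu[v\in D]=k/|S|=1/\alpha$ for every $v\in S$. Define the fractional edge vector
\[
x \ :=\ \chi(\Phistar)\ +\ \alpha\cdot\!\!\sum_{D\in\binom{S}{k}}\!\!\mu_D\,\chi(E(D)),
\]
where $\chi(\cdot)$ is the indicator vector. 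I claim $x$ is a feasible fractional Steiner tree on $S$: take any $A$ with $r\notin A$, $A\cap S\ne\emptyset$, pick $v\in A\cap S$; if some $\Phistar$-edge crosses $\partial A$ then $x(\partial A)\ge\chi(\Phistar)(\partial A)\ge 1$, and otherwise, for every $D\ni v$ the connected subgraph $\Phistar\cup E(D)$ must cross $\partial A$ (it joins $v$ to $r$), so $E(D)$ crosses it, whence $x(\partial A)\ge\alpha\sum_{D\ni v}\mu_D=\alpha\cdot(1/\alpha)=1$. Its cost is $c(\Phistar)+\alpha\sum_D\mu_D\,c(E(D))\le\Phistar+\alpha\,\Tstar$. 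By the same dual/moat-packing lower bound used in \lref[Theorem]{cl:card-st-1st-easy} — the points of $S$ are pairwise at distance $\ge\beta T/k$, so disjoint balls of radius $\tfrac{\beta}{2}\tfrac{T}{k}$ around them form a feasible Steiner-tree dual — \emph{every} feasible fractional Steiner tree on $S$ has cost at least $|S|\cdot\tfrac{\beta}{2}\tfrac{T}{k}=\tfrac{\beta}{2}\alpha T$. Combining this with the fractional solution $x$ and with $\Tstar\le T$ gives
\[
\tfrac{\beta}{2}\,\alpha T\ \le\ \Phistar+\alpha\,\Tstar\ \le\ \Phistar+\alpha T,
\qquad\text{so}\qquad \alpha T\ \le\ \frac{2\,\Phistar}{\beta-2}.
\]

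Finally, partitioning $S$ into $\lceil\alpha\rceil$ groups of at most $k$ terminals and using the optimal second stage on each group yields a Steiner tree on $S$ of cost $\le\Phistar+\lceil\alpha\rceil\Tstar$; since the MST heuristic is a $2$-approximation for Steiner tree and $\lceil\alpha\rceil\Tstar\le\Tstar+\alpha\Tstar\le\Tstar+\alpha T$,
\[
c(\fst)\ \le\ 2\bigl(\Phistar+\lceil\alpha\rceil\Tstar\bigr)\ \le\ 2\Phistar+2\Tstar+2\alpha T\ \le\ 2\Phistar+2\Tstar+\frac{4\,\Phistar}{\beta-2}\ =\ \frac{2\beta}{\beta-2}\,\Phistar+2\,\Tstar,
\]
which is the claim. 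There is no deep obstacle here; the only point that needs care is the feasibility check for $x$: because $\Phistar$ on its own need not connect the terminals to $r$, one cannot argue globally and must verify the covering constraint cut-by-cut, splitting on whether a given cut is already severed by $\Phistar$. Everything else is the earlier proof with the ceiling removed by passing to the fractional solution.
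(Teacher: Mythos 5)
Your proof is correct and follows the paper's core strategy: build a cheap \emph{fractional} Steiner tree on $\{r\}\cup S$ by averaging the optimal second-stage solutions over $k$-subsets of $S$, then play its cost against the moat-packing lower bound $LP(S)\ge \frac{\beta}{2}\alpha T$. The paper gets the same cost bound $\Phistar+\frac{|S|}{k}\Tstar$ using only the $|S|$ cyclic windows $A_j=\{j,\dots,j+k-1\}\pmod{|S|}$, whereas you average uniformly over all $\binom{|S|}{k}$ subsets; your variant is conceptually cleaner (cut feasibility becomes a one-line probability computation), the paper's uses a flow argument and fewer moving parts, and both give exactly the cost $\Phistar+\alpha\Tstar$. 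One small inefficiency at the end: having already shown $\alpha T\le \frac{2\Phistar}{\beta-2}$, you could finish directly from the fractional solution via $c(\fst)\le 2\,LP(S)\le 2(\Phistar+\alpha\Tstar)\le 2\Phistar+2\alpha T\le \frac{2\beta}{\beta-2}\Phistar$ (as the paper does, with no $+2\Tstar$ needed in the $\alpha\ge 1$ case); instead you fall back to the integral partition $\Phistar+\lceil\alpha\rceil\Tstar$, which reintroduces the ceiling you constructed $x$ to avoid. The resulting bound still fits the theorem, so this is just extra slack, not an error.
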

\begin{proof}
  Firstly suppose $|S|\le k$: then it is clear that there is a Steiner
  tree on $\{r\}\cup S$ of cost at most $\Phistar+\Tstar$, and the
  algorithm finds one of cost at most twice that. In the following
  assume that $|S|>k$.

  Let $LP(S)$ denote the minimum length of a {\em fractional} Steiner
  tree on terminals $\{r\}\cup S$. Since each of the points in $S$ is at
  least at distance $\beta\cdot \frac{T}{k}\ge \beta\cdot
  \frac\Tstar{k}$ from each other, we get $LP(S)\ge \frac\beta{2k}\cdot
  |S|\cdot \Tstar$. We now construct a fractional Steiner tree
  $x:E\rightarrow \mathbb{R}_+$ of small length.
  Number the terminals in $S$ arbitrarily, and for each $1\le j\le |S|$
  let $A_j = \{j,j+1,\cdots,j+k-1\}$ (modulo $|S|$). Let $\Pi_j\sse
  E\setminus \Phistar$ denote the second-stage edges bought in the
  optimal solution under scenario $A_j$: so $\Phistar\cup \Pi_j$ is a
  Steiner tree on terminals $\{r\}\cup A_j$, and $c(\Pi_j)\le \Tstar$.
  Define $x:=\chi(\Phistar) + \frac1k\cdot \sum_{j=1}^{|S|}
  \chi(\Pi_j)$. We claim that $x$ supports unit flow from $r$ to any
  $i\in S$: note that there are $k$ sets $A_{i-k+1},\cdots ,A_i$ that
  contain $i$, and for each $i-k+1\le j\le i$, we have $\frac1k\cdot
  \left(\chi(\Phistar)+\chi(\Pi_j)\right)$ supports $\frac1k$ flow from
  $r$ to $i$. Thus $x$ is a feasible fractional Steiner tree on
  $\{r\}\cup S$, of cost at most $\Phistar+\frac{|S|}k\cdot\Tstar$.
  Combined with the lower bound on $LP(S)$,
  \begin{equation}
    \label{eq:1}
    \ts |S|\cdot\frac{\beta}{2}\cdot \frac{\Tstar}{k} \leq LP(S) \leq
    \Phistar + \frac{|S|}k\cdot\Tstar.
  \end{equation}
  Thus we have $LP(S) \le \frac{\beta}{\beta-2}\cdot \Phistar$, which
  implies the theorem since the minimum spanning tree on $\{r\}\cup S$
  costs at most twice $LP(S)$.
\end{proof}

From \lref[Claim]{cl:card-st-2nd} and \lref[Theorem]{th:card-st-1st}, we now get that the algorithm is
$(\frac{2\beta}{\beta-2}, 2, \beta)$-discriminating. Thus, setting $\beta=2-\frac1\lambda +\sqrt{4+1/\lambda^2}$ and
applying \lref[Lemma]{lem:apx}, we get the following approximation ratio.
$$\max \left\{ \frac{2\beta}{\beta-2},~\frac2\lambda + \beta \right\} =
2+\frac1\lambda+\sqrt{4+\frac{1}{\lambda^2}}.$$ On the other hand, the trivial algorithm which does nothing in the
first stage is a $1.55\cdot \lambda$ approximation. Hence the better of these two ratios gives an approximation bound
better than $4.5$.

\medskip
{\bf The $k$-max-min Steiner Tree Problem.} We show that the above algorithm can be extended to be
$(\frac{2\beta}{\beta-2}, 2, \beta)$ strongly discriminating. As shown above, it is indeed discriminating. To show that
\lref[Definition]{def:strong-disc} holds, consider the proof of \lref[Theorem]{th:card-st-1st} when $\lambda =1$ (so
$\Phistar = 0$) and suppose that $c(\fst)\ge 2 \,T$. The algorithm to output the $k$-set $Q$ proceeds via two cases.
\begin{enumerate}
\item  If $|S|\le k$ then $Q := S$. The minimum Steiner tree on $Q$ is at least half its MST, i.e. at
least $\frac12 c(\fst) \ge T$.

\item If $|S|>k$ then $Q\sse S$ is {\em any} $k$-set; by the construction of $S$, we can feasibly pack dual balls of
radius $\beta \frac{T}k$ around each $Q$-vertex, and so $LP(Q)\ge \beta T\ge T$. Thus the minimum Steiner tree on $Q$
is at least $T$.
\end{enumerate}

\subsection{Unrooted Steiner tree}
\label{sec:unrooted-steiner}

We note that the $k$-robust Steiner tree problem studied above differs from~\cite{KKMS08} since there is no root in the
model of~\cite{KKMS08}. In the unrooted version, any subset of $k$ terminals appear in the second stage, and the goal
is to connect them {\em amongst each other}. We show that a small modification in the proof implies that Algorithm~1
(where $r\in U$ is set to an arbitrary terminal) achieves a good approximation in the unrooted case as well. This
algorithm is essentially same as the one used by~\cite{KKMS08}, but with different parameters: hence our framework can
be viewed as generalizing their algorithm. Our proof is somewhat shorter and gives a slightly better approximation
ratio.

Below, $\Phistar$ and $\Tstar$ denote the optimal first and second stage costs for the given unrooted instance.  It is
clear that \lref[Claim]{cl:card-st-2nd} continues to hold in this case as well: hence Property~A of
\lref[Definition]{defn:algo} is satisfied. We next bound the first stage cost of the algorithm (i.e. Property~B of
\lref[Definition]{defn:algo}).

\begin{theorem}[Property~B for Unrooted Steiner Tree]
  If $T\ge \Tstar$ then
  $c(\fst)\le \frac{2\beta}{\beta-2}\cdot \Phistar+2\Tstar$.
\end{theorem}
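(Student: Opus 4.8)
The plan is to follow the proof of \lref[Theorem]{th:card-st-1st} essentially verbatim; the one genuinely new issue is that there is no root vertex, so I must artificially \emph{force} the designated terminal $r$ (which the algorithm puts into the net $S$) into every scenario that is handed to the optimal solution, so that $\Phistar$ together with the second-stage edges again connects $r$ to the relevant net points.

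First the easy case $|S|\le k$. Pad $S$ with arbitrary terminals to obtain a scenario $D\in\binom{U}{k}$; if $E_D$ denotes the optimal second-stage solution for $D$, then $\Phistar\cup E_D$ connects all of $D\supseteq S$ into a single component, so there is a Steiner tree on $S$ of cost at most $\Phistar+\Tstar$. Since $\fst$ is the MST on $S$, a $2$-approximation, $c(\fst)\le 2(\Phistar+\Tstar)\le \frac{2\beta}{\beta-2}\cdot\Phistar+2\Tstar$ (using $\frac{\beta}{\beta-2}\ge 1$).

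For the main case $|S|>k$, I would build a cheap fractional Steiner tree on $S$ as follows. Number the terminals of $S\setminus\{r\}$ cyclically $1,\dots,|S|-1$, and for each $j$ let $A_j$ be the window consisting of the next $k-1$ of them, so that $D_j:=\{r\}\cup A_j$ is a legal $k$-scenario. Let $\Pi_j\sse E\setminus\Phistar$ be the optimal second-stage edges for $D_j$: then $c(\Pi_j)\le\Tstar$ and $\Phistar\cup\Pi_j$ connects $r$ to every vertex of $A_j$. Set $x:=\chi(\Phistar)+\frac{1}{k-1}\sum_{j=1}^{|S|-1}\chi(\Pi_j)$. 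Each vertex $i\in S\setminus\{r\}$ lies in exactly $k-1$ windows, and for each such window $\frac{1}{k-1}(\chi(\Phistar)+\chi(\Pi_j))$ carries $\frac{1}{k-1}$ units of $r$--$i$ flow; the $k-1$ copies of $\frac{1}{k-1}\chi(\Phistar)$ they use sum to exactly the single copy of $\chi(\Phistar)$ present in $x$, so $x$ supports a unit $r$--$i$ flow for every $i\in S$, i.e.\ $x$ is a feasible fractional Steiner tree on $S$ of cost at most $\Phistar+\frac{|S|-1}{k-1}\cdot\Tstar$. Combining this with the moat lower bound $LP(S)\ge \frac{\beta}{2}\cdot\frac{|S|}{k}\cdot\Tstar$ — valid because the while loop keeps all points of $S$, including $r$, pairwise more than $\beta T/k\ge\beta\Tstar/k$ apart — gives $LP(S)\le \frac{\beta}{\beta-2}\cdot\Phistar$ (up to the lower-order discrepancy noted below), and then $c(\fst)\le 2\,LP(S)$ as in \lref[Theorem]{th:card-st-1st} finishes the proof.

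The only real obstacle is the one just flagged: with no genuine root one cannot route flow from every net point to a common vertex using windows of size $k$, so one slot of each scenario must be spent on $r$ and the windows have size $k-1$, which replaces the $\frac1k$ of the rooted argument by $\frac1{k-1}$ and hence costs a $\tfrac{k}{k-1}$ factor in the $\Tstar$ term. Since $\beta=\Theta(1)$ is a free parameter (and instances with $k$ below a fixed constant can be solved directly, e.g.\ by the offline algorithm on each scenario), this lower-order loss is absorbed and the stated bound $c(\fst)\le\frac{2\beta}{\beta-2}\cdot\Phistar+2\Tstar$ holds; everything else is identical to the rooted proof, so there are no further difficulties.
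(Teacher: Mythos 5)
Your overall plan is sound and the easy case ($|S|\le k$) is handled correctly, but the main case has a quantitative gap that is not a lower-order loss, and your proposed escape hatch does not close it.

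The problem is the window size. Because you insist on forcing $r$ into every scenario, your windows $A_j$ contain only $k-1$ members of $S\setminus\{r\}$, and each terminal of $S\setminus\{r\}$ lies in only $k-1$ windows. The resulting fractional Steiner tree costs at most $\Phistar+\frac{|S|-1}{k-1}\Tstar$, whereas the argument needs $\Phistar+\frac{|S|}{k}\Tstar$. Combining with the moat lower bound $LP(S)\ge \frac{\beta}{2}\cdot\frac{|S|}{k}\Tstar$, your inequality becomes
\[
\Bigl(\tfrac{\beta|S|}{2k}-\tfrac{|S|-1}{k-1}\Bigr)\Tstar \;\le\; \Phistar ,
\]
and the coefficient on the left is positive for all $|S|>k$ only when $\beta\ge\frac{2k}{k-1}$. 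For $k=2$ this is $\beta\ge4$; even for $\beta$ slightly below $4$ the coefficient turns negative once $|S|$ is large and you get nothing. This is fatal for the result as stated: when the authors later optimize, they set $\beta=2-\tfrac1\lambda+\sqrt{4+1/\lambda^2}\in[1+\sqrt5,\,4)$, which is \emph{always} strictly below $4$. So for $k=2$ your argument cannot recover the claimed $\frac{2\beta}{\beta-2}\Phistar+2\Tstar$. Your suggestion to ``solve small-$k$ instances directly'' changes the algorithm and the theorem statement; it does not make the bound you wrote hold for the algorithm actually being analyzed.

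The paper avoids this loss entirely by \emph{not} forcing $r$ into the scenarios. It takes size-$k$ windows $A_j$ on all of $S$ (cyclically) and sets $x=\chi(\Phistar)+\frac1k\sum_j\chi(\Pi_j)$, giving cost $\Phistar+\frac{|S|}{k}\Tstar$. Feasibility no longer comes from ``unit $r$-$i$ flow'' but from a cyclic argument: it suffices that $x$ support unit flow between each \emph{consecutive pair} $(i,i+1)$, since any cut separating two terminals must cut some consecutive pair. Each pair $\{i,i+1\}$ lies in $k-1$ of the windows, giving $\frac{k-1}{k}$ units of flow; the remaining $\frac1k$ unit comes from observing that $\Phistar\cup\bigl(\bigcup_{l\notin J}\Pi_l\bigr)$ is also connected on $\bigcup_{l\notin J}A_l\supseteq\{i,i+1\}$, because the complementary windows are cyclically consecutive and hence pairwise overlapping. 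That extra $\frac1k$ is exactly what you are missing, and it is what lets the proof keep $\frac{|S|}{k}$ instead of $\frac{|S|-1}{k-1}$. You should either adopt this consecutive-pair argument, or accept that your construction proves a weaker version of the theorem with a degraded constant that only kicks in for $\beta\ge\frac{2k}{k-1}$.
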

\begin{proof}
  Firstly suppose $|S|\le k$: then it is clear that there is a Steiner
  tree on $S$ of cost at most $\Phistar+\Tstar$, and the algorithm finds
  one of cost at most twice that. In the following assume that $|S|>k$.

  Let $LP(S)$ denote the minimum length of a {\em fractional} Steiner
  tree on terminals $S$ (recall, no root here). Since each of the points
  in $S$ is at least at distance $\beta\cdot \frac{T}{k}\ge\beta\cdot
  \frac\Tstar{k}$ from each other, we get $LP(S)\ge \frac\beta{2k}\cdot
  |S|\cdot \Tstar$. We now construct a fractional Steiner tree
  $x:E\rightarrow \mathbb{R}_+$ of small length.
  Number the terminals in $S$ arbitrarily, and for each $1\le j\le |S|$
  let $A_j = \{j,j+1,\cdots,j+k-1\}$ (modulo $|S|$). Let $\Pi_j\sse
  E\setminus \Phistar$ denote the second-stage edges bought in the
  optimal solution under scenario $A_j$: so $\Phistar\cup \Pi_j$ is a
  Steiner tree on terminals $A_j$, and $c(\Pi_j)\le \Tstar$. Define
  $x:=\chi(\Phistar) + \frac1k\cdot \sum_{j=1}^{|S|} \chi(\Pi_j)$.
  \begin{Myquote}
    \begin{claim}
      For any $i\in S$, $x$ supports a unit flow from terminal $i$ to
      $i+1$ (modulo $|S|$).
    \end{claim}
    \begin{proof} Note that there are $k-1$ sets $A_{i-k+2},\cdots ,A_i$
      that contain both $i$ and $i+1$. Let $J:= \{i-k+2,\cdots,i\}$. So
      for each $j\in J$, we have $\frac1k\cdot
      \left(\chi(\Phistar)+\chi(\Pi_j)\right)$ supports $\frac1k$ flow
      from $i$ to $i+1$. Furthermore, $\left(\cup_{l\in S\setminus J}
        \Pi_l \right)\cup \Phistar$ is a Steiner tree connecting
      terminals $\cup_{l\in S\setminus J} A_l\supseteq \{i,i+1\}$; i.e.
      $\frac1k\cdot \left(\chi(\Phistar)+ \sum_{l\in S\setminus J}
        \chi(\Pi_l)\right)$ also supports $\frac1k$ flow from $i$ to
      $i+1$. Thus we obtain the claim.
    \end{proof}
  \end{Myquote}

  Thus $x$ is a feasible fractional Steiner tree on terminal $S$, of
  cost at most $\Phistar+\frac{|S|}k\cdot\Tstar$.  Combined with the
  lower bound on $LP(S)$,
  \begin{equation}
    \label{eq:1app}
    \ts |S|\cdot\frac{\beta}{2}\cdot \frac{\Tstar}{k} \leq LP(S) \leq
    \Phistar + \frac{|S|}k\cdot\Tstar.
  \end{equation}
  Thus we have $LP(S) \le \frac{\beta}{\beta-2}\cdot \Phistar$, which
  implies the theorem since the minimum spanning tree on $S$ costs at
  most twice $LP(S)$.
\end{proof}

Thus by the same calculation as in the rooted case, we obtain a result that slightly improves on the constants obtained
by~\cite{KKMS08} for the same problem.
\begin{theorem}\label{th:card-st}
  There is a 4.5-approximation algorithm for (unrooted) $k$-robust
  Steiner tree.
\end{theorem}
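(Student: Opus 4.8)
The plan is to carry out, for the unrooted instance, exactly the assembly used a moment ago for the rooted case; the substantive part — the Property~B bound $c(\fst)\le \frac{2\beta}{\beta-2}\cdot\Phistar+2\,\Tstar$ for the net-and-MST algorithm (Algorithm~1 with the ``root'' taken to be an arbitrary terminal of $U$) — has just been established, so what remains is bookkeeping. First I would observe that Claim~\ref{cl:card-st-2nd} (Property~A) goes through verbatim in the unrooted setting: by the termination condition every terminal of $U$ lies within $\beta T/k$ of the net $S$, the spanning tree $\fst$ spans $S$, and for any scenario $D\in\binom{U}{k}$ the paths $\snd(D)=\bigcup_{i\in D}\snd(\{i\})$ attach the vertices of $D$ to $\fst$ at cost $\sum_{i\in D} d(i,S)\le \beta T$. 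Together with the unrooted Property~B theorem, this shows the algorithm is $\bigl(\frac{2\beta}{\beta-2},\,2,\,\beta\bigr)$-discriminating for unrooted $k$-robust Steiner tree.

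Next I would feed this into Lemma~\ref{lem:apx}, which converts an $(\alpha_1,\alpha_2,\beta)$-discriminating algorithm into a $(1+\epsilon)\max\{\alpha_1,\ \beta+\alpha_2/\lambda\}$-approximation. With $\alpha_1=\frac{2\beta}{\beta-2}$ and $\alpha_2=2$ this is, up to the $(1+\epsilon)$ factor, $\max\bigl\{\frac{2\beta}{\beta-2},\ \beta+\frac2\lambda\bigr\}$. Equating the two terms amounts to solving $\beta^2-(4-\frac2\lambda)\beta-\frac4\lambda=0$, whose positive root is $\beta=2-\frac1\lambda+\sqrt{4+1/\lambda^2}$; substituting back gives an approximation ratio of $2+\frac1\lambda+\sqrt{4+1/\lambda^2}$. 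This quantity decreases from about $5.24$ at $\lambda=1$ toward $4$ as $\lambda\to\infty$, so for small $\lambda$ it can exceed $4.5$.

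To cover the small-$\lambda$ regime I would use the trivial alternative: buy nothing in the first stage and, once the $k$-set materializes, build a Steiner tree on it offline with a $1.55$-approximation, for total cost at most $1.55\,\lambda$ times optimal. Running both algorithms and returning the cheaper solution yields ratio $\min\{1.55\lambda,\ 2+\frac1\lambda+\sqrt{4+1/\lambda^2}\}$; a short numeric check shows the worst case over $\lambda\ge1$ occurs near the crossover $\lambda\approx 2.9$ and stays below $4.5$, which also leaves enough slack to absorb the $(1+\epsilon)$ loss from Lemma~\ref{lem:apx}. That establishes Theorem~\ref{th:card-st}.

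\textbf{The main obstacle.} There is essentially no remaining obstacle: the two ingredients (Property~A and the fractional-flow argument for Property~B) are already in hand, and only an elementary one-variable optimization over $\beta$ plus a case split on $\lambda$ against the trivial algorithm is left. The one point needing a little care is verifying numerically that the minimum of the two ratios is strictly below $4.5$ for every $\lambda\ge1$ with room to spare for the $(1+\epsilon)$ factor.
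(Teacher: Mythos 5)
Your proposal is correct and follows essentially the same route as the paper: it reuses Property~A verbatim, combines it with the unrooted Property~B bound to get a $(\tfrac{2\beta}{\beta-2},2,\beta)$-discriminating algorithm, optimizes $\beta=2-\tfrac1\lambda+\sqrt{4+1/\lambda^2}$ in Lemma~\ref{lem:apx} to obtain the ratio $2+\tfrac1\lambda+\sqrt{4+1/\lambda^2}$, and takes the minimum with the trivial $1.55\lambda$ second-stage-only algorithm. This is exactly the paper's calculation (which it simply cites as ``the same calculation as in the rooted case''), and your numerical check of the crossover near $\lambda\approx2.8$--$2.9$ with ratio $\approx4.39<4.5$ is accurate.
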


\section{$k$-Robust Set Cover with Non-uniform Inflation}\label{app:non-unif-sc}
Consider the $k$-robust set cover problem where there is a set system $(U, \{R_j\}_{j=1}^m)$ with a universe of $n$
elements and $m$ sets with cost-vectors $b, c \in \R_+^m$ (for first and second stage resp.), and a bound $k$ on the
cardinality of the realized demand-set.
The model considered in \lref[Section]{sec:set-cover} is the special case when $c=\lambda\,b$ for some uniform inflation
factor $\lambda$. Here we consider the general case of {\em set-dependent inflation}, and show that the same result
holds. We may assume WLOG that the first-stage cost for each set is at most its second-stage cost, i.e. $b\le c$. (If
some set $R$ has $c_R<b_R$, then we pretend that its first-stage cost is $c_R$; and if $R$ is chosen into the
first-stage solution it can be always bought in the second stage).

Under non-uniform inflations, the definition of an $(\alpha_1,\alpha_2,\beta)$-discriminating algorithm is the same as
\lref[Definition]{defn:algo} where Condition~B is replaced by:
\begin{OneLiners}
\item[B'.] Let $\Phistar$ denote the optimal first stage solution, and $\Tstar$ the optimal second stage
$c$-cost (hence the optimal value $\opt=b(\Phistar)+\Tstar$). If the threshold $T\ge \Tstar$ then the first stage cost
$b(\fst)\le \alpha_1\cdot \Phistar + \alpha_2\cdot \Tstar$.
\end{OneLiners}

It can be shown exactly as in \lref[Lemma]{lem:apx}, that any such algorithm  is a
$\max\{\alpha_1,~\alpha_2+\beta\}$-approximation for $k$-robust set cover. Note that the factor $\alpha_2$ was scaled
down by $\lambda$ in the uniform inflation case (\lref[Lemma]{lem:apx}). The algorithm and analysis here are very similar
to that for $k$-robust set-cover under uniform inflation (\lref[Section]{sec:set-cover}).

\ignore{The algorithm below follows the framework in \lref[Section]{sec:alg-overview}, which (though stated for uniform
inflation) easily extends to the case of non-uniform inflation. In particular the algorithm takes as input an \rcov
instance and bound $T$, and outputs first-stage solution $\fst\sse [m]$ and augmentation $\snd:\Omega\rightarrow
2^{[m]}$ such that:
\begin{enumerate}
 \item If $T\ge \Tstar$ then first-stage cost $b(\fst)\le \alpha_1\cdot b(\Phistar) + \alpha_2\cdot \Tstar$.
 \item For every $\omega\in \Omega$, the sets $\fst~\bigcup ~\snd(\omega)$ cover all elements in
 $\omega$, and  the cost $c\left(\snd(\omega)\right)\le \beta\cdot T$.
\end{enumerate}
}

\begin{algorithm}
  \caption{Algorithm for Cardinality Robust Set Cover with non-uniform inflation}
  \begin{algorithmic}[1]
    \STATE \textbf{input:} robust set-cover instance and bound $T$.
    \STATE \textbf{let} $\beta \gets 36\cdot \ln m$, and  $S\leftarrow \left\{v\in U \mid \mbox{ minimum {\bf $c$-cost} set
      covering $v$ has cost at least } \beta\cdot \frac{T}{k}\right\}$.

    \STATE \textbf{output} first stage solution \fst as the
    Greedy-Set-Cover($S$) under {\bf $b$-costs}.
   \STATE {\bf define} $\snd(\{i\})$ as the minimum {\bf $c$-cost} set covering $i$ if $i\in U\setminus S$, and $\emptyset$
   otherwise.
     \STATE \textbf{output} second stage solution \snd where $\snd(\omega):=\bigcup_{i\in \omega}\snd(\{i\})$ for all
     $\omega\sse U$.
  \end{algorithmic}
\end{algorithm}

We will show that this algorithm is $\left( H_n,\, 12 H_n,\, 36 \ln m \right)$-discriminating. The following claim is
immediate.
\begin{claim}[Property A]\label{cl:gensc-2nd} For all $T\ge 0$ and $\omega\sse U$, the sets
$\fst\bigcup \snd(\omega)$ cover elements $\omega$; additionally if $|\omega|\le k$ then the cost $c(\snd(\omega))\le
\beta\, T$.
\end{claim}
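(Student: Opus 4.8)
The plan is to follow exactly the argument used for \lref[Claim]{cl:sc-2nd} in the uniform-inflation case, since the only change here is that the first-stage cover is measured with $b$-costs while the second-stage sets are measured with $c$-costs, and both facts we need are purely combinatorial and involve only $c$. First I would establish feasibility: $\fst$ is a (greedy) set cover of $S$, so every element of $\omega \cap S$ is covered by $\fst$; and for each $i \in \omega \setminus S$, by construction $\snd(\{i\})$ is a single set containing $i$, so $i$ is covered since $\snd(\omega) \supseteq \snd(\{i\})$. As $\omega = (\omega \cap S) \cup (\omega \setminus S)$, the union $\fst \cup \snd(\omega)$ covers all of $\omega$, which gives the first assertion for every $T \ge 0$ and every $\omega \sse U$.

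For the cost bound I would argue element by element. For $i \in S$ we set $\snd(\{i\}) = \emptyset$, which contributes nothing. For $i \in U \setminus S$, the definition of $S$ says the minimum $c$-cost set covering $i$ costs \emph{less than} $\beta \cdot T/k$; since $\snd(\{i\})$ is chosen to be precisely this minimum-$c$-cost set, $c(\snd(\{i\})) < \beta T/k$. Summing over $i \in \omega$ and using subadditivity of $c$ over unions, $c(\snd(\omega)) \le \sum_{i \in \omega} c(\snd(\{i\})) \le |\omega|\cdot \beta T/k$, which is at most $\beta T$ whenever $|\omega| \le k$, as claimed.

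There is essentially no obstacle — this is exactly why the paper labels the claim ``immediate.'' The one point to keep straight is the two-cost bookkeeping: $\fst$ is built and charged under $b$, whereas the threshold defining $S$ and the choice of each $\snd(\{i\})$ are in terms of $c$. But Property~A concerns only feasibility and the $c$-cost of the second stage, so the $b$-costs play no role in this claim (they enter only in the analogue Property~B'); likewise, no assumption relating $b$ and $c$ is needed here, the reduction $b \le c$ being used elsewhere.
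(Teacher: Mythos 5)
Your proof is correct and matches the paper's treatment: the paper simply declares this claim immediate, and the argument it has in mind is exactly the one you give (and the one spelled out for \lref[Claim]{cl:sc-2nd} in the uniform case) — $\fst$ covers $\omega\cap S$, each $\snd(\{i\})$ covers $i\in\omega\setminus S$ with $c$-cost below $\beta T/k$ by the definition of $S$, and summing over at most $k$ elements gives the bound. Your remark that the $b$-costs are irrelevant to Property~A is also exactly the right bookkeeping point.
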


\begin{theorem}[Property B']
  \label{th:main-gensc}
Assume $\beta \ge 36\cdot\ln m$. If $T\ge \Tstar$ then $b(\fst)\le H_n \cdot \left(b(\Phistar) + 12\cdot
\Tstar\right)$.
\end{theorem}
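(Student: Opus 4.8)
The plan is to follow the proof of \lref[Theorem]{th:main-sc} almost verbatim, the only new bookkeeping being that the first and second stages now carry different cost vectors ($b$ and $c$, with $b \le c$ assumed WLOG). As there, it suffices to produce a \emph{fractional} set cover $\bar z$ of the instance $S$ whose \textbf{$b$-cost} is at most $b(\Phistar) + 12\,\Tstar$: since greedy set cover is an $H_n$-approximation relative to the LP optimum and is run under $b$-costs in the algorithm above, this gives $b(\fst) \le H_n\cdot(b(\Phistar)+12\,\Tstar)$.

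First I would reuse the decomposition of \lref[Theorem]{th:main-sc}: let $S'\subseteq S$ be the elements not covered by the optimal first-stage solution $\Phistar$, and let $\fprime\subseteq\F$ be the sets meeting $S'$. Because $S$ is now defined via the minimum \textbf{$c$-cost} of covering a vertex, every $R\in\fprime$ has $c_R\ge\beta\cdot\frac Tk\ge\beta\cdot\frac{\Tstar}k\ge\frac{6\Tstar}k$, so the coarse costs $\newc_R:=\lceil c_R/(6\Tstar/k)\rceil$ again satisfy $\newc_R\cdot\frac{6\Tstar}k\in[c_R,2c_R)$ and $\newc_R\ge\beta/6=6\ln m$. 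Form the same primal/dual LP pair for covering $S'$ by $\fprime$ under the coarse costs $\newc$.

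The technical heart is the exact analog of \lref[Claim]{cl:main}: the optimal (coarse) LP value $\sum_{R\in\fprime}\newc_R x_R=\sum_{e\in S'}y_e$ is at most $2k$. Its proof carries over unchanged, since it involves only the \emph{second-stage} ($c$-)cost structure: assuming the LP value lies in $((\gamma-1)k,\gamma k]$ for some integer $\gamma\ge3$, round the optimal dual $y$ to $Y_e=\lfloor y_e\rfloor+I_e$; a Chernoff bound (using $\newc_R\ge 6\ln m$) shows $Y_e/3$ is a feasible dual with high probability and has objective $>\frac{\gamma-1}{6}k$ with constant probability; taking the $k$ elements of largest $Y$-value as $Q$ gives a feasible dual of value $\ge k/3$ for the sub-instance on $Q$, hence the minimum fractional \textbf{$c$-cost} of covering $Q$ exceeds $\frac{3\Tstar}{k}\cdot\frac k3=\Tstar$; since no element of $Q$ is covered by $\Phistar$, the optimal second-stage cost for scenario $Q$ would then exceed $\Tstar$, contradicting the fact that OPT covers every $k$-subset with second-stage $c$-cost at most $\Tstar$. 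Therefore $\gamma\le2$ and the coarse LP value is at most $2k$.

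Finally I would assemble the fractional cover $\bar z$ of $S$: with $x$ an optimal primal for the coarse LP, set $z_R=1$ for $R\in\Phistar$, $z_R=x_R$ for $R\in\fprime\setminus\Phistar$, and $z_R=0$ otherwise; this fractionally covers $S\setminus S'$ via $\Phistar$ and $S'$ via $x$. The one genuine departure from the uniform case is the cost estimate, where we must pass from $c$ to $b$: using $b\le c$,
\[
\sum_R b_R z_R \;\le\; b(\Phistar)+\sum_{R\in\fprime} b_R x_R \;\le\; b(\Phistar)+\sum_{R\in\fprime} c_R x_R \;\le\; b(\Phistar)+\frac{6\Tstar}{k}\sum_{R\in\fprime}\newc_R x_R \;\le\; b(\Phistar)+12\,\Tstar,
\]
using the coarse-LP bound in the last step. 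Greedy under $b$-costs then yields the theorem. The main thing to watch — and the reason one cannot simply cite the uniform-inflation theorem as a black box — is keeping the two cost vectors separate: the dual-rounding contradiction lives entirely in the second-stage cost $c$, whereas the final accounting of $b(\fst)$ lives in the first-stage cost $b$, and the two are linked only through $b\le c$.
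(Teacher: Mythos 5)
Your proposal matches the paper's proof essentially line for line: both reduce to the uniform-inflation analysis by noting that the dual-rounding contradiction (\lref[Claim]{cl:main}) depends only on the second-stage $c$-costs, and both convert the resulting bound on $\newc\cdot x$ into a bound on the $b$-cost of $\bar z$ via $b\le c$. The identification of $b\le c$ as the single new ingredient, and the separation of the two cost vectors into the dual-rounding step versus the final accounting, is exactly the observation the paper relies on.
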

\begin{proof} We will show  that there is a {\em fractional} solution $\bar{x}$ for
covering $S$ with small {\bf $b$-cost}, at most $b(\Phistar) + 12\cdot \Tstar$,
  whence rounding this to an integer solution implies the theorem. For a
  contradiction, assume not: let every fractional set cover be
  expensive, and hence there must be a dual solution of large value.

Let $S'\sse S$ denote the elements that are {\em not}
  covered by the optimal first stage $\Phistar$, and let $\fprime \sse
  \F$ denote the sets that contain at least one element from $S'$. By
  the choice of $S$, all sets in $\fprime$ have {\bf $c$-cost} at least $\beta\cdot
  \frac{T}k\ge\beta\cdot \frac{\Tstar}k$. Define the ``coarse'' cost for
  a set $R \in \fprime$ to be $\newc_R = \ceil{\frac{c_R}{6\Tstar/k}}$.
  For each set $R\in\fprime$, since $c_R\ge \frac{\beta\Tstar}k \ge
  \frac{6\Tstar}k$, it follows that $\newc_R \cdot \frac{6\Tstar}k \in
  [c_R, 2\cdot c_R)$, and also that $\newc_R \ge \beta/6$.

Now consider the LP  for the set cover instance
  with elements $S'$ and sets $\fprime$ having the coarse costs $\newc$. Let
  $\{x_R\}_{R \in \fprime}$ be an optimal fractional solution; then \lref[Claim]{cl:main} applies directly to yield:
  \begin{equation}\label{eq:non-unif-sc}
  \sum_{R\in\fprime}
    \newc_R\cdot x_R \le 2\cdot k\end{equation}

Given the primal LP solution
  $\{x_R\}_{R \in \fprime}$ to cover elements in $S'$, define a fractional
  solution $z$ covering elements $S$ as follows: define $z_R = 1$ if
  $R\in\Phistar$, ${z}_R=x_R$ if $R\in \fprime$, and
  ${z}_R=0$ otherwise.  Since the solution $z$ contains $\Phistar$
  integrally, it covers elements $S \setminus S'$ (i.e. the portion of
  $S$ covered by $\Phistar$); since $z_R \geq x_R$ for all $R\in \fprime$, $z$
  fractionally covers $S'$.  Finally, the {\bf $b$-cost} of this solution is:
  $$b\cdot z = b(\Phistar)+ b\cdot x \le b(\Phistar)+ c\cdot x \le b(\Phistar)+ \frac{6\Tstar}k\cdot
  (\newc\cdot x)\le b(\Phistar) +12\cdot \Tstar,$$
where the second inequality uses $b\le c$, the next one is by definition of $\newc$ and the last inequality is
from~\eqref{eq:non-unif-sc}. Thus we have an LP solution
  of {\bf $b$-cost} $\Phistar + 12\Tstar$, and since the greedy algorithm is an
  $H_n$-approximation relative to the LP value, this
  completes the proof.
\end{proof}

Thus we obtain:
\begin{theorem}
  There is an $O(\log m+\log n)$-approximation for $k$-robust set cover with set-dependent inflations.
\end{theorem}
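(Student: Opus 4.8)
The plan is to assemble the two structural facts just proved into a discriminating algorithm and then feed it through the non-uniform analogue of the reduction in \lref[Lemma]{lem:apx}. First I would observe that \lref[Claim]{cl:gensc-2nd} establishes Property~A of \lref[Definition]{defn:algo} with $\beta = 36\ln m$ (the second-stage $c$-cost $c(\snd(\omega))$ is at most $\beta T$ for every scenario $\omega$ with $|\omega|\le k$), while \lref[Theorem]{th:main-gensc} establishes Property~B$'$ with $\alpha_1 = H_n$ and $\alpha_2 = 12 H_n$ (whenever the guess $T \ge \Tstar$, the first-stage $b$-cost satisfies $b(\fst)\le H_n(b(\Phistar)+12\Tstar) = \alpha_1\,b(\Phistar)+\alpha_2\,\Tstar$). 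Hence the algorithm above is $(H_n,\,12H_n,\,36\ln m)$-discriminating in the set-dependent-inflation sense.

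Next I would invoke the reduction noted just before the algorithm: exactly as in the proof of \lref[Lemma]{lem:apx}, running the discriminating algorithm over all geometrically spaced guesses $T\in\calT$ and returning the one minimizing $b(\fst)+\beta\,T$ yields a $\max\{\alpha_1,\ \alpha_2+\beta\}$-approximation for $k$-robust set cover with set-dependent inflations. The only change from \lref[Lemma]{lem:apx} is that the $\alpha_2$ term is no longer divided by $\lambda$: because the inflation has been folded into the second-stage costs $c$, the threshold $T$ already lives in ``$c$-units'', so the comparison against the rounded-up guess $T'\in(\Tstar,(1+\epsilon)\Tstar]$ costs $b(\fst)+\beta T' \le (\alpha_1 b(\Phistar)+\alpha_2\Tstar)+(1+\epsilon)\beta\Tstar \le (1+\epsilon)(\alpha_1 b(\Phistar)+(\alpha_2+\beta)\Tstar)$, i.e.\ within $(1+\epsilon)\max\{\alpha_1,\alpha_2+\beta\}$ of $\opt=b(\Phistar)+\Tstar$.

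Plugging in $\alpha_1=H_n$, $\alpha_2=12H_n$, $\beta=36\ln m$ gives an approximation ratio of $\max\{H_n,\ 12H_n+36\ln m\} = O(\ln n + \ln m)$, proving the theorem. I expect no genuine obstacle here: the technical weight sits entirely in \lref[Theorem]{th:main-gensc}, which in turn reuses \lref[Claim]{cl:main} essentially verbatim (the coarse costs $\newc_R=\lceil c_R/(6\Tstar/k)\rceil$ and the entire dual-rounding witness argument only ever reference the second-stage costs, so Property~A's definition of $S$ and the Chernoff computation carry over unchanged). The one place that warrants a line of care is the WLOG assumption $b\le c$, which is exactly what makes the bound $b\cdot z = b(\Phistar)+b\cdot x \le b(\Phistar)+c\cdot x \le b(\Phistar)+12\Tstar$ in \lref[Theorem]{th:main-gensc} valid; without it one first replaces $b_R$ by $\min\{b_R,c_R\}$ (buying such a set in the second stage instead), which does not change the optimum.
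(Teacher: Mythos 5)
Your proposal is correct and follows essentially the same route as the paper: assemble Claim~\ref{cl:gensc-2nd} (Property~A) and Theorem~\ref{th:main-gensc} (Property~B$'$) to get a $(H_n,\,12H_n,\,36\ln m)$-discriminating algorithm, then apply the non-uniform analogue of Lemma~\ref{lem:apx} to obtain a $\max\{\alpha_1,\,\alpha_2+\beta\}=O(\log m+\log n)$-approximation. Your observation that $\alpha_2$ is not divided by $\lambda$ because $T$ already lives in $c$-units, and your note on the WLOG assumption $b\le c$, both match what the paper does (the latter appears explicitly at the start of Appendix~\ref{app:non-unif-sc}).
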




\end{document}